\documentclass[10pt,journal,compsoc]{IEEEtran}
\usepackage{graphicx}

\usepackage{epsfig,endnotes}
\usepackage{todonotes}
\usepackage{latexsym} 
\usepackage{subfig}
\usepackage{algorithm}
\usepackage[noend]{algpseudocode}
\usepackage{listings}
\usepackage{empheq}

\usepackage{amsmath}
\usepackage{amssymb}

\usepackage{amsthm}
\usepackage{amssymb}
\usepackage{amsopn}

\usepackage{multirow}
\usepackage{colortbl}
\makeatletter
\def\BState{\State\hskip-\ALG@thistlm}
\makeatother
\usepackage{hyperref}
\usepackage{mathtools}
\usepackage{paralist}
\usepackage{hyphenat}

\usepackage{tikz}
\usetikzlibrary{shapes}
\usetikzlibrary{calc,shapes.multipart,chains,arrows,positioning}

\newcommand{\chead}{\cellcolor{tableHeadGray}}
\newcommand{\thead}[1]{\textbf{#1}}

\newcommand{\parttitle}[1]{\smallskip\noindent\textbf{#1.}}
\newcommand{\myproofpar}[1]{\smallskip\noindent\underline{{#1}:}}
\newcommand{\cList}{\thickspace \| \thickspace}

\newcommand{\mathtext}[1]{\thickspace\text{#1}\thickspace}

\DeclareMathOperator{\argmin}{argmin}

\DeclareMathOperator{\defas}{:=}
\newcommand{\vExpr}{\mathbf{v}}
\newcommand{\cExpr}{\mathbf{c}}
\newcommand{\constExpr}{\mathbf{const}}
\newcommand{\attrExpr}{\mathbf{attr}}
\newcommand{\compExpr}{\mathbf{cmp}}
\newcommand{\exprEval}{\textsf{eval}}

\newcommand{\aDom}{\mathcal{U}}
\newcommand{\supp}{\textsc{supp}}

\newcommand{\query}{Q}
\newcommand{\qSub}{{\query_{sub}}}

\newcommand{\schema}[1]{\textsc{Sch}(#1)}

\newcommand{\opparents}{\textsf{parents}}

\newcommand{\projection}{\Pi}
\newcommand{\selection}{\sigma}
\newcommand{\aggregation}{\gamma}
\newcommand{\Aggregation}[2]{{}_{#1}\aggregation_{#2}}
\newcommand{\union}{\cup}
\newcommand{\intersection}{\cap}
\newcommand{\difference}{-}

\newcommand{\duplicate}{\delta}

\newcommand{\join}{\bowtie}
\newcommand{\crossprod}{\times}

\newcommand{\win}{\omega}
\newcommand{\Win}[4]{\win_{#1 \to #2, #3\|#4}}

\newcommand{\colsOf}{cols}

\newcommand{\eIf}[3]{\textsf{if}\thickspace #1 \thickspace
  \textsf{then} \thickspace #2 \thickspace \textsf{else} \thickspace #3}

\newcommand{\qH}{h}
\newcommand{\qD}{d}

\newcommand{\ecProp}{ec}
\newcommand{\ecb}{{\ecProp_{B}}}
\newcommand{\ect}{{\ecProp_{T}}}
\newcommand{\keyProp}{key}
\newcommand{\icolsProp}{icols}
\newcommand{\setProp}{set}

\newcommand{\aKey}{k}
\newcommand{\anEC}{E}

\newcommand{\ecClosure}{{\cal E}^*}

\newcommand{\aEquiv}{\simeq}

\newcommand{\curOp}{\Diamond}
\newcommand{\rootOp}{\circledast}
\newcommand{\minKey}{\textsc{Min}}

\definecolor{black}{rgb}{0,0,0}
\definecolor{lgrey}{rgb}{0.9,0.9,0.9}
\definecolor{grey}{rgb}{0.8,0.8,0.8}
\definecolor{red}{rgb}{1,0,0}
\definecolor{green}{rgb}{0,1,0}
\definecolor{darkgreen}{rgb}{0,0.5,0}
\definecolor{darkblue}{rgb}{0,0,0.5}
\definecolor{darkpurple}{rgb}{0.5,0,0.5}
\definecolor{darkdarkpurple}{rgb}{0.3,0,0.3}
\definecolor{blue}{rgb}{0,0,1}
\definecolor{shadegreen}{rgb}{0.95,1,0.95}
\definecolor{shadeblue}{rgb}{0.95,0.95,1}
\definecolor{shadered}{rgb}{1,0.85,0.85}
\definecolor{oddRowGrey}{rgb}{0.95,0.95,0.95}
\definecolor{evenRowGrey}{rgb}{0.85,0.85,0.85}
\definecolor{tableHeadGray}{rgb}{0.85,0.85,0.85}

\newtheorem{Theorem}{Theorem}
\newtheorem{Definition}{Definition}
\newtheorem{Lemma}{Lemma}

\newtheorem{Example}{Example}
\makeatletter
\algrenewcommand{\algorithmiccomment}[1]{\hfill //\,\textit{#1}}
\renewcommand{\ALG@beginalgorithmic}{\footnotesize}
\makeatother

\begin{document}

\definecolor{lstpurple}{rgb}{0.5,0,0.5}
\definecolor{lstred}{rgb}{1,0,0}
\definecolor{lstreddark}{rgb}{0.7,0,0}
\definecolor{lstredl}{rgb}{0.64,0.08,0.08}
\definecolor{lstmildblue}{rgb}{0.66,0.72,0.78}
\definecolor{lstblue}{rgb}{0,0,1}
\definecolor{lstmildgreen}{rgb}{0.42,0.53,0.39}
\definecolor{lstgreen}{rgb}{0,0.5,0}
\definecolor{lstorangedark}{rgb}{0.6,0.3,0}	
\definecolor{lstorange}{rgb}{0.75,0.52,0.005}
\definecolor{lstorangelight}{rgb}{0.89,0.81,0.67}
\definecolor{lstbeige}{rgb}{0.90,0.86,0.45}

\DeclareFontShape{OT1}{cmtt}{bx}{n}{<5><6><7><8><9><10><10.95><12><14.4><17.28><20.74><24.88>cmttb10}{}

\lstdefinestyle{psql}
{
tabsize=2,
basicstyle=\small\upshape\ttfamily,
language=SQL,
morekeywords={PROVENANCE,BASERELATION,INFLUENCE,COPY,ON,TRANSPROV,TRANSSQL,TRANSXML,CONTRIBUTION,COMPLETE,TRANSITIVE,NONTRANSITIVE,EXPLAIN,SQLTEXT,GRAPH,IS,ANNOT,THIS,XSLT,MAPPROV,cxpath,OF,TRANSACTION,SERIALIZABLE,COMMITTED,INSERT,INTO,WITH,SCN,UPDATED},
extendedchars=false,
keywordstyle=\bfseries,
mathescape=true,
escapechar=@,
sensitive=true
}

\lstdefinestyle{psqlcolor}
{
tabsize=2,
basicstyle=\small\upshape\ttfamily,
language=SQL,
morekeywords={PROVENANCE,BASERELATION,INFLUENCE,COPY,ON,TRANSPROV,TRANSSQL,TRANSXML,CONTRIBUTION,COMPLETE,TRANSITIVE,NONTRANSITIVE,EXPLAIN,SQLTEXT,GRAPH,IS,ANNOT,THIS,XSLT,MAPPROV,cxpath,OF,TRANSACTION,SERIALIZABLE,COMMITTED,INSERT,INTO,WITH,SCN,UPDATED},
extendedchars=false,
keywordstyle=\bfseries\color{lstpurple},
deletekeywords={count,min,max,avg,sum},
keywords=[2]{count,min,max,avg,sum},
keywordstyle=[2]\color{lstblue},
stringstyle=\color{lstreddark},
commentstyle=\color{lstgreen},
mathescape=true,
escapechar=@,
sensitive=true
}

\lstdefinestyle{datalog}
{
basicstyle=\footnotesize\upshape\ttfamily,
language=prolog
}

\lstdefinestyle{pseudocode}
{
  tabsize=3,
  basicstyle=\small,
  language=c,
  morekeywords={if,else,foreach,case,return,in,or},
  extendedchars=true,
  mathescape=true,
  literate={:=}{{$\gets$}}1 {<=}{{$\leq$}}1 {!=}{{$\neq$}}1 {append}{{$\listconcat$}}1 {calP}{{$\cal P$}}{2},
  keywordstyle=\color{lstpurple},
  escapechar=&,
  numbers=left,
  numberstyle=\color{lstgreen}\small\bfseries, 
  stepnumber=1, 
  numbersep=5pt,
}

\lstdefinestyle{xmlstyle}
{
  tabsize=3,
  basicstyle=\small\upshape\ttfamily,
  language=xml,
  extendedchars=true,
  mathescape=true,
  escapechar=£,
  tagstyle=\bfseries,
  usekeywordsintag=true,
  morekeywords={alias,name,id},
  keywordstyle=\color{lstred}
}

\lstdefinestyle{xmlstyle-color}
{
  tabsize=3,
  basicstyle=\small\upshape\ttfamily,
  language=xml,
  extendedchars=true,
  mathescape=true,
  escapechar=£,
  tagstyle=\color{keywordpurple},
  usekeywordsintag=true,
  morekeywords={alias,name,id},
  keywordstyle=\color{lstred}
}

 \lstset{style=psql}

\title{Heuristic and Cost-based Optimization for Diverse Provenance Tasks\\\textbf{(extended version)}}

\author{
Xing~Niu, 
Raghav~Kapoor, 
Boris~Glavic,
Dieter~Gawlick,
Zhen~Hua~Liu,
Vasudha~Krishnaswamy,
Venkatesh~Radhakrishnan

\IEEEcompsocitemizethanks{
  \IEEEcompsocthanksitem X.Niu, R.Kapoor and B. Glavic, Department of Computer Science, Illinois Institute of Technology, Chicago, IL 60616, USA.\protect\\
E-mail: \{xniu7, rkapoor7\}@hawk.iit.edu, bglavic@iit.edu.
\IEEEcompsocthanksitem D. Gawlick, Z.H.Liu and V. Krishnaswamy, Oracle, Rewood City, CA 94065, USA.\protect\\
E-mail: \{dieter.gawlick, zhen.liu, vasudha.krishnaswamy\}@oracle.com.

\IEEEcompsocthanksitem V. Radhakrishnan, YugoByte, Sunnyvale, CA 94085
, USA.\protect\\ 
E-mail: venkatesh@yugabyte.com.
}
}

\IEEEtitleabstractindextext{
  \begin{abstract}
A well-established technique for capturing database provenance as annotations on data is to \emph{instrument} queries to propagate such annotations. However, even sophisticated query optimizers often fail to produce efficient execution plans for instrumented queries. 
We develop provenance-aware 
optimization techniques to address this problem.
Specifically, we study algebraic equivalences targeted at instrumented queries 
and alternative ways of instrumenting queries for provenance capture.
Furthermore, we present an extensible heuristic and cost-based optimization framework utilizing these 
optimizations. Our experiments confirm that these optimizations are highly effective, improving performance by several orders of magnitude for diverse provenance tasks.
\end{abstract}

   \begin{IEEEkeywords}
Databases, Provenance, Query Optimization, Cost-based Optimization
\end{IEEEkeywords}
}

\maketitle

\section{Introduction}\label{sec:intro}
Database provenance, information about the origin of data and the queries and/or updates that produced it, is critical for debugging queries, auditing, establishing trust in data, and many other use cases.
The de facto standard for database provenance~\cite{KG12,GA12} is to model provenance as annotations on data and define a query semantics that determines how annotations propagate. 
Under such a semantics, each output tuple $t$ of a query $Q$ is annotated with its provenance, i.e.,
a combination  of input tuple annotations that explains how these inputs were used by $Q$ to derive $t$.

Database provenance systems such as  Perm~\cite{glavic2013using}, GProM~\cite{AF18}, DBNotes~\cite{bhagwat2005annotation}, LogicBlox~\cite{GA12}, 
declarative Datalog debugging~\cite{KL12},   ExSPAN~\cite{ZS10}, and many others use a relational encoding of provenance annotations. These systems typically compile queries with annotated semantics into relational queries
that produce this encoding of provenance annotations following the process outlined in Fig.~\ref{fig:general-rewrite-approach}. We refer to this reduction from annotated to standard 
relational semantics as \textit{provenance instrumentation} or \textit{instrumentation} for short.
The example below introduces a relational encoding of provenance polynomials~\cite{KG12} and the instrumentation approach for this model implemented in Perm~\cite{glavic2013using}.

\begin{figure}[t]
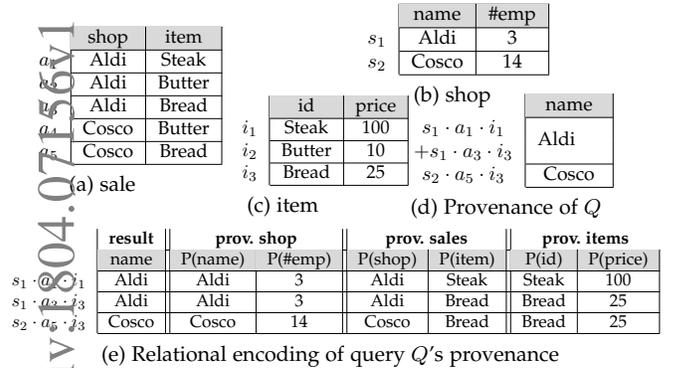

\centering
\begin{minipage}{0.30\linewidth}
  \centering

  \subfloat[sale]{
    \centering
          \resizebox{0.8\linewidth}{!}{
            \begin{minipage}{1.0\linewidth}
              \centering
    \begin{tabular}{c|c|c|} \cline{2-3} 
     & \chead shop& \chead item\\ \cline{2-3}
     $a_1$ & Aldi & Steak\\ \cline{2-3}
     $a_2$ & Aldi & Butter\\ \cline{2-3}
     $a_3$ & Aldi & Bread\\ \cline{2-3}
     $a_4$ & Cosco & Butter \\ \cline{2-3} 
     $a_5$ & Cosco & Bread \\ \cline{2-3}
    \end{tabular}
  \end{minipage}
  }
  }
  
\end{minipage}
\begin{minipage}{0.65\linewidth}
  \centering
  \hspace{0.1\linewidth}
  \begin{minipage}{0.8\linewidth}
    \centering
    \subfloat[shop]{
      \centering
      \resizebox{0.8\linewidth}{!}{
        \begin{minipage}{1.0\linewidth}
          \centering
    \begin{tabular}{c|c|c|} \cline{2-3}
 & \chead name& \chead \#emp\\\cline{2-3}
   $s_1$ & Aldi & 3\\ \cline{2-3}
   $s_2$ & Cosco & 14 \\ \cline{2-3}
    \end{tabular}
  \end{minipage}
  }
  }
\end{minipage}\\[-2mm]

\begin{minipage}{0.33\linewidth}
  \centering
\subfloat[item]
{
  \centering
            \resizebox{0.8\linewidth}{!}{
  \begin{minipage}{1.0\linewidth}
        \centering
    \begin{tabular}{c|c|c|} \cline{2-3} 
 & \chead id& \chead price\\ \cline{2-3}
     $i_1$ & Steak & 100\\ \cline{2-3} 
     $i_2$ & Butter & 10\\ \cline{2-3}
     $i_3$ & Bread & 25\\ \cline{2-3}
    \end{tabular}
  \end{minipage}
  }
  }
\end{minipage}\hspace{0.1cm}
\begin{minipage}{0.63\linewidth}
  \centering
  \subfloat[Provenance of $Q$]{\label{fig:Example-database-result}
    \centering
    \resizebox{0.8\linewidth}{!}{
      \begin{minipage}{1.0\linewidth}
        \centering
    \begin{tabular}{c|c|} \cline{2-2}
 &\chead name\\ \cline{2-2} 
      $s_1 \cdot a_1 \cdot i_1 $  & \multirow{2}{1.1cm}{Aldi} \\
      $+ s_1 \cdot a_3 \cdot i_3$ &\\
      \cline{2-2}
  $s_2 \cdot a_5 \cdot i_3$& Cosco \\  \cline{2-2}
    \end{tabular}
  \end{minipage}
  }
  }
\end{minipage}
\end{minipage}\\[1mm]

\subfloat[Relational encoding of query $Q$'s provenance ]{\label{fig:provenance-result-example-database}
\resizebox{1\linewidth}{!}{
  \begin{tabular}{c|c||c|c||c|c||c|c|} 
&\multicolumn{1}{c||}{\bf result} & \multicolumn{2}{c||}{\bf prov. shop} & \multicolumn{2}{c||}{\bf prov. sales} & \multicolumn{2}{c}{\bf prov. items}\\ \cline{2-8} 
 & \chead  name& \chead P(name)& \chead P(\#emp)& \chead P(shop)& \chead P(item)& \chead P(id)& \chead P(price)\\ \cline{2-8} 
 $s_1 \cdot a_1 \cdot i_1$  &  Aldi  & Aldi & 3 & Aldi & Steak & Steak & 100\\ \cline{2-8} 
$s_1 \cdot a_3 \cdot i_3 $&  Aldi  & Aldi & 3 & Aldi & Bread & Bread & 25\\ \cline{2-8} 
 $s_2 \cdot a_5 \cdot i_3$&  Cosco  & Cosco & 14 & Cosco & Bread & Bread & 25\\ \cline{2-8} 
  \end{tabular}
}
}\\[-2mm]

\caption{Provenance annotations and relational encoding}
\label{fig:Example-database}

\end{figure}
 
\begin{Example}\label{ex:simple-prov-ex}
Consider a query  
over the database in Fig.~\ref{fig:Example-database} returning shops that sell items which cost more than \$20:
\\[-4mm]
$$
\projection_ {name} (shop \join_{name=shop} sale \join_{item=id} \selection_{price > 20}(item))
$$\\[-6mm]
The query's result is shown in Fig.~\ref{fig:Example-database-result}. 
Using provenance polynomials
to represent provenance, tuples in the database are annotated with  variables
encoding tuple identifiers (shown to the left of each tuple). Each query result is annotated with
a polynomial 
that explains how the tuple was derived by combining input
tuples. Here, addition corresponds to alternative use of
tuples (e.g., union) and multiplication to
conjunctive use (e.g., a join). For example, the tuple \emph{(Aldi)} is derived by joining tuples $s_1$, $a_1$, and $i_1$ ($s_1 \cdot a_1 \cdot i_1$) or
alternatively by joining tuples $s_1$, $a_3$, and $i_3$.
Fig.~\ref{fig:provenance-result-example-database}
shows a relational encoding of these annotations as supported by the
Perm~\cite{glavic2013using} and GProM~\cite{AF18} systems: 
variables are represented by the tuple
they are annotating, multiplication is represented by concatenating the encoding
of the factors, and addition is represented by encoding
each summand as a separate tuple (see~\cite{glavic2013using}). 
This encoding is computed by compiling the input query with annotated semantics into  
 relational algebra. The resulting \emph{instrumented} query is shown below. It adds
input relation attributes to the final projection and renames them (represented as $\to$)
to denote that they store provenance.\\[-2mm]  
\resizebox{1\linewidth}{!}{
  \begin{minipage}{1.0\linewidth}
\begin{align*}
Q_{join} &= shop \join_{name=shop} sale \join_{item=id} \selection_{price > 20}(item)\\
Q&= \projection_{name,name \to P(name),numEmp \to P(numEmp), \ldots} (Q_{join})
\end{align*}
\end{minipage}
}\\[1mm]
The instrumentation we are using here is defined for any SPJ (Select-Project-Join) query (and beyond) based on a set of algebraic rewrite rules (see~\cite{glavic2013using} for details).
\end{Example}

The present paper extends~\cite{XN17}. Additional details are presented in the  appendix.

\subsection{Instrumentation Pipelines}
\label{sec:instr-pipel}

In this work, we focus on optimizing instrumentation pipelines such as the one from Example~\ref{ex:simple-prov-ex}. These pipelines divide the compilation of a frontend language to a target language into multiple compilation steps using one or more intermediate languages. We now introduce a subset of the pipelines supported by our approach to illustrate the breadth of applications supported by instrumentation. Our approach can be applied to any data management task that can be expressed as instrumentation. Notably, our implementation already supports additional pipelines, e.g., for summarizing provenance and managing uncertainty.

\parttitle{L1. Provenance for SQL Queries}
The pipeline from Fig.~\ref{fig:general-rewrite-approach} 
is applied by many provenance systems, e.g., DBNotes~\cite{bhagwat2005annotation} uses L1 to compute 
Where-provenance~\cite{CC09}.

\parttitle{L2. Provenance for Transactions}
Fig.~\ref{fig:trans-rewrite-approach} shows a pipeline that retroactively captures provenance for transactions~\cite{AG17}. In addition to the steps from Fig.~\ref{fig:general-rewrite-approach}, this pipeline uses a compilation step called \textit{reenactment}. Reenactment translates transactional histories with annotated semantics into equivalent temporal queries with annotated semantics. 

\parttitle{L3. Provenance for Datalog}
This pipeline (Fig.~\ref{fig:DL-rewrite-approach}) produces provenance graphs that explain 
which successful and failed rule derivations of an input Datalog program are relevant for (not) deriving a (missing) query result tuple of interest~\cite{LS16}. 
A provenance request is compiled into a Datalog program that computes the edge relation of the provenance graph. This program is then translated into SQL.

\parttitle{L4. Provenance Export}
This pipeline (Fig.~\ref{fig:expor-rewrite-approach} in Appendix~\ref{sec:pipelines})~\cite{NX15} is an extension of L1 which translates the relational provenance encoding produced by L1 into PROV-JSON, the JSON serialization of the PROV provenance exchange format.
This method~\cite{NX15} adds additional instrumentation on top of a query instrumented for provenance capture  to construct a single PROV-JSON document representing the full provenance of the query. The result of L4 is an SQL query that computes this JSON document.  

\parttitle{L5. Factorized Provenance}
L5 (Fig.~\ref{fig:xml-rewrite-approach} in Appendix~\ref{sec:pipelines}) captures provenance for queries. In contrast to L1, it represents the provenance polynomial of a query result as an XML document. The nested representation of provenance produced by the pipeline is factorized based on the structure of the query. The compilation target of this pipeline is SQL/XML. The generated query directly computes this factorized representation of provenance.

\parttitle{L6. Sequenced Temporal Queries}
This pipeline (Fig.~\ref{fig:temporal-rewrite-approach} in Appendix~\ref{sec:pipelines}) translates temporal queries with sequenced semantics~\cite{DBLP:reference/db/BohlenJ09} into SQL queries over an interval encoding of temporal data. A non-temporal query evaluated over a temporal database under sequenced semantics returns a temporal relation that records how the query result changes over time (e.g., how an employee's salary changes over time). Pipeline L6 demonstrates the use of instrumentation beyond provenance. We describe Pipelines L5 and L6 in more detail in Appendix~\ref{sec:pipel-l5-fact} and~\ref{sec:l6.-sequ-temp}.

\subsection{Performance Bottlenecks of Instrumentation}
\label{sec:motivation}
While instrumentation enables diverse 
provenance features to be implemented on top of DBMS,  
the performance of instrumented queries is often suboptimal. Based on our extensive experience with instrumentation systems~\cite{LS16,NX15,AF18,AG17,glavic2013using} and a preliminary evaluation we have identified bad plan choices by the DBMS backend as a major   
bottleneck.
Since query optimizers have to trade optimization time for query performance, optimizations that do not benefit common workloads are typically not considered.
Thus, most optimizers are incapable of simplifying instrumented queries, will not explore relevant parts of the plan space, or will spend excessive time on optimization. 
We now give an overview of problems we have encountered.

\begin{figure}[t]
  \centering
\subfloat[
Provenance is captured using an annotated version of relational algebra which is first translated into relational algebra over a relational encoding of annotated relations and then into SQL code. 
]{  \label{fig:general-rewrite-approach}
  \begin{minipage}{0.96\linewidth}
  \centering
\includegraphics[width=1\columnwidth]{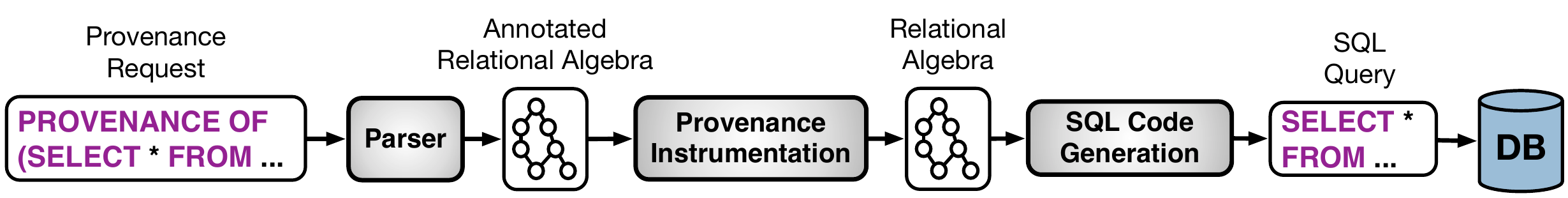}
\end{minipage}
}\\[-0.5mm]
\subfloat[
In addition to the steps of \textbf{(a)}, this pipeline 
uses \emph{reenactment}~\cite{AG17} to compile annotated updates into annotated queries. 
]{\label{fig:trans-rewrite-approach}
  \begin{minipage}{0.98\linewidth}
  \centering
  \includegraphics[width=0.9\columnwidth]{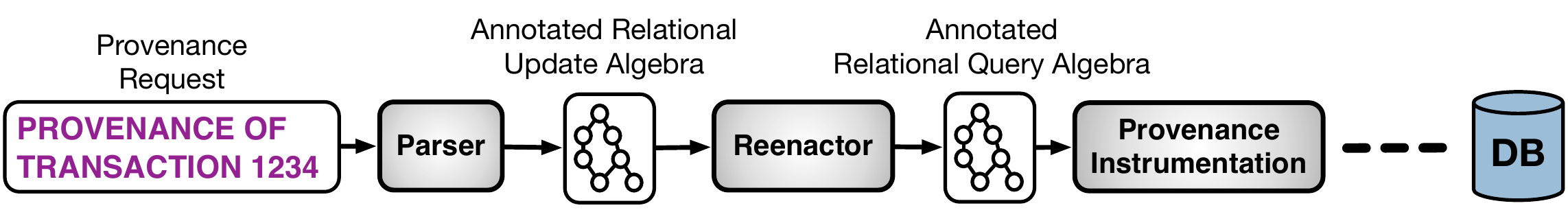}
\end{minipage}
}\\[-0.5mm]
\subfloat[Computing provenance graphs for Datalog queries~\cite{LS16} based on a rewriting called \emph{firing rules}. The instrumented Datalog program is first compiled into relational algebra and then into SQL.]{  \label{fig:DL-rewrite-approach}
\begin{minipage}{0.98\linewidth}
  \centering
\includegraphics[width=1\columnwidth]{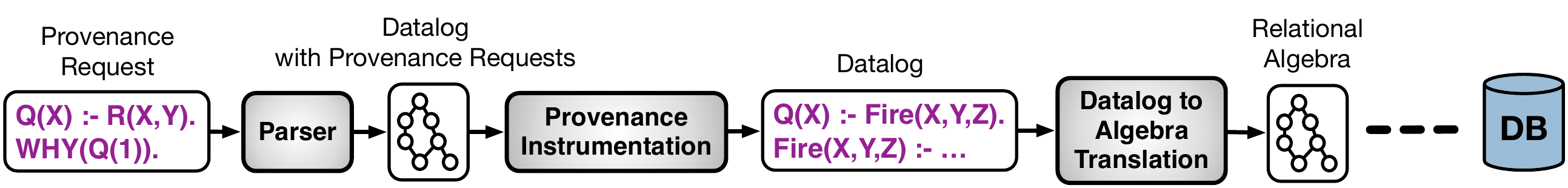}
\end{minipage}  
}\\[-2.2mm]
\caption{Instrumentation: \textbf{(a)} SQL, \textbf{(b)} transactions, \textbf{(c)} Datalog} 

\end{figure}

\parttitle{P1. Blow-up in Expression Size} The instrumentation for transaction provenance~\cite{AG17} shown in Fig.~\ref{fig:trans-rewrite-approach} may produce queries with a large number of query blocks. This can lead to long optimization times in systems that unconditionally pull-up subqueries (such as Postgres) because the subquery pull-up results in \lstinline!SELECT! clause expressions of size exponential in the number of stacked query blocks. While advanced optimizers do not apply this transformation unconditionally, they will at least consider it leading to the same blow-up in expression size during optimization. 

\parttitle{P2. Common Subexpressions} 
Pipeline L3~\cite{LS16} (Fig.~\ref{fig:DL-rewrite-approach}) instruments the input Datalog program to capture rule derivations. Compiling such queries into relational algebra leads to queries with many common subexpressions and 
duplicate elimination operators. Pipeline L4 constructs the PROV output using multiple projections over an instrumented subquery that captures provenance. The large number of common subexpressions in both cases  may significantly increase optimization time. Furthermore, if subexpressions are not reused then this significantly increases the query size. The choice of when to remove duplicates significantly impacts performance for Datalog queries.

\parttitle{P3. Blocking Join Reordering} Provenance instrumentation 
in 
GProM~\cite{AF18} is based on rewrite rules. 
For instance, provenance annotations are propagated through an aggregation by joining the aggregation with the provenance instrumented version of the aggregation's input on the group-by attributes. Such transformations increase  query size  and lead to 
interleaving of joins with operators such as aggregation.  
This interleaving may block optimizers from reordering joins leading to suboptimal join orders. 

\parttitle{P4. Redundant Computations} To capture provenance, systems such as Perm~\cite{glavic2013using} instrument a query one operator at a time using operator-specific rewrite rules. To apply operator-specific rules to rewrite a complex query, the rules have to be generic enough to be applicable no matter how operators are combined. 
This can lead to redundant computations, e.g., an instrumented operator generates a new column that is not needed by 
downstream operators.

 \section{Solution Overview}
\label{sec:solut-overv-contr}

While optimization has been recognized  as an important problem in provenance management, previous work has almost exclusively focused on how to compress provenance to reduce  storage cost, e.g., see~\cite{AB09,CJ08a,wu2013subzero}.
We 
study the orthogonal problem of \textbf{improving the performance of instrumented queries} that capture provenance.
Specifically, we develop heuristic and cost-based optimization techniques to address  the performance bottlenecks of instrumentation.

An important advantage of our approach is that it applies to any database backend and instrumentation pipeline. 
New transformation rules and cost-based choices can be added with ease.
When optimizing a pipeline, 
we can either target one of its intermediate languages or the compilation steps.
As an example for the first type of optimization, consider a compilation step that outputs relational algebra. 
We can optimize the generated algebra expression using algebraic equivalences before passing it on to the next stage of the pipeline.
For the second type of optimization consider the compilation step from pipeline L1 that translates annotated relational algebra (with provenance) into relational algebra.
If we know two equivalent ways of translating an algebra operator with annotated semantics into standard relational algebra, then we can optimize this step by choosing the translation
that maximizes performance. We study both types of optimization. For the first type, we focus on relational algebra since it is an intermediate language used in all of the pipelines from Sec.~\ref{sec:instr-pipel}. We investigate algebraic equivalences that are beneficial for instrumentation, but which are usually not applied by database optimizers. We call this type of optimizations \textit{provenance-specific algebraic transformations} (\textit{PATs}). We refer to optimizations of the second type as \textit{instrumentation choices} (ICs).

\begin{figure}[t]
\centering
\includegraphics[width=1\columnwidth]{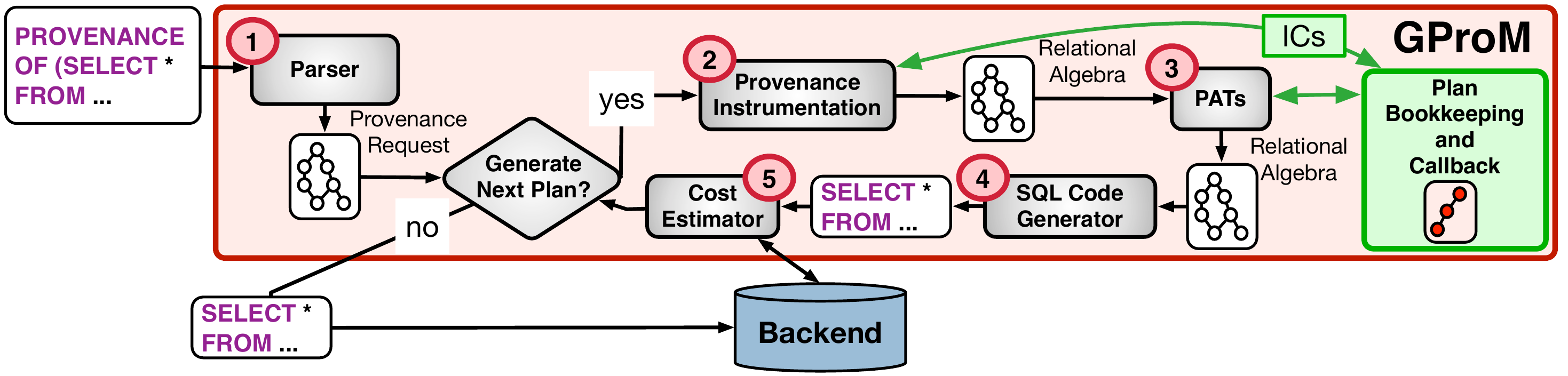}
$,$\\[-7mm]
\caption{GProM with Cost-based Optimizer}
\label{fig:cbo-arch}
\end{figure}

\parttitle{PATs}
We identify algebraic equivalences which 
are effective for speeding up provenance computations. For instance, 
we factor references to attributes to enable merging of projections without blow-off in expression size, pull up projections that 
create provenance annotations, and 
remove unnecessary duplicate elimination and window operators.
We infer local and non-local properties~\cite{grust2010let} such as candidate keys  for the algebra operators of a query. This enables us to define transformations that rely on non-local information.

\parttitle{ICs}
We introduce two ways for instrumenting an aggregation for provenance capture: 1) using a \textit{join}~\cite{glavic2013using} 
to combine the aggregation with the provenance of the aggregation's input; 2) using \textit{window} functions (SQL \lstinline!OVER! clause) to directly compute the aggregation functions over inputs annotated with provenance. 
We also present two ways for pruning tuples that are not in the provenance early-on when computing the provenance of a transaction~\cite{AG17}. Furthermore, we present two options for normalizing the output of a sequenced temporal query (L6). 

Note that virtually all pipelines that we support use relational algebra as an intermediate language. Thus, our PATs are more generally applicable than the ICs which target a compilation step that only is used in some pipelines. This is however an artifact of the pipelines we have chosen. In principle, one could envision ICs that are applied to a compilation step that is common to many pipelines.

\parttitle{CBO for Instrumentation}
Some  PATs are not always beneficial and for some ICs there is no clearly superior choice. Thus, there is a need for \textit{cost-based optimization} (CBO).
Our second contribution is 
a 
CBO framework for instrumentation pipelines that can be applied to any such pipeline no matter what compilation steps and  intermediate languages are used. This is made possible by decoupling the plan space exploration from actual plan generation.
Our optimizer treats the instrumentation pipeline as a blackbox function which it calls repeatedly  to  produce SQL queries (\textit{plans}). Each such plan is sent to the backend database for 
planning and 
cost estimation. 
We refer to one execution of the pipeline as an \textit{iteration}.
It is the responsibility of the pipeline's components to signal to the optimizer the existence of optimization choices (called \textit{choice points}) through the optimizer's \textit{callback API}. The optimizer responds to a call from one of these components by instructing it which of the available \textit{options} to choose. 
We keep track of which choices had to be made, which options exist for each choice point, and which options were chosen. This information is sufficient to iteratively enumerate the plan space by making different choices during each iteration.
Our approach provides great flexibility in terms of supported optimization
decisions, e.g., we can choose whether to apply a PAT or select which ICs to use.
Adding an optimization choice only requires adding a few lines of code (LOC) to the pipeline to inform the optimizer about the availability of options. 
To the best of our knowledge our framework is the first CBO that is \textbf{plan space and query language agonistic}.
Costing a plan (SQL query) requires us to use the DBMS to optimize a query which can be expensive. Thus, we may not be able to explore the full plan space. 
In addition to meta-heuristics, we also support a strategy that balances optimization vs. execution time.

We have implemented these optimizations in GProM~\cite{AF18}, our  provenance middleware that supports multiple DBMS backends and all the instrumentation pipelines discussed in Sec.~\ref{sec:instr-pipel}.
GProM is available as open source (\url{https://github.com/IITDBGroup/gprom}). Using L1 as an example,  Fig.~\ref{fig:cbo-arch} shows  how ICs, PATs, and CBO are integrated into the system. 
We demonstrate experimentally that our optimizations 
improve performance by over 4 orders of magnitude on average compared to unoptimized instrumented queries.
Our approach peacefully coexists with the DBMS optimizer. We use the DBMS optimizer 
where it is effective (e.g., join reordering) and use our optimizer to address the database's shortcomings with respect to provenance computations.

\section{Background and Notation}\label{sec:background}

A relation schema ${\bf R}(a_1, \ldots, a_n)$ consists of a name (${\bf R}$) and a list of attribute names $a_1$ to $a_n$. The arity of a schema is the number of attributes in the schema.
We use the bag semantics version of the relational model.
Let $\cal U$ be a domain of values. An instance $R$ of an n-ary schema ${\bf R}$  is a function $\aDom^n \to \mathbb{N}$ mapping tuples to their multiplicity. Here $R(t)$ denotes applying the function that is $R$ to input $t$, i.e., the multiplicity of tuple $t$
in relation $R$.
We require that relations have finite support $\supp(R) = \{ t \mid R(t) \neq 0\}$.  We use $t^m \in R$ to denote that tuple $t$ occurs with multiplicity $m$, i.e., $R(t) = m$ and $t \in R$ to denote that $t \in \supp(R)$.
An n-ary relation $R$ is contained in a relation $S$, written as $R \subseteq S$, iff $\forall t \in {\cal U}^n: R(t) \leq S(t)$, i.e., each tuple in $R$ appears in $S$ with the same or higher multiplicity.

\begin{table}
  \centering
 \begin{tabular}{|p{1.3cm}|p{6cm}|} \hline 
\rowcolor[gray]{.9}  Operator & Definition\\ \hline 
  $\selection$ & $\selection _\theta (R) = \{ t^n|t^n \in R \wedge t \models \theta  \}$ \\ \hline
  $\projection$ & $\projection_A (R) = \{t^n|n = \sum_{u.A = t} R(u) \} $ \\ \hline
  $\union$ & $R \union S = \{ t^{n+m}|t^n \in R \wedge t^m \in S \} $\\ \hline
  $\intersection$ & $R \intersection S = \{ t^{min(n,m)}|t^n \in R \wedge t^m \in S \} $\\ \hline
  $\difference$ & $R-S = \{ t^{max(n-m,0)}|t^n \in R \wedge t^m \in S \}$ \\ \hline
  $\crossprod$ & $R \crossprod S = \{ (t,s)^{n*m} |t^n \in R \wedge s^m \in S \} $ \\ \hline
  $\aggregation$ & $_{G}\aggregation_{f(a)} (R) = \{ (t.G, f(G_t))^1|t \in R \} $ \\ 
                 & $G_t = \{ (t_1.a)^n |{t_1}^n \in R \wedge t_1.G = t.G \} $ \\ \hline
  $\duplicate$ & $\duplicate (R) = \{ t^{1} |t  \in R \} $ \\ \hline
$\omega$ & $\omega_{f(a) \to x, G\|O}(R) \equiv \{ (t,f(P_t))^n | t^n \in R \}  $\\ 
                 & $P_t = \{ (t_1.a)^n |{t_1}^n \in R \wedge t_1.G = t.G \wedge t_1 \leq_O t \} $ \\ \hline
 \end{tabular}\\[-2mm]
 \caption{Relational algebra operators}
\label{tab:rel-algebra-def}
\end{table}

Table~\ref{tab:rel-algebra-def} shows the definition of the bag semantics version of relational algebra we use in this work. We use  $\schema{Q}$ to denote the schema of the result of query $Q$ and $Q(I)$ to denote the result of evaluating query $Q$ over database instance $I$.
Selection $\selection _\theta (R)$ returns all tuples from relation $R$ which satisfy the condition $\theta$. 
Projection $\projection _A (R)$ projects all input tuples on a list of  projection expressions. Here, $A$ denotes a list of expressions with potential renaming (denoted by $e \rightarrow a$) and $t.A$ denotes applying these expressions to a tuple $t$. The syntax of projection expressions is defined by the grammar shown below where $\constExpr$ denotes the set of constants, $\attrExpr$ denotes attributes, $\cExpr$ defines conditions, and $\vExpr$ defines projection expressions. 
\begin{align*}
  \vExpr &\defas \vExpr + \vExpr \mid \vExpr \cdot \vExpr \mid \constExpr \mid \attrExpr \mid \eIf{\cExpr}{\vExpr}{\vExpr}\\
  \cExpr &\defas \vExpr\, \compExpr\, \vExpr \mid \cExpr \wedge \cExpr \mid \cExpr \vee \cExpr \mid \neg \cExpr\\
  \compExpr &\defas = \mid \neq \mid < \mid \leq \mid \geq \mid >
\end{align*}
For instance, a valid projection expression over schema $R(a,b)$ is $(a + b) \cdot 5$. The expression type $\eIf{\cExpr}{\vExpr}{\vExpr}$ is introduced to support conditional expressions similar to SQL's \lstinline!CASE!. The semantics of projection expressions is defined using a function $\exprEval(t,e)$ which returns the result of evaluating $e$ over $t$. In the following we will often use $t.e$ to denote $\exprEval(t,e)$. 
The definition of $\exprEval$ and an example for how to apply it are shown in Appendix~\ref{sec:supp-back-expr-eval}.

Union $R \union S$ returns the bag union of  tuples from relations $R$ and $S$. Intersection $R \intersection S$ 
returns the tuples which are both in relation $R$ and $S$. Difference $R-S$ returns the tuples in relation $R$ which are not in $S$.   These set operations are only defined for inputs of the same arity.
Aggregation $\Aggregation{G}{f(a)} (R)$ groups tuples according to their values in attributes $G$ and computes the aggregation function $f$ over the bag of values of attribute $a$ for each group. We also allow the attribute storing $f(a)$ to be named explicitly, e.g., $\Aggregation{G}{f(a) \to x}(R)$ renames $f(a)$ as $x$.
Duplicate removal $\duplicate (R)$ removes duplicates. $R \times S$ is the cross product for bags (input multiplicities are multiplied).
For convenience we also define join $R \Join _{\theta} S$ and natural join $R \Join S$ in the usual way. 
For each tuple $t$, the window operator $\Win{f(a)}{x}{G}{O}(R)$ returns $t$ with an additional attribute $x$ storing the result of the aggregation function $f$.
Function $f$ is applied over the window (bag of values from attribute $a$) generated by partitioning the input on  $G \subseteq \schema{R}$ and including only tuples which are smaller than $t$ wrt. their values in attributes $O \subseteq \schema{R}$ where $G \cap O = \emptyset$.  An example is shown in Appendix~\ref{sec:supp-win-op-ex}.  
We use the window operator to express a limited form of SQL's \lstinline!OVER! clause.

We represent algebra expressions as DAGs (Directed Acyclic Graph) to encode reuse of subexpressions. For instance, Figure~\ref{fig:ex-algebra-graph} shows an algebra graph (left) which reuses an expression $\selection_{c<5}(R)$ and the corresponding algebra tree (right). We assume that nodes are uniquely identified within such graphs and abusing notation will use operators to denote nodes in such graphs. 
We use $Q[Q_1 \leftarrow Q_2]$ to denote the result of substituting subexpression (subgraph) $Q_1$ with $Q_2$ in the algebra graph for query $Q$. Again, we assume some way of identifying subgraphs. 
We use $Q = op(Q')$ to denote that operator $op$ is the root of the algebra graph for query $Q$ and that subquery $Q'$ is the input to $op$. 

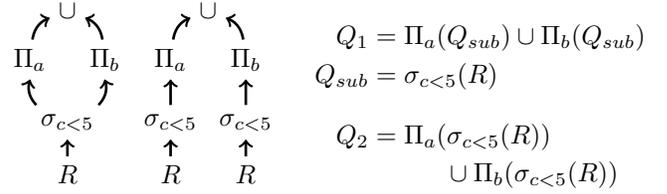
\begin{figure}[t]
\begin{center}
  \begin{minipage}{0.45\linewidth}
\begin{tikzpicture}
[op/.style={anchor=south},
conn/.style={->,line width=1pt}]

\node[op] (j) at (0,2.5) {$\union$};

\node[op] (p1) at (-0.5,1.8) {$\projection_{a}$};
\node[op] (p2) at (0.5,1.8) {$\projection_{b}$};
\node[op] (s) at (0,1) {$\selection_{c < 5}$};
\node[op] (r) at (0,0.3) {$R$};

\draw[conn,bend left] (p1) to (j);
\draw[conn,bend right] (p2) to (j);

\draw[conn,bend left] (s) to (p1);
\draw[conn,bend right] (s) to (p2);

\draw[conn] (r) to (s);
\end{tikzpicture}
\begin{tikzpicture}
[op/.style={anchor=south},
conn/.style={->,line width=1pt}]

\node[op] (j) at (0,2.5) {$\union$};

\node[op] (p1) at (-0.5,1.8) {$\projection_{a}$};
\node[op] (p2) at (0.5,1.8) {$\projection_{b}$};

\node[op] (s1) at (-0.5,1) {$\selection_{c < 5}$};
\node[op] (r1) at (-0.5,0.3) {$R$};

\node[op] (s2) at (0.5,1) {$\selection_{c < 5}$};
\node[op] (r2) at (0.5,0.3) {$R$};

\draw[conn,bend left] (p1) to (j);
\draw[conn,bend right] (p2) to (j);

\draw[conn] (s1) to (p1);
\draw[conn] (s2) to (p2);

\draw[conn] (r1) to (s1);
\draw[conn] (r2) to (s2);
\end{tikzpicture}
\end{minipage}
\begin{minipage}{0.5\linewidth}
  \begin{align*}
      \query_1 &= \projection_a(\query_{sub}) \union \projection_b(\query_{sub})\\
  \query_{sub} &= \selection_{c  < 5}(R)\\[3mm]
    \query_2 &= \projection_a(\selection_{c  < 5}(R))\\ &\hspace{1cm}\union \projection_b(\selection_{c  < 5}(R))
\end{align*} 
\end{minipage}
\end{center}
\caption{Algebra graph ($\query_1$, left), equivalent algebra tree ($\query_2$, middle), and corresponding algebra expressions (right)}
\label{fig:ex-algebra-graph}
\end{figure}

\section{Properties and Inference Rules}\label{sec:properties-inference}

We now discuss how to infer local and non-local properties of operators within the context of a query. 
Similar to  Grust et al.~\cite{grust2010let}, we 
use these properties in preconditions of algebraic rewrites (PATs). PATs are covered in Sec.~\ref{sec:heuristic}.

\subsection{Operator Properties}

\parttitle{keys} 
Property $keys$ is a set of super keys for an operator's output. 
For example, if $keys(R) = \{\{a\},\{b,c\}\}$ for a relation $R(a,b,c,d)$, then   the values of attributes $\{a\}$ and $\{b,c\}$ are unique in $R$.

\begin{Definition} \label{def:def_keys}
Let $Q$ be a query. A set 
$E \subseteq \schema{Q}$ is a \emph{super key} for $Q$ iff for every instance $I$ we have  $\forall t,t' \in Q(I): t.E = t'.E \rightarrow t=t'$ and $\forall t: Q(I)(t) \leq 1$. A super key is called a candidate key if it is minimal. 
\end{Definition}

Since we are using bag semantics, in the above definition we need to enforce that a relation with a superkey cannot contain duplicates. Recall that we defined bag relations as functions, thus, $Q(I)(t)$ denotes the multiplicity of $t$ in the result of $Q$ over $I$. Klug~\cite{K80} demonstrated that computing the set of functional dependencies that hold over the output of a query expressed in relational algebra is undecidable. The problem studied in \cite{K80} differs from  our setting in two aspects: 1) we only consider keys and not arbitrary functional dependencies and 2) we consider a more expressive algebra over bags which includes  generalized projection. As the reader might already expect, the undecidability of the problem caries over to our setting.

\begin{Theorem}\label{theo:keys-undecidable}
Computing the set of candidate keys for the output of a query $\query$ expressed in our bag algebra is undecidable. The problem stays undecidable even if $\query$ consists only of a single generalized projection, i.e., it is of the form $\query = \projection_A(R)$. 
\end{Theorem}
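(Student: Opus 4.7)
The plan is to reduce from Hilbert's Tenth Problem (H10)—undecidable by Matiyasevich's theorem—in the equivalent form: given two polynomials $p^{+}, p^{-}$ with natural coefficients in variables $x_{1},\ldots,x_{n}$, decide whether there exist $\vec x \in \mathbb N^{n}$ with $p^{+}(\vec x) = p^{-}(\vec x)$. Since the projection-expression grammar of Sec.~\ref{sec:background} supplies $+$, $\cdot$, natural constants, equality, and conditional $\eIf{\cdot}{\cdot}{\cdot}$, both $p^{+}$, $p^{-}$ and the predicate $p^{+} = p^{-}$ are directly expressible as projection expressions, so the reduction stays inside the operator language of the theorem.

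Given such $p^{+}, p^{-}$, I would construct the single generalised projection
\[
\query \defas \projection_{\, a,\; e \,\to\, d\,}(R), \qquad
e \defas \eIf{p^{+}(\vec x) = p^{-}(\vec x)}{0}{a},
\]
over a base relation $R$ with schema $(a, x_{1}, \ldots, x_{n})$, and then show that $\{a\}$ is a candidate key of $\query$ iff the equation $p^{+}(\vec x) = p^{-}(\vec x)$ admits no natural solution. The ``no solution implies key'' direction is immediate: when no solution exists, $e$ evaluates to $a$ on every input, so $a$ functionally determines $d$ on every instance and duplicate-freeness of the output follows from duplicate-freeness of $R$. For the converse, given a solution $\vec x^{\star}$ I would exhibit an instance containing the tuples $(a_{0}, \vec x^{\star})$ and $(a_{0}, \vec y)$, with $a_{0} \ne 0$ and $\vec y$ chosen so that $p^{+}(\vec y) \ne p^{-}(\vec y)$, producing the two output tuples $(a_{0}, 0)$ and $(a_{0}, a_{0})$ which share their $a$-value and refute the key property. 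Because membership in the candidate-key set is polynomially interreducible with enumerating this finite set (the power set of $\schema{\query}$ is finite), undecidability of computing the candidate keys follows, and the reduction already yields a query of the form $\projection_{A}(R)$, so the stronger assertion of the theorem comes for free.

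The main obstacle I anticipate is the bag-semantics side-condition $\query(I)(t) \le 1$ in Definition~\ref{def:def_keys}. Read literally, a base relation may carry arbitrary multiplicities, which would make some output tuple have multiplicity at least two on some instance and vacuously preclude $\{a\}$ from ever being a superkey, collapsing the theorem. I would handle this by adjoining a fresh attribute drawn from a large domain to $R$ and restricting the diagnostic instances used above to ones where that attribute is unique, so that ``output duplicates'' can be separated cleanly from ``two distinct output tuples sharing the proposed key-value''. The reduction then isolates exactly the second failure mode, which is the semantic content of being a key; the remaining content—that equivalence of expressions in this grammar is undecidable—is precisely what the $\eIf{\cdot}{\cdot}{\cdot}$ gadget encodes on top of H10.
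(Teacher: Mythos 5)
Your reduction has a genuine gap in the ``no solution implies key'' direction, and the patch you sketch at the end does not close it. Under Definition~\ref{def:def_keys} a super key must satisfy \emph{two} conditions on every instance: distinct output tuples may not agree on the key attributes, \emph{and} the output relation may not contain any duplicates ($\query(I)(t) \leq 1$ for all $t$). In your query $\projection_{a,\, e \to d}(R)$ with $e = \eIf{p^{+}(\vec x) = p^{-}(\vec x)}{0}{a}$, suppose $p^{+} = p^{-}$ has no solution, so $e$ evaluates to $a$ on every tuple, and take the instance $R = \{(a_0, \vec x),\, (a_0, \vec y)\}$ with $\vec x \neq \vec y$: both tuples project to the single output tuple $(a_0, a_0)$, which therefore has multiplicity $2$, so $\{a\}$ (indeed every attribute set) fails to be a super key even though the equation is unsolvable. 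Your biconditional is false. The repair you propose---adjoining a fresh unique attribute $u$ and ``restricting the diagnostic instances''---does not work: if $u$ is absent from the projection list the output duplicates persist; if $u$ is added to the output then $\{a\}$ is never a key, because two tuples with equal $a$ but different $u$ are distinct output tuples agreeing on $a$; and in any case key-ness is universally quantified over all instances consistent with the declared constraints, so you cannot restrict attention to ``diagnostic'' ones. The net effect is that whether $\{a\}$ is a key of the output is decided entirely by whether $a$ is a key of $R$, not by the Diophantine equation.

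The paper's proof avoids exactly this trap: it declares the full attribute set $\{x_1,\ldots,x_n\}$ to be a candidate key of $R$ (so $R$ is duplicate-free) and projects onto a \emph{single} expression, $\query_f = \projection_{f(x_1,\ldots,x_n) \to b}(R)$, so that an output duplicate arises if and only if $f$ collides on two distinct keyed inputs; hence $\{b\}$ is a candidate key iff $f$ is injective, and injectivity of integer polynomials is undecidable via Hilbert's tenth problem. Your construction can be repaired along the same lines by dropping $a$ from the output and projecting only onto $\eIf{p^{+}(\vec x) = p^{-}(\vec x)}{0}{u} \to d$, where $\{u\}$ is a declared key of $R$: then $\{d\}$ is a candidate key iff no solution exists, since unsolvability forces $d = u$ (unique), while any solution $\vec x^{\star}$ yields an instance $\{(u_1,\vec x^{\star}), (u_2,\vec x^{\star})\}$ whose two outputs collide on $(0)$. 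As written, however, the carried-through attribute $a$ breaks the forward direction of your equivalence.
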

\begin{proof}
  We prove the theorem by a reduction from the undecidable problem of checking whether a multi-variant polynomial over the integers ($\mathbb{Z}$) is injective.  The undecidability of injectivity stems from the fact that this problem can be reduced to Hilbert's tenth  problem~\cite{matiyasevich1993hilbert} (does a Diophantine equation have a solution) which is known to be undecidable for integers. Given such a polynomial function $f(x_1, \ldots, x_n)$ over $\mathbb{Z}$, we define a schema $R(x_1, \ldots, x_n)$ over domain $\mathbb{Z}$ with a candidate key $X = \{x_1, \ldots, x_n\}$ and a query $\query_f = \projection_{f(x_1, \ldots, x_n) \to b}(R)$. Intuitively, the query computes the set of results of $f$ for the set of inputs stored as tuples in $R$.
  For instance, consider the multivariant polynomial $f(x,y) = x^2 + x \cdot y$. We would define an input relation $R$ with schema $\schema{R} = (x,y)$ and query $\query_f= \projection_{(x \cdot x + x \cdot y) \to b}(R)$ which computes $f$. 

  Now for sake of contradiction assume that we have a procedure that computes the set of candidate keys for a query based on keys given for the relations accessed by the query. The result schema of query $\query_f$ for polynomial $f$ consists of a single attribute $(b)$. Thus, it has either a candiate key $\{b\}$ or no candidate key at all. Since $X$ is a candidate key for $R$, $\{b\}$ is a candidate key iff $f$ is injective (we prove this equivalence below).
Thus, the hypothetical algorithm for computing the candidate keys of a query result relation gives us a decision procedure for $f$'s injectivivity. However, deciding whether $f$ is injective is undecidable and, thus, the problem of computing candidate keys for query results has to be undecidable.   

We still need to prove our claim that $\{b\}$ is a candidate key iff $f$ is injective.

\myproofpar{$\Rightarrow$}
For sake of contradiction assume that $\{b\}$ is a candidate key, but $f$ is not injective. Then there have to exist two inputs $I = (i_1, \ldots, i_n)$ and $J = (j_1, \ldots, j_n)$ with $I \neq J$ such that $f(I) = y$ and $f(J) = y$ for some value $y$. Now consider an instance of relation $R$ defined as $\{ I, J \}$. The result of evaluating query $\query_f$ over this instance is clearly $\{ (y)^2 \}$. That is, tuple $(y)$ appears twice in the result. However, this violates the assumption that $\{b\}$ is a candidate key.

\myproofpar{$\Leftarrow$}
For sake of contradiction assume that $f$ is injective, but $\{b\}$ is not a candidate key. Then there has to exists some instance of  $R$ such that $\query_f(R)$ contains a tuple $t$ with multiplicity $n > 1$. Since $X$ is a candiate key of $R$, we know that there are no duplicates in $R$. Thus, based on the definition of projection, the only way $t$ can appear with a multiplicity larger than one is if there are two inputs $t_1$ and $t_2$ in the input such that $f(t_1) = f(t_2)$ which contradicts the assumption that $f$ is injective. 
\end{proof}

Given this negative result, we will focus on computing a set of keys that is not necessarily complete nor is each key in this set guaranteed to be minimal. 
This is unproblematic, since we will only use the existence of keys as a precondition for PATs. That is, we may miss a chance of applying a transformation since our approach may not be able to determine that a key holds, but we will never incorrectly apply a transformation.

\parttitle{set}
Boolean property \emph{set} denotes 
whether the number of duplicates in the result of a subquery $\qSub$ of a query $\query$ is insubstantial for computing
$\query$. We model this condition using query equivalence, i.e., if we apply duplicate elimination to the result of $\qSub$, the resulting query is equivalent to the original query $\query$.   

\begin{Definition}\label{def:def_set}
  Let $\qSub$ be a subquery of a query $\query$. We say $\qSub$ is \emph{duplicate-insensitive} if
  $\query \equiv \query[\qSub \gets \duplicate(\qSub)]$.
\end{Definition}

The \emph{set} property is useful for introducing or removing duplicate elimination operators. However, as the following theorem shows, determining whether a subquery is duplicate-insensitive is undecidable. We, thus, opt for an approach that is sound, but not complete. 

\begin{Theorem}\label{theo:set-is-undecidable}
    Let $\qSub$ be a subquery of a query $\query$. The problem of deciding whether $\qSub$ is duplicate-insensitive is undecidable.
\end{Theorem}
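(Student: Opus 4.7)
My plan is to prove Theorem~\ref{theo:set-is-undecidable} by a reduction from the same undecidable problem used in the proof of Theorem~\ref{theo:keys-undecidable}: deciding whether a multivariate polynomial over $\mathbb{Z}$ is injective. Given such a polynomial $f(x_1,\ldots,x_n)$, I will construct a query instance whose duplicate-insensitivity exactly captures injectivity of $f$. Because the injectivity problem for multivariate integer polynomials reduces to Hilbert's tenth problem, undecidability will transfer to duplicate-insensitivity.

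\textbf{Reduction.} Given $f$, I define a schema $R(x_1,\ldots,x_n)$ over $\mathbb{Z}$ with candidate key $X = \{x_1,\ldots,x_n\}$ (so every instance of $R$ is a set), and set
\[
  \query \;=\; \qSub \;=\; \projection_{f(x_1,\ldots,x_n)\,\to\,b}(R).
\]
That is, the distinguished subquery is the whole query itself. Then duplicate-insensitivity of $\qSub$ becomes the condition
\[
  \projection_{f\to b}(R) \;\equiv\; \duplicate\bigl(\projection_{f\to b}(R)\bigr),
\]
i.e., the query produces a duplicate-free result on every instance satisfying the key constraint. I will then argue that this equivalence holds iff $f$ is injective.

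\textbf{Two directions of the equivalence.} For the forward direction, if $f$ is injective then, on any set-valued instance of $R$, the projection assigns distinct $b$-values to distinct input tuples; hence the output already has multiplicity one on every tuple and applying $\duplicate$ is a no-op. For the converse, I argue by contrapositive: if $f$ is not injective, pick distinct inputs $I \neq J$ with $f(I)=f(J)=y$ and consider the instance $R = \{I,J\}$ (which respects the key constraint since $I\neq J$). Then $\projection_{f\to b}(R) = \{(y)^2\}$ whereas $\duplicate(\projection_{f\to b}(R)) = \{(y)^1\}$, witnessing inequivalence. An algorithm deciding duplicate-insensitivity therefore would decide polynomial injectivity, contradicting its undecidability.

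\textbf{Main obstacle and caveats.} The only subtle point is to make sure the reduction respects the bag-semantics framework of Definition~\ref{def:def_set}, which quantifies over \emph{every} instance. Two concerns arise: (i) that instances of $R$ are genuinely sets, so that multiplicities in the output are caused by $f$ rather than by duplicates in $R$, and (ii) that the algebra and projection-expression grammar introduced in Sec.~\ref{sec:background} (with $+$, $\cdot$, constants, and conditionals) are rich enough to express an arbitrary integer polynomial $f$. The first concern is handled by declaring $X$ a candidate key exactly as in Theorem~\ref{theo:keys-undecidable}; the second follows from the grammar admitting addition and multiplication of attributes and constants, which already suffices to build any polynomial. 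With these two points secured, the reduction is straightforward and the theorem follows.
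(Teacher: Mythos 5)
Your proposal is correct and uses essentially the same reduction as the paper: undecidability of injectivity for multivariate integer polynomials, encoded via the relation $R(x_1,\ldots,x_n)$ with key $X$ and the projection $\projection_{f(x_1,\ldots,x_n)\to b}(R)$, with the same two-instance counterexample $R=\{I,J\}$ for the non-injective direction. The only difference is that the paper wraps the projection in an aggregation, setting $\query_f=\Aggregation{}{count(*)}(\qSub)$ so that $\qSub$ is a proper subquery whose duplicates visibly change the final count, whereas you take $\query=\qSub$; both arguments go through, the paper's wrapper merely sidesteps any objection to treating a query as a (trivial) subquery of itself.
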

\begin{proof}
See Appendix~\ref{sec:supp-properties-inference-rule}.
\end{proof}

\parttitle{ec}
The \textit{ec} property stores a set of equivalence classes (ECs) with respect to an equivalence relation $\aEquiv$ over attributes and constants. Let $a,b \in (\schema{\qSub} \union \aDom)$ for a subquery $\qSub$ of a query $\query$. We consider $a \aEquiv b$ if to evaluate $\query$ we only need tuples from the result of $\qSub$ where $a = b$ holds. 
We model this condition using query equivalence: if $a \aEquiv b$ for a subquery $\qSub$ of a query $Q$ then $Q \equiv Q[\qSub \gets \selection_{a=b}(\qSub)]$.

\begin{Definition} \label{def:def_ec}
  
  Let $\qSub$ be a subquery of query $\query$ and $a,b \in (\schema{\qSub} \cup \aDom)$. We say $a$ is equivalent to $b$, written as $a \aEquiv b$, if 
$Q \equiv Q[\qSub \leftarrow \selection_{a=b}(\qSub)]$.
  A set $E \subseteq (\schema{\qSub} \cup \aDom)$ is an \emph{equivalence class} (EC) for $\qSub$ if 
we have 
$\forall a,b \in E: a \aEquiv b$. An EC $E$ is maximal if no superset of $E$ is  an EC. 

\end{Definition}

As a basic sanity check we prove that $\aEquiv$ is in fact an equivalence relation. 

\begin{Lemma}
$\aEquiv$ is an equivalence relation.  
\end{Lemma}
\begin{proof}
See Appendix~\ref{sec:supp-properties-inference-rule}.
\end{proof}

Note that our definition of equivalence class differs from the standard definition of this concept. In fact, what is typically considered to be an equivalence class is what we call maximal equivalence class here.  
We consider non-maximal equivalence classes, 
because, as the following theorem shows, we cannot hope to find an algorithm that computes all equivalences that can be enforced for a query using generalized projection.

\begin{Theorem}\label{theo:eq-undecidable}
Let $\qSub$ be a subquery of a query $\query$ and $a,b \in \schema{\qSub}$. Determining whether $a \aEquiv b$ is undecidable.
\end{Theorem}
\begin{proof}
See Appendix~\ref{sec:supp-properties-inference-rule}.
\end{proof}

In the light of this undecidability result, we develop inference rules for property  $\ecProp$ (Section~\ref{sec:prop-inference}) that are sound, but not complete. That is, 
 all inferred equivalences hold, but there is no guarantee that we infer all equivalences that hold. Put differently, the equivalence classes computed using these rules may not be maximal.

\parttitle{\icolsProp{}}
This property records a set of attributes that are sufficient for evaluating the ancestors of an operator. By sufficient, we mean that if we remove other attributes this will not affect the result of the query.

\begin{Definition} \label{def:def_icols}
Let $Q$ be a query and $Q_{sub}$ be a subquery of $Q$, a 
set of attributes $E \subseteq \schema{Q_{sub}}$ is called  \emph{sufficient} in $Q_{sub}$ wrt. $Q$ if $Q \equiv Q[Q_{sub} \leftarrow \projection_{E}(Q_{sub})]$.
\end{Definition}

For example, attribute $d$ in $\projection_{a} (\projection_{a, b+c \to d}(R))$ is not needed to evaluate $\projection_a$. Note that there exists at least one trivial set of sufficient attributes for any query $\qSub$ which is $\schema{\qSub}$. Ideally, we would like to find sufficient attribute sets of minimal size to be able to reduce the tuple size of intermediate results and to remove operations that generate attributes that are not needed. Unfortunately, it is undecidable to determine a minimal sufficient  set of attributes.

\begin{Theorem}\label{theo:icols-undecidable}
Let $\qSub$ be a subquery of a query $\query$ and let $E \subset \schema{\qSub}$. The problem of determining whether $E$ is sufficient is undecidable.   
\end{Theorem}
\begin{proof}
See Appendix~\ref{sec:supp-properties-inference-rule}.
\end{proof}

The \icolsProp{} property we infer for an operator is guaranteed to be a sufficient set of attributes for the query rooted at this operator, but may not represent the smallest such set.

\subsection{Property Inference}\label{sec:prop-inference}
We infer properties for operators 
through 
traversals of the algebra graph of an input query. 
During a \textit{bottom-up traversal} the property $P$ for an operator $op$ is computed based on the values of $P$ for the operator's children. Conversely, during a \textit{top-down traversal} the property $P$ of an operator $op$ is initialized to a fixed value and is then updated based on the value of $P$ for one of the parents of $op$. We use $\curOp$ to denote the operator for which we are inferring a property (for bottom-up inference) or for a parent of this operator (for top-down inference). Thus, a top-down rule  $P(R) = P(R) \cup P(\curOp)$ has to be interpreted as update property $P$ for $R$ as the union of the current value of $P$ for $R$ and the current value of $P$ for operator $\curOp$ which is a parent of $R$.
We use $\rootOp$ to denote the root of a query graph. 
Because of space limitations we only show the inference rules of property $\setProp$ here (Table~\ref{tab:top-down-set}).
We show the inference rules for the remaining properties ($\ecProp{}$, $icols$ and $key$) in Appendix~\ref{sec:supp-inference-rule}.
In the following when to referring to properties such as the sufficient set of attributes of an operator we will implicitly understand this to refer to the property of the subquery rooted at this operator.
We prove these rules to be correct in Appendix~\ref{sec:supp-correct-proof}.
Here by correct we mean that $\keyProp(op)$ is a set of superkeys for $op$ which is not necessarily complete nor does it only contain candidate keys, $\ecProp(op)$ is a set of equivalence classes for $op$ which may not be maximal, if the $\setProp(op) = true$ than $op$ is duplicate-insensitive (but not necessarily vice versa), and finally $\icolsProp(op)$ is a sufficient set of attributes for  $op$.

\begin{table}
\centering
\begin{minipage}{1\linewidth}
\centering

  \resizebox{1\linewidth}{!}{
  \begin{minipage}{1.15\linewidth}
  \begin{tabular}{|c|c|l|} \hline 
\rowcolor[gray]{.9} Rule & Operator $\curOp$ & Inferred property \textit{\setProp{}} for the input(s) of $\curOp$\\ \hline 
  1,2&  $\circledast$ or $_{G}\aggregation _{F(a)}(R)$ & $\setProp{}(\circledast) = false$, $ \setProp{}(R) = false$
  
  \\ \hline
  
 3,4 & $\selection_{\theta}(R)$ or $\projection_{A}(R)$ & $ \setProp{}(R) = \setProp{}(R) \wedge \setProp{}(\curOp) $ 
  
  \\ \hline
                     
 5 &  $\duplicate(R)$ & $  \setProp{}(R) = \setProp{}(R) $ \\ \hline
 6-9 &  $R \join_ {a=b} S$ or $R \crossprod S$ or  & $ \setProp{}(R) = \setProp{}(R) \wedge \setProp{}(\curOp) $ \\ 
  & $R \union S$ or $R \intersection S$ or  $R \difference S$ & $ \setProp{}(S) = \setProp{}(S) \wedge \setProp{}(\curOp)  $ \\ \hline
10 & $R \difference S$ & $ \setProp{}(R) = false $ \\ 
     &                 & $\setProp{}(S) = false  $ \\ \hline
11 & $\omega_{f(a) \to x, G\|O}(R)$ & $\setProp{}(R) = false$ \\ \hline
  \end{tabular}  
\end{minipage}
}
\end{minipage}\\[-2mm]
\caption{Top-down inference of Boolean property \textit{\setProp{}}}
\label{tab:top-down-set}
\end{table}

\parttitle{Inferring the set Property}
We compute set in a top-down traversal  (Tab.~\ref{tab:top-down-set}). 
We initialize this property to \textsf{true} for all operators. As mentioned above our inference rules for this property are sound (if $\setProp(op) = true$ then the operator is duplicate-insensitive), but not complete.
We set  $\setProp(\rootOp)$ for the root operator ($\rootOp$)  to false (rule 1) since the final query result will differ if duplicates are eliminated from the output of $\rootOp$.  
Descendants of a duplicate elimination operator are duplicate-insensitive, because the duplicate elimination operator will remove any duplicates that they produce. The exception are descendants of operators such as aggregation and the window operator which may produce different result tuples if duplicates are removed. These conditions are implemented by the inference rules as follows: 1) $\setProp(op) = true$ if $op$ is the child of a duplicate elimination operator (Rule 5); 2) $\setProp(op) = false$ if $op$ is the child of a window, difference, or aggregation operator (Rules 11, 2, and 10); and otherwise 3) $\setProp(op)$ is true if $\setProp(\curOp)$ is true for all parents of the operator (Rules 1, 3, 4, 6-10).   

\begin{Example}\label{ex:set-op-inference}
Consider the algebra graph shown in Fig.~\ref{fig:inferring-set}. We show $\setProp$ for each operator as red annotations. For the root operator we set $\setProp(\union) = false$.  Since the root operator is a union, both children of the root inherit $\setProp(op) = false$.
We set $\setProp(\projection_b) = true$ since $\projection_b$ is a child of a duplicate elimination operator. This propagates to the child of this projection. The selection's set property is false, because even though it is below a duplicate elimination operator, it also has a parent for which $\setProp$ is false. Thus, the result of the query may be affected by eliminating duplicates from the result of the selection. Finally, operator $R$ inherits the set property from its parent which is a selection operator.

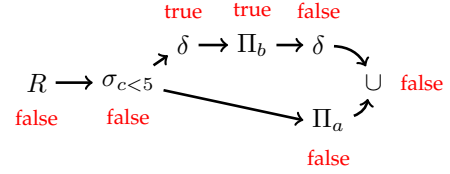
\begin{figure}
  \centering
  \begin{tikzpicture}
[op/.style={anchor=west},
pro/.style={red,font=\footnotesize},
conn/.style={->,line width=1pt}]

\node[op] (j) at (3.5,0) {$\union$};
\node[pro,right] (js) at (j.east) {false};

\node[op] (p1) at (2.8,-0.5) {$\projection_{a}$};
\node[pro,below] (p1s) at (p1.south) {false};

\node[op] (d) at (2.8,0.5) {$\duplicate$};
\node[pro,above] (ds) at (d.north) {false};

\node[op] (p2) at (1.8,0.5) {$\projection_{b}$};
\node[pro,above] (p2s) at (p2.north) {true};

\node[op] (d2) at (1,0.5) {$\duplicate$};
\node[pro,above] (d2s) at (d2.north) {true};

\node[op] (s) at (0,0) {$\selection_{c < 5}$};
\node[pro,below] (ss) at (s.south) {false};

\node[op] (r) at (-1,0) {$R$};
\node[pro,below] (rs) at (r.south) {false};

\draw[conn,bend right] (p1) to (j);
\draw[conn,bend left] (d) to (j);

\draw[conn] (p2) to (d);

\draw[conn] (d2) to (p2);

\draw[conn] (s) to (p1.west);
\draw[conn] (s) to (d2);

\draw[conn] (r) to (s);
\end{tikzpicture}
\caption{Inferring property $\setProp$}
\label{fig:inferring-set}
\end{figure}
\end{Example}

\begin{figure*}[t]
  \centering
\begin{minipage}{0.46\linewidth}
\begin{equation}\label{eq:pulling-up-provenance-projections}
  \frac{ a \subseteq \schema{\Diamond(\projection_{A}(R))} 
  }{\Diamond(\projection_{A,a \to b}(R)) \to  
     \projection_{\schema{\Diamond(\projection_{A}(R))},a \to b}(\Diamond(\projection_A (R)))}
\end{equation}
\end{minipage}
\hspace{1cm}
\begin{minipage}{0.15\linewidth}  
\begin{equation}\label{eq:duplicate-remove}
    \frac{keys(R) \neq \emptyset}{\duplicate (R) \rightarrow R} 
\end{equation}
\end{minipage}
\begin{minipage}{0.15\linewidth}  
\begin{equation}\label{eq:duplicate-remove-set}
    \frac{set(\duplicate(R))}{\duplicate (R) \rightarrow R} 
\end{equation}
\end{minipage}
\begin{minipage}{0.16\linewidth}  
\begin{equation}\label{eq:remove-redundant-columns1}
 \frac{A=icols(R)}{R \rightarrow \projection_A (R)}
\end{equation}
\end{minipage}
\hspace{1cm}
\begin{minipage}{0.41\linewidth}  
\begin{equation}\label{eq:add-duplicate-removal}
 \frac{G \subseteq \schema R}{\Aggregation{G}{}(R \join_ {b=c} S)  \rightarrow  \Aggregation{G}{}(\Aggregation{G,b}{} (R) \join_{b=c} S)}
\end{equation}
\end{minipage}
\hspace{1cm}
\begin{minipage}{0.45\linewidth}  
\begin{equation}\label{eq:attribute-factoring}
 \frac{e_1 = \eIf{\theta}{a + c}{a}}
{\projection_{e_1,...,e_m}(R) \to \projection_{ a +  \eIf{\theta}{c}{0}, e_2,...,e_m}(R)}
\end{equation}
\end{minipage}\\[1mm]
\begin{minipage}{0.28\linewidth}  
\begin{equation}\label{eq:window-function}
 \frac{x \not\in icols(\Win{f(a)}{x}{G}{O}(R))}{\Win{f(a)}{x}{G}{O}(R) \rightarrow R}
\end{equation}
\end{minipage}
\begin{minipage}{0.67\linewidth}  
\begin{equation}\label{eq:group-by-push-down}
 \frac{a \in \schema R \wedge a \not\in (G \union \{b, c\}) \wedge b \in G \wedge G \subseteq \schema R \wedge \{c\} \in keys(S)}{_{G} \aggregation _{f(a)}(R \join_ {b=c} S)  \rightarrow \Aggregation{G}{f(a)}(R) \join_{b=c} S}
\end{equation}
\end{minipage}

  \caption{Provenance-specific transformation (PAT) rules}
  \label{fig:algebraic-rules}
\end{figure*}

\section{PATs}\label{sec:heuristic}

We now introduce a subset of our PAT rules (Fig.~\ref{fig:algebraic-rules}), prove their correctness, and then discuss how these rules address the performance bottlenecks discussed in Sec.~\ref{sec:motivation}. A rule  $\frac{pre}{q \rightarrow q'}$ has to be read as ``If condition $pre$ holds, then $q$ can be rewritten as $q'$''.
Note that we also implement standard optimization rules such as selection move-around, and merging of adjacent projections, because these rules may help us to fulfill the preconditions of PATs (see  Appendix~\ref{sec:supp-heuristic}).

\parttitle{Provenance Projection Pull Up}
Provenance instrumentation~\cite{AF18,glavic2013using} 
seeds provenance annotations by duplicating attributes of input relations using projection. This increases the size of tuples in intermediate results. 
We can delay this duplication of attributes if the attribute we are replicating is still available in ancestors of the projection. 
In Rule~\eqref{eq:pulling-up-provenance-projections}, 
$b$ is an attribute storing provenance generated by duplicating attribute $a$. If $a$ is available in the schema of $\Diamond(\projection_{A}(R))$ ($\Diamond$ can be any operator) and $b$ is not needed to compute $\Diamond$,
then  we can pull the projection on $a \to b$ through operator $\Diamond$. 
For example, consider a query $Q = \selection_{a<5}(R)$ over relation $R(a,b)$.
Provenance instrumentation yields: 
$\selection_{a<5}(\projection_{a,b,a \to P(a), b \to P(b)}(R))$. This projection can be pulled up: 
$
\projection_{a,b,a \to P(a), b \to P(b)}(\selection_{a<5}(R))
$.

\parttitle{Remove Duplicate Elimination}
Rules~\eqref{eq:duplicate-remove} and~\eqref{eq:duplicate-remove-set} remove duplicate elimination operators. If a relation $R$  has at least one super key, then it cannot contain any duplicates. Thus, a duplicate elimination applied to  $R$ can be safely removed (Rule~\eqref{eq:duplicate-remove}). Furthermore, if the output of a duplicate elimination $op$ is again subjected to duplicate elimination further downstream and the operators on the path between these two operators are not sensitive to duplicates (property \textit{set} is true for $op$), then $op$ can be removed (Rule~\eqref{eq:duplicate-remove-set}).

\parttitle{Remove Redundant Attributes}
Recall that $icols(R)$ is a set of attributes from relation $R$ which is sufficient to evaluate ancestors of $R$.
If $icols(R) = A$, 
 then we use Rule~\eqref{eq:remove-redundant-columns1} to remove all other attributes by projecting $R$ on $A$.
Operator $\Win{f(a)}{x}{G}{O} (R)$ extends each tuple $t \in R$ by adding a new attribute $x$ that stores the aggregation function result $f(a)$.
Rule~\eqref{eq:window-function} removes $\win$ if $x$ is not needed by ancestors of $\Win{f(a)}{x}{G}{O}(R)$.

\parttitle{Attribute Factoring}\label{sec:PAT-factor-attrs}
Attribute factoring restructures projection expressions 
such 
that adjacent projections can be merged without blow-up in expression size. 
For instance, merging projections $\projection_{b+b+b \to c}(\projection_{a+a+a\to b}(R))$  increases the number of references to $a$ to 9 (each mention of $b$ is replaced with $a+a+a$). This blow-up can occur when computing the provenance of transactions where multiple levels of \lstinline!CASE! expressions are used. Recall that we represent  \lstinline!CASE! as $\eIf{\theta}{e_1}{e_2}$ in projection expressions. 
For example, update \lstinline!UPDATE R SET a = a + 2 WHERE b = 2! would be expressed as $\projection_{\eIf{b=2}{a+2}{a},b}(R)$ which can be rewritten as $\projection_{a + \eIf{b=2}{2}{0}, b}(R)$, reducing the references to $a$ by 1.  
We define analog rules for any arithmetic operation which has a neutral element (e.g.,  multiplication).

\parttitle{Aggregation Push Down}\label{sec:PAT-push-down-agg}
Pipeline L5 encodes the provenance (provenance polynomial) of a query result as an XML document. Each polynomial is factorized based on the structure of the query.
We can reduce the output's size by rewriting the query using algebraic equivalences to choose a beneficial factorization~\cite{OZ11}. For example, $a \cdot b + a \cdot c + a \cdot d$ can be factorized as $a \cdot (b+c+d)$. 
For queries with aggregation, this factorization can be realized by pushing aggregations through joins.  
Rule~(\ref{eq:add-duplicate-removal}) and (\ref{eq:group-by-push-down}) push down aggregations based on the equivalences introduced in~\cite{chaudhuri1994including}. 
Rule~\ref{eq:group-by-push-down} pushes an aggregation to a child of a join operator if the join is cardinality-preserving and all attributes needed to compute the aggregation are available in that child.  For instance, consider $\Aggregation{b}{f(a)}(R \join_ {b=c} S)$ where $\{c\}$ is a key of $S$. Since R is joined with S on $b=c$, pushing down the aggregation to R does not affect the cardinality of the aggregation's input. Since also  $\{a,b\} \in \schema R$,  we can rewrite this query into $\Aggregation{b}{f(a)}(R) \join_{b=c} S$.
Rule~\ref{eq:add-duplicate-removal} redundantly pushes an aggregation without aggregation functions (equivalent to a duplicate elimination) to create a pre-aggregation step.

\begin{Theorem}
The PATs from Fig.~\ref{fig:algebraic-rules} are equivalence preserving.  
\end{Theorem}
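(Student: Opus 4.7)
The plan is to prove each rule in Fig.~\ref{fig:algebraic-rules} separately, using the bag-semantic operator definitions of Table~\ref{tab:rel-algebra-def} together with the soundness of our property inference established in Appendix~\ref{sec:supp-correct-proof}. The rules split naturally into three groups: (i) those that follow directly from a single inferred property (Rules~\eqref{eq:duplicate-remove}, \eqref{eq:duplicate-remove-set}, \eqref{eq:remove-redundant-columns1}, \eqref{eq:window-function}), (ii) purely schema-level shuffling of expressions (Rules~\eqref{eq:pulling-up-provenance-projections}, \eqref{eq:attribute-factoring}), and (iii) the genuine aggregation-through-join push-downs (Rules~\eqref{eq:add-duplicate-removal}, \eqref{eq:group-by-push-down}).

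For group (i) the arguments are essentially one line each. Rule~\eqref{eq:duplicate-remove}: if $keys(R) \ne \emptyset$, then Definition~\ref{def:def_keys} forces $\forall t: R(t) \le 1$, so $\duplicate(R)$ and $R$ agree pointwise as bags. Rule~\eqref{eq:duplicate-remove-set}: soundness of the $set$-inference together with Definition~\ref{def:def_set} applied to the subquery $\duplicate(R)$, combined with idempotence $\duplicate(\duplicate(R)) = \duplicate(R)$, yields $\query \equiv \query[\duplicate(R) \gets R]$. Rule~\eqref{eq:remove-redundant-columns1} is a direct restatement of Definition~\ref{def:def_icols}. Rule~\eqref{eq:window-function}: since $x$ is not in $icols$ of the window operator, no ancestor reads $x$, so dropping $x$ before every ancestor gives the same result; the window operator is an identity on the remaining attributes and can simply be removed.

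For group (ii) I would argue tuple-by-tuple. In Rule~\eqref{eq:pulling-up-provenance-projections}, because $a \in \schema{\curOp(\projection_{A}(R))}$, attribute $a$ is present in the input of $\curOp$ and untouched by it, and since $b$ does not occur in that schema, $\curOp$'s output is independent of $b$. Hence duplicating $a$ to $b$ either before or after $\curOp$ produces identical tuples with identical multiplicities. Rule~\eqref{eq:attribute-factoring} is a pure expression-level identity: using the $\exprEval$ semantics, $\eIf{\theta}{a+c}{a}$ and $a + \eIf{\theta}{c}{0}$ evaluate to the same value on every input because $0$ is the additive neutral element; the extension to any binary operation with a neutral element is structurally identical.

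The main obstacle will be group (iii), where grouping and joining interact nontrivially with tuple multiplicities. My plan for Rule~\eqref{eq:group-by-push-down} is to exploit the key assumption $\{c\} \in keys(S)$ to show that each tuple of $R$ joins with at most one tuple of $S$, making the join cardinality-preserving on the $R$-side. Combined with $a \in \schema{R}$ and $a \notin G \cup \{b,c\}$, this implies that the multiset of $a$-values within each $G$-group is the same whether the aggregation is evaluated before or after the join, so $f(a)$ coincides per group. For Rule~\eqref{eq:add-duplicate-removal}, the inner $\Aggregation{G,b}{}$ has no aggregation function and thus acts as a $(G,b)$-restricted duplicate elimination; since the outer $\Aggregation{G}{}$ also has no aggregation function, it depends only on the support of $G$-values produced by the join, and pre-collapsing duplicates on the $R$-side cannot change that support. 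Both arguments closely mirror the classical equivalences of~\cite{chaudhuri1994including}, so I would reduce to those results and only make the adaptation to our bag semantics with generalized projection explicit.
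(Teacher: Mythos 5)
Your decomposition and the arguments you give coincide with the paper's own proof almost rule for rule: the property-based rules are discharged in one line from Definitions~\ref{def:def_keys}, \ref{def:def_set}, and \ref{def:def_icols} plus the soundness of the inference rules, Rules~\eqref{eq:pulling-up-provenance-projections} and~\eqref{eq:attribute-factoring} are handled by tuple- and expression-level case analysis over $\exprEval$, and the aggregation push-downs~\eqref{eq:add-duplicate-removal} and~\eqref{eq:group-by-push-down} are ultimately delegated to~\cite{chaudhuri1994including}, exactly as the paper does (the paper in fact does nothing beyond the citation for those two).

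One step as you state it would not go through: for Rule~\eqref{eq:duplicate-remove-set} you invoke ``Definition~\ref{def:def_set} applied to the subquery $\duplicate(R)$, combined with idempotence.'' Instantiating Definition~\ref{def:def_set} with $\qSub = \duplicate(R)$ yields $\query \equiv \query[\duplicate(R) \gets \duplicate(\duplicate(R))]$, which by idempotence is the tautology $\query \equiv \query$ and says nothing about replacing $\duplicate(R)$ by $R$. The content you actually need is the stronger fact established in the paper's correctness proof for the $\setProp$ inference: if $\setProp(op) = true$ then every path from $op$ to the root passes through a downstream $\duplicate$ with only support-preserving operators in between, so the query result depends only on $\supp(op)$ and not on multiplicities. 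Since $\supp(\duplicate(R)) = \supp(R)$, the operator can be dropped. Your appeal to ``soundness of the $set$-inference'' gestures at this, but the definitional route you spell out is vacuous and should be replaced by the support argument.
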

\begin{proof}
See Appendix~\ref{sec:supp-heuristic}. 
\end{proof}

\subsection{Addressing Bottlenecks through PATs}
\label{sec:rule-problem-address}

Rule~\eqref{eq:attribute-factoring} is a preprocessing step that helps us to avoid a blow-up in expression size when merging projections (Sec.~\ref{sec:motivation} \textbf{P1}).
Rules~\eqref{eq:duplicate-remove} and~\eqref{eq:duplicate-remove-set} can be used to remove unnecessary duplicate elimination operators (\textbf{P2}). 
Bottleneck \textbf{P3} is addressed by removing operators that block join reordering:  Rules~\eqref{eq:duplicate-remove}, \eqref{eq:duplicate-remove-set}, and~\eqref{eq:window-function} remove such operators. Even if such operators cannot be removed, Rules~\eqref{eq:pulling-up-provenance-projections} and~\eqref{eq:remove-redundant-columns1} remove attributes that are not needed which reduces the schema size of intermediate results. 
\textbf{P4} can be addressed by using Rules~\eqref{eq:duplicate-remove}, \eqref{eq:duplicate-remove-set}, and \eqref{eq:window-function} to remove redundant operators. Furthermore, Rule~\eqref{eq:remove-redundant-columns1} 
removes unnecessary columns. 
Rule~(\ref{eq:add-duplicate-removal}) and~(\ref{eq:group-by-push-down})
factorize nested representations of provenance (Pipeline L5) to reduce its size by pushing aggregations through joins.
In addition to the rules discussed so far, we apply standard equivalences, because our transformations often benefit from these equivalences and they also allow us to further simplify a query. For instance, we apply \textit{selection move-around} (which benefits from the $\ecProp$ property), merge selections and projections (only if this does not result in a significant increase in expression size), and remove redundant projections (projections on all input attributes). These additional PATs are discussed in Appendix~\ref{sec:supp-heuristic}.

 \section{Instrumentation Choices}
\label{sec:transf-appl-during}

\parttitle{Window vs. Join}\label{sec:window-vs-join}
The \textit{Join} method for instrumenting an aggregation operator for provenance capture was first used by Perm~\cite{glavic2013using}. To propagate provenance from the input of the aggregation to produce results annotated with provenance, the original aggregation is computed and then joined with the provenance of the aggregation's input on the group-by attributes. This will match the aggregation result for a group with the provenance of tuples in the input of the aggregation that belong to that group (see~\cite{glavic2013using} for details). For instance, consider a query $_{b}\aggregation_{sum(a) \to x}(R)$ with $\schema{R} = (a,b)$. This query would be rewritten into $\projection_{b,x,P(a),P(b)}(\Aggregation{G}{sum(a) \to x}(R) \join_{b = b'} \projection_{b\to b', a \to P(a), b \to P(b)}(R))$. Alternatively, the aggregation can be computed over the input with provenance using the window operator $\win$ by turning  
the group-by into a partition-by. The rewritten expression is $\projection_{b,x,P(a),P(b)}(\Win{sum(a)}{x}{b}{}(R)(\projection_{a,b, a \to P(a), b \to P(b)}(R)))$.
The \textit{Window} method has the advantage that 
no additional joins are introduced. However, as we will show in Sec.~\ref{sec:experiments}, the \textit{Join} method is superior in some cases and, thus, the choice between these alternatives should be cost-based.

\parttitle{FilterUpdated vs. HistJoin}
Our approach for capturing the provenance of a transaction $T$~\cite{AG17} only returns the provenance of tuples that were affected by $T$. We consider two alternatives for achieving this. 
The first method is called \textit{FilterUpdated}.  Consider a transaction $T$ with $n$ updates and let $\theta_i$ denote the condition (\lstinline!WHERE!-clause) of the $i^{th}$ update. Every tuple updated by the transaction has to fulfill at least one $\theta_i$. Thus, this set of tuples can be computed by applying a selection on condition $\theta_1 \vee \ldots \vee \theta_n$ to the input of reenactment. The alternative called \textit{HistJoin} uses time travel to determine based on the database version at transaction commit which tuples where updated by the transaction. It then joins this set of tuples with the version at transaction start to recover the original inputs of the transaction. For a detailed description see~\cite{AG17}.
\textit{FilterUpdated} is typically superior, because it avoids the join applied by \textit{HistJoin}. However, for transactions with a large number of operations, 
the cost of \textit{FilterUpdated}'s selection can be higher than the join's cost. 

\parttitle{Set-coalesce vs. Bag-coalesce}\label{sec:set-vs-bag}
The result of a sequenced temporal query~\cite{DBLP:reference/db/BohlenJ09} can be encoded in multiple, equivalent ways using intervals. Pipeline L6 applies a normalization step to ensure a unique encoding of the output. 
Coalescing~\cite{BS96}, the standard method for normalizing interval representations of temporal data under set semantics, is not applicable for bag semantics. 
We introduce a version that also works for bags.  However, this comes at the cost of additional overhead. If we know that a query's output does not contain any duplicates, then we can use the cheaper set-coalescing method. We use Property $\keyProp$ to determine whether should we can apply \textit{set-coalesce} (see Appendix~\ref{sec:l6.-sequ-temp}).

 \begin{algorithm}[t]
  \caption{CBO}
  \label{alg:cbo-skeleton}
  \begin{algorithmic}[1]
    \Procedure{CBO}{$Q$}
      \State $T_{best} \gets \infty$, $T_{opt} \gets 0.0$
      \While {$\Call{hasMorePlans}{ } \wedge \Call{continue}$}
        \State $t_{before} \gets \Call{currentTime}{ }$
        \State $P \gets \Call{generatePlan}{Q}$ 
        \State $T \gets \Call{getCost}{P}$
        \If {$T<T_{best}$}
          \State $T_{best} \gets T, P_{best} \gets P$
        \EndIf
        \State \Call{genNextIterChoices}{ }
        \State $T_{opt} = T_{opt} + (\Call{currentTime}{ } - t_{before})$
      \EndWhile
      \State \Return $P_{best}$
    \EndProcedure
        
  \end{algorithmic}
\end{algorithm}

\section{Cost-based Optimization}\label{sec:cbo}

Our CBO algorithm (Alg.~\ref{alg:cbo-skeleton}) consists of a main loop that is executed until the whole plan space has been explored (function $\Call{hasMorePlans}{}$) or until a stopping criterion has been reached (function $\Call{Continue}{}$). 
In each iteration, function $\Call{generatePlan}{}$ takes the output of the parser and runs it through the instrumentation pipeline (e.g, the one shown in Fig.~\ref{fig:cbo-arch}) to produce an SQL query. The pipeline components inform the optimizer about choice points using function $\Call{makeChoice}{}$. The resulting plan $P$ is then costed. 
If the cost $T$ of the current plan $P$ is less than the cost $T_{best}$ of the best plan found so far, then we set $P_{best} = P$. Finally, we decide which optimization choices to make in the next iteration using function $\Call{genNextIterChoices}{}$.
Our optimizer is plan space agnostic. New choices are discovered at runtime when a step in the pipeline informs the optimizer about an optimization choice. This enables the optimizer to enumerate all plans for a blackbox instrumentation pipeline.

\parttitle{Costing} Our default cost estimation implementation uses the DBMS to create an optimal execution plan for $P$ and estimate its cost. 
This ensures that we get the estimated cost for the plan that would be executed by the backend instead of estimating cost based on the properties of the query alone. 

\parttitle{Search Strategies}
Different strategies for exploring the plan space 
are implemented as different versions  of the $\Call{continue}{}$, $\Call{genNextIterChoices}{}$, and $\Call{makeChoice}{}$ functions. 
The default 
setting guarantees that the whole search space will be explored ($\Call{continue}{}$ returns true).

\subsection{Registering Optimization Choices}

We want to make the optimizer aware of choices available in a pipeline without having to significantly change existing code.
Choices are registered by calling the optimizer's $\Call{makeChoice}{}$ function.
This callback interface has two purposes: 1) inform the optimizer that a choice has to be made and how many alternatives to choose from and 2) allowing it to control which options are chosen.   
We refer to a point in the code where a choice is enforced as 
a \textit{choice point}. 
A choice point 
has a fixed 
number of \textit{options}. 
The return value of $\Call{makeChoice}{}$ instructs the caller to take a particular option.

\begin{Example}
  Assume we want to make a cost-based decision on whether to use the \emph{Join} or \emph{Window} method (Sec.~\ref{sec:transf-appl-during}) to instrument an aggregation.
We add a call $\Call{makeChoice}{2}$ to register a choice with two options to choose from. 
The optimizer responds 
with a number ($0$ or $1$) encoding the option to be chosen. 
$\,$\\[-5mm]
\begin{lstlisting}[language=c] 
if (makeChoice(2) == 0) Window(Q) else Join(Q)
\end{lstlisting}
$\,$\\[-10mm]
\end{Example}

A code fragment containing a call to $\Call{makeChoice}{}$ may be executed several times during one iteration.
Every call is treated as an independent choice point, e.g., 4 possible combinations of the \emph{Join} and \emph{Window} methods will be considered for instrumenting a query with two aggregations.

\begin{figure}[t]
  \centering
  $\,$\\[-7mm]
\resizebox{!}{0.3\columnwidth}{
\begin{tikzpicture}
[every node/.style={circle,draw,fill=blue!50,label position={east},inner sep={1mm}},
el/.style={draw=none,fill=none,inner sep={1mm}},
level distance=7mm,
level 1/.style={sibling distance=55mm},
level 2/.style={sibling distance=25mm},
level 3/.style={sibling distance=13mm}
]
  \node[label={[label distance=0.4cm]{\textbf{Window vs. Join}}}] {} 
        child {node[label=left:{\textbf{Reorder}}] {} 
            child {node[label={[label distance=-0.3cm]below:{\textcolor{red}{[0,0]}}}] {} 
              edge from parent node[left,el] {0}
            }
            child {node[label={[label distance=-0.3cm]below:{\textcolor{red}{[0,1]}}}] {} 
              edge from parent node[right,el] {1}
            }
            edge from parent node[left,el,inner sep={4mm}] {0}
        }
        child {node[label={\textbf{Reorder}}] {}
          child {node[label=left:{\textbf{Reorder}}] {} 
            child {node[label={[label distance=-0.3cm]below:{\textcolor{red}{[1,0,0]}}}] {} 
              edge from parent node[left,el] {0}
            }
            child {node[label={[label distance=-0.3cm]below:{\textcolor{red}{[1,0,1]}}}] {} 
              edge from parent node[right,el] {1}
            }
            edge from parent node[left,el] {0}
          }
          child {node[label=right:{\textbf{Reorder}}] {} 
            child {node[label={[label distance=-0.3cm]below:{\textcolor{red}{[1,1,0]}}}] {} 
              edge from parent node[left,el] {0}
            }
            child {node[label={[label distance=-0.3cm]below:{\textcolor{red}{[1,1,1]}}}] {} 
              edge from parent node[right,el] {1}
            }
            edge from parent node[right,el] {1}            
          }
          edge from parent node[right,el,inner sep={4mm}] {1}
        }
  ;
\end{tikzpicture}
}\\[-5mm]
\caption{Plan space tree example}
\label{fig:plan-tree-example}
\end{figure}
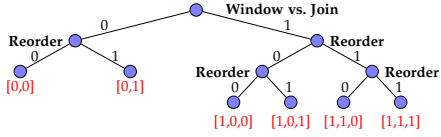

\subsection{Plan Enumeration}
\label{sec:search-space}

During one iteration we may hit any number of choice points and each choice made 
may affect what other choices have to be made in the remainder of this iteration. 
We use a data structure called \textit{plan tree} that models the plan space shape. In the plan tree each intermediate node represents a choice point, outgoing edges from a node are labelled with options and children represent choice points that are hit next. 
A path from the root of the tree to a leaf node represents a particular sequence of choices that results in the plan represented by this leaf node. 

\begin{Example}
  Assume we use two choice points: 1) Window vs. Join; 2) reordering join inputs. The second choice point can only be hit if a join operator exist, e.g., if we choose to use the \emph{Window} method then the resulting algebra expression may not have any joins and this choice point would never be hit.
Consider a query which is an aggregation over the result of a join.   
Fig.~\ref{fig:plan-tree-example} shows the corresponding plan tree. When instrumenting the aggregation, we have to decide whether to use the \textit{Window} (0) or the \textit{Join}
method (1). If we choose 
(0), then we have to decide wether to reorder the inputs of the join. If we 
choose (1), then there is an additional join for which we have to decide whether to reorder its input. 
The tree is asymmetric, i.e., the number of choices to be made in each iteration (path in the tree) is not constant. 
\end{Example}

While the plan space tree encodes all possible plans for a given query and set of choice points, it would not be feasible to materialize it, because its size can be exponential in the maximum number of choice points that are  hit during one iteration (the depth $d$ of the plan tree).  
Our default implementation of the $\Call{generateNextPlan}{}$ and $\Call{makeChoice}{}$ functions 
explores the whole plan space using ${\cal O}(d)$ space. 
As long as we know which path was taken in the previous iteration (represented as a list of choices as shown in Fig.~\ref{fig:plan-tree-example}) and for each node (choice point) on this path the number of available options, then we can determine what choices should be made in the next iteration to reach the leaf node (plan) immediately to the right of the previous iteration's plan. We call this traversal strategy \textit{sequential-leaf-traversal}. We have implemented an alternative strategy that approximates a binary search over the leaf nodes. We opt for an approximation, because the structure of subtrees is not known upfront. This strategy called \textit{binary-search-traversal} is described in more detail in Appendix~\ref{sec:binary-search-traversal}. The rationale for supporting this strategy is that if time constraints prevent us from exploring the full search space, then we would like to increase the diversity of explored plans by traversing different sections of the plan tree.

\begin{Theorem}
Let $Q$ be input query. Algorithm~\ref{alg:cbo-skeleton} iterates over all plans that can be created for the given choice points.
\end{Theorem}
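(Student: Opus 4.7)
The plan is to establish a bijection between plans that the pipeline can produce for $Q$ (under a fixed set of registered choice points) and the leaves of the plan tree of Section~\ref{sec:search-space}, and then to show that the default \emph{sequential-leaf-traversal} used by \Call{genNextIterChoices}{} visits each leaf exactly once before \Call{hasMorePlans}{} returns false.

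First I would formalize the plan tree for $Q$. By construction, a single iteration of \Call{generatePlan}{} executes the instrumentation pipeline deterministically up to the values returned by \Call{makeChoice}{}. Hence the $i$th call to \Call{makeChoice}{} during an iteration is a deterministic function of the sequence $c_1,\ldots,c_{i-1}$ of options chosen so far, and so is the number $k_i$ of options it advertises. This lets me define the plan tree node-by-node: the root corresponds to the first choice point, and each node at depth $i$ along the path $c_1,\ldots,c_{i-1}$ has $k_i$ children, one per option. An iteration reaches a leaf precisely when no further \Call{makeChoice}{} call occurs, and the output plan $P$ is a deterministic function of the root-to-leaf path. Two distinct paths therefore yield iterations whose pipeline executions diverge at the first differing choice, so they produce (in general) different plans, and every plan the pipeline can produce is obtained on some path.

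Next I would prove by induction on the depth $d$ of the plan tree that sequential-leaf-traversal visits every leaf exactly once. The base case $d=0$ is trivial: the tree is a single leaf and one iteration produces it. For the inductive step, consider a root with $k$ children $T_0,\ldots,T_{k-1}$. By the description in Section~\ref{sec:search-space}, given the path taken in the previous iteration, \Call{genNextIterChoices}{} computes the immediately-next leaf by (a) incrementing the deepest choice whose option index is not yet maximal and (b) resetting all deeper choices to $0$; \Call{hasMorePlans}{} returns true iff such a deepest choice exists. Inductively, starting from option $0$ at the root this procedure enumerates all leaves of $T_0$, then a carry propagates, the root choice advances to $1$, deeper choices reset, and the enumeration continues with $T_1$, and so on until $T_{k-1}$ is exhausted. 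At that point no choice point admits an increment and \Call{hasMorePlans}{} returns false, so the loop terminates. Because the number of plans enumerated equals the total leaf count and no leaf is repeated, we obtain a bijection with the plans produced.

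The main obstacle is that the plan tree is not materialized and, moreover, is discovered on the fly: the set and arities of choice points below a node are only known after executing the pipeline along that node's path. The key observation that makes the induction go through is that sequential-leaf-traversal only needs to remember the current path of $(\text{option chosen},\ \text{arity})$ pairs, which is recorded incrementally by the \Call{makeChoice}{} callback during the iteration that just finished. This $\mathcal{O}(d)$ state is sufficient to perform the increment-and-carry step described above, so the dynamic discovery of choice points does not break the inductive argument.
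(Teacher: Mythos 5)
Your proof is correct and formalizes exactly the reasoning the paper relies on: the paper states this theorem without an explicit proof, justifying it only via the informal description of \emph{sequential-leaf-traversal} in Sec.~\ref{sec:search-space} and the default \textsc{makeChoice}/\textsc{genNextIterChoices} pseudocode in Appendix~\ref{sec:supp-sequ-leaf-trav}, and your increment-and-carry induction over the dynamically discovered plan tree is precisely that argument made rigorous. One small nit: \emph{bijection} is slightly too strong, since two distinct root-to-leaf paths may yield the same SQL plan, but the theorem only needs that every producible plan lies on some visited path, which you do establish.
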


\subsection{Alternative Search Strategies} \label{sec:traversal-strategies}

Metaheuristics
are applied in query optimization to deal with large search spaces. 
We discuss an implementation of a metaheuristic in our framework in Appendix~\ref{sec:supp-sty}.

\parttitle{Balancing Optimization vs. Runtime}\label{sec:balance-opt}
The strategies discussed  so far 
do not adapt the effort spend on optimization based on how expensive the query is. Obviously, spending more time on optimization than on execution is undesirable (assuming that provenance requests are ad hoc). Ideally, we would like to minimize the sum of the  optimization time ($T_{opt}$) and execution time of the best plan $T_{best}$ by stopping optimization once a cheap enough plan has been found. This is an online problem, i.e., after each iteration we have to decide whether to execute the current best plan or continue to produce more plans. 
The following stopping condition results in a 2-competitive algorithm, i.e.,  $T_{opt} + T_{best}$ is  less than 2 times the minimal achievable cost: stop optimization once $T_{best} = T_{opt}$. Note that even though we do not know the length of an iteration upfront, we can still ensure $T_{best} = T_{opt}$ by stopping mid iteration.

\begin{Theorem}
The algorithm outlined above is 2-competitive.
\end{Theorem}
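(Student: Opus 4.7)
The plan is to carry out a standard offline/online comparison argument, treating the best-plan cost as a monotonically non-increasing function of elapsed optimization time and doing a case split on where the offline optimum stops.

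First I would fix notation. Let $B(t)$ denote $T_{best}$ after $t$ units of optimization time have elapsed, i.e., the execution cost of the cheapest plan enumerated by time $t$; since the algorithm only updates $T_{best}$ when a strictly cheaper plan is discovered, $B$ is monotonically non-increasing in $t$. Let $\tau$ be the time at which Algorithm~\ref{alg:cbo-skeleton} (under the balancing stopping rule) halts, so by construction $\tau = T_{opt} = T_{best} = B(\tau)$, and the total cost paid by our algorithm is $ALG = T_{opt} + T_{best} = 2\tau$. The offline optimum is the best achievable sum of optimization and execution time, $OPT = \inf_{t \ge 0} \bigl(t + B(t)\bigr)$; let $t^\star$ attain (or approach) this infimum.

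Next I would show $\tau \le OPT$ by a short case split on the position of $t^\star$ relative to $\tau$. If $t^\star \ge \tau$, then trivially $OPT = t^\star + B(t^\star) \ge t^\star \ge \tau$. If instead $t^\star < \tau$, then by monotonicity of $B$ we have $B(t^\star) \ge B(\tau) = \tau$, so again $OPT = t^\star + B(t^\star) \ge B(t^\star) \ge \tau$. In both cases $\tau \le OPT$, and therefore $ALG = 2\tau \le 2\,OPT$, which is the desired 2-competitive bound.

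The only subtlety — and the part that needs a careful word in the proof — is that the stopping condition $T_{opt} = T_{best}$ may be reached in the middle of an iteration, before the next plan is actually costed. This is fine because the text explicitly permits stopping mid-iteration; what matters for the argument is that at the halting instant $\tau$, the inequality $T_{opt} \ge T_{best}$ first becomes tight, and since $T_{opt}$ grows continuously with wall-clock time while $B$ only drops at discrete iteration boundaries, we can always hit equality exactly. A minor technicality worth stating is that $B(t) = +\infty$ until the first plan has been produced, so $\tau$ is at least the time of the first completed iteration; once that has happened, $B(t)$ is finite and the case analysis above goes through verbatim. The main obstacle is thus not the arithmetic but justifying this mid-iteration halting and the monotonicity of $B$; once those are in place the competitive ratio falls out immediately.
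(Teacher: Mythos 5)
Your proof is correct and follows essentially the same argument as the paper's: both compare the algorithm's stopping point against the offline optimum via a case split on their relative order, using monotonicity of the best-cost-so-far in one direction and monotonicity of cumulative optimization time in the other. Your continuous-time phrasing and the unified bound $\tau \le OPT$ are a slightly cleaner packaging of the paper's three discrete cases, but the underlying idea is identical.
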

\begin{proof}
See Appendix~\ref{sec:supp-sty}. 
\end{proof}

 \section{Related Work}\label{sec:relatedwork}

Our work is related to optimizations that sit on top of standard CBO, to compilation of non-relational languages into SQL, and to provenance capture and storage optimization.

\parttitle{Cost-based Query Transformation}
State-of-the-art DBMS apply transformations such as decorrelation of nested subqueries~\cite{SP96} 
in addition to (typically exhaustive) join enumeration and choice of physical operators. Often such transformations are integrated with CBO~\cite{AL06} by iteratively rewriting the input query through transformation rules and then finding the best plan for each rewritten query. 
Typically, metaheuristics (randomized search) are applied to deal with the large search space. Extensibility of query optimizers has been studied in, e.g.,~\cite{graefe1993volcano}. 
While our CBO framework 
is also applied on-top of standard database optimization, we 
 can  turn any choice (e.g., ICs) within an instrumentation pipeline into a cost-based decision. 
Furthermore, our framework 
has the advantage  
that new optimization choices can be added without modifying the optimizer.

\parttitle{Compilation of Non-relational Languages into SQL}\label{sec:rel-non-rel-to-rel}
Approaches that compile non-relational languages (e.g., XQuery~\cite{grust2010let,LK05}) or extensions of relational languages (e.g., temporal~\cite{SJ01} and nested collection models~\cite{CL14}) into SQL face similar challenges as we do.  
Grust et al.~\cite{grust2010let} optimize the compilation of XQuery into SQL. The approach heuristically applies algebraic transformations  to cluster join operations 
with the goal to produce an SQL query that can successfully be optimized by a relational database. 
We adopt their idea of inferring properties over algebra graphs.  
However, to the best of our knowledge we are the first to integrate these ideas with CBO and to consider ICs.

\parttitle{Provenance Instrumentation}
Several systems such as \textit{DBNotes}~\cite{bhagwat2005annotation}, \textit{Trio}~\cite{aggarwal2009trio}, 
\textit{Perm}~\cite{glavic2013using}, \textit{LogicBox}~\cite{GA12}, \textit{ExSPAN}~\cite{ZS10}, and \textit{GProM}~\cite{AF18} model provenance as annotations on data and capture provenance by propagating annotations. Most systems apply the \textit{provenance instrumentation} approach described in the introduction by compiling provenance capture and queries into a relational query language (typically SQL).
Thus, the techniques we introduce in this work are applicable to a wide range  of systems.

\parttitle{Optimizing Provenance Capture and Storage}
Optimization of provenance has mostly focused on minimizing the storage size of provenance. Chapman et al.~\cite{CJ08a} introduce several techniques for compressing provenance information, e.g., by replacing repeated elements with references and discuss how to maintain such a storage representation under updates. Similar techniques have been applied to reduce the storage size of provenance for workflows that exchange data as nested collections~\cite{AB09}. A cost-based framework for choosing between reference-based provenance storage 
and propagating full provenance 
was introduced in the context of declarative networking~\cite{ZS10}. 
This idea of storing just enough information to be able to reconstruct provenance through instrumented replay, has also been adopted for computing the provenance for transactions~\cite{AF18,AG17} 
and in the Subzero system~\cite{wu2013subzero}. 
Subzero switches between different provenance storage representations in an adaptive manner to optimize the cost of provenance queries.  
Amsterdamer et al.~\cite{AD12} demonstrate how to rewrite a query into an equivalent query with 
provenance of minimal size. 
Our work is orthogonal 
in that we focus on minimizing execution time of provenance capture and retrieval.

\section{Experiments}\label{sec:experiments}

Our evaluation focuses on measuring 1) the effectiveness of CBO in choosing the most efficient ICs and PATs, 2) the effectiveness of  heuristic application of PATs, 3) the overhead of heuristic and cost-based optimization, and 4) the impact of CBO search strategies on optimization and execution time.
All experiments were executed on a  machine with 2 AMD Opteron 4238 CPUs, 128GB RAM, and a hardware RAID with  4 $\times$ 1TB 72.K HDs in RAID 5 running commercial DBMS X (name omitted due to licensing restrictions).

To evaluate the effectiveness of our CBO 
vs.  
heuristic optimization choices, 
we compare the performance of instrumented queries generated by the CBO (denoted as \textbf{\textit{Cost}}) against queries generated by selecting a predetermined option for each choice point. Based on a preliminary study we have selected 3 choice points:
 1) using the \textbf{Window} or  \textbf{Join} method; 2) using \textbf{Filter\-Updated} or \textbf{HistJoin} and 3) choosing whether to apply PAT rule~\eqref{eq:duplicate-remove-set} (remove duplicate elimination). If CBO is deactivated, then we always remove such operators if possible. 
The application of the remaining PATs introduced in Sec.~\ref{sec:heuristic} turned out to be always beneficial in our experiments. 
Thus, these PATs are applied as long as their precondition is fulfilled.
We consider two variants for each method: activating heuristic application of the remaining PATs (suffix \textbf{Heu}) or deactivating them (\textbf{NoHeu}).  Unless noted otherwise, results were averaged over 100 runs.

\subsection{Datasets \& Workloads}

\parttitle{Datasets}
\underline{TPC-H}:
We have generated TPC-H benchmark datasets of size 10MB, 100MB, 1GB, and 10GB (SF0.01 to SF10). 
\underline{Synthetic}:
For the transaction provenance experiments we use a 1M tuple relation with uniformly distributed numeric values. 
We vary the size of the transactional history. 
Parameter $HX$
indicates $X\%$ of history, e.g., $H10$ represents $10\%$ history (100K tuples).
\underline{DBLP}:
This dataset consistes of 8 million co-author pairs 
extracted from DBLP (\url{http://dblp.uni-trier.de/xml/}). 
\underline{MESD}: The temporal MySQL employees sample dataset has 6 tables and contains 4M records  (\url{https://dev.mysql.com/doc/employee/en/}).

\parttitle{Simple aggregation queries}
This workload computes the provenance of queries consisting solely of aggregations using Pipeline L1 which applies the rewrite rules for aggregation pioneered in Perm~\cite{glavic2013using} and extended in GProM~\cite{AF18}. A  query consists of $i$ aggregations where each
aggregation operates on the result of the previous aggregation. The leaf operation accesses the TPC-H \texttt{part} table. Every aggregation groups the input on a range of PK values such that the last step returns the same number of results independent of $i$. 

\parttitle{TPC-H queries}
We select 11 
out of  the 22 TPC-H queries to evaluate optimization of provenance capture for complex queries. The technique~\cite{GA09a} we are using supports all TPC-H queries, but instrumentations for nested subqueries have not been implemented in GProM yet.

\parttitle{Transactions}
We use the \textit{reenactment} approach of GProM~\cite{AG17} to compute provenance for transactions executed under isolation
level~\texttt{SERIALIZABLE}. 
The transactional workload is run upfront (not included in the measured execution time) and provenance is computed retroactively.
We vary the number of
updates per transaction, e.g., $U10$ is a transaction with 10 updates. 
The tuples to be updated are selected randomly using the PK of
the relation.

\parttitle{Provenance export}
We use the approach from~\cite{NX15} to translate a relational encoding of provenance (see Sec.~\ref{sec:intro}) into PROV-JSON. 
We export the provenance for a foreign key join across  TPC-H relations nation, customer, and orders.

\parttitle{Provenance for Datalog queries}
We use the approach described in~\cite{LS16} (Pipeline L3). 
The input is a non-recursive Datalog 
query 
$Q$ and a set of (missing) query result tuples of interest.  
We use the DBLP co-author dataset for this experiment and the following queries. \textbf{Q1}: Return authors which have co-authors that have co-authors.
\textbf{Q2}: Return authors that are co-authors, but not of themselves (while semantically meaningless, this query is useful for testing negation).
\textbf{Q3}: Return pairs of authors that are indirect co-authors, but are not direct co-authors.
\textbf{Q4}: Return start points of paths of length 3 in the co-author graph. For each query we consider multiple why questions that specify the set of results for which provenance should be generated. 
We use Qi.j to denote the $j^{th}$ why question for query Qi.

\parttitle{Factorizing Provenance}
We use Pipelines L1 and L5 to evaluate the performance of nested versus ``flat'' provenance under different factorizations (applying the aggregation push-down PATs).
We use the following queries over the TPC-H dataset. 
\textbf{Q1}: An aggregation over a join of tables customer and nation.
\textbf{Q2}: Joins the result of Q1 with the table supplier and adds an additional aggregation. 
\textbf{Q3}: An aggregation over a join of tables nation, customer, and supplier.

\parttitle{Sequenced temporal queries}
We use Pipeline L6 to test the IC which replaces \textit{bag-coalesce} with \textit{set-coalesce} for queries that do not return duplicates. We use the following queries over the temporal MESD  dataset. 
\textbf{Q1}: Return the average salary of employees per department.
\textbf{Q2}: Return the salary and department for every employee (3-way join).

\begin{figure*}[p]

\begin{minipage}{1\linewidth}
\hspace{-10mm}
{
  \resizebox{0.6\linewidth}{!}{
  \begin{minipage}{0.93\linewidth}
  \centering
  \begin{tabular}{|c|r|r|r|r|r|} \hline 
\rowcolor[gray]{.9}  \textbf{Queries} & \textbf{Join+NoHeu} & \textbf{Join+Hue} & \textbf{Window+NoHeu} &\textbf{Window+Heu} &\textbf{Cost+Heu}\\ \hline 
SAgg 1G & 4.79 & 20.21 & 4.38 & 2.69 & \textbf{0.81} \\ \hline 
SAgg 10G & 44.06 & 524.78 & 42.62& 27.47 & \textbf{7.65} \\ \hline 
    \hline
TPC-H 1G & $+$173,053.17 & \textbf{199.62} & 173,041.27 & 250.18 &  235.79 \\ \hline 
    TPC-H 10G & $+$175,371.02 & \textbf{2,033.71} & 175,530.53 & 2,247.39 & 2,196.01 \\ \hline
  \end{tabular}
\end{minipage}
}
}
\hspace{-20mm}
{
  \resizebox{0.6\linewidth}{!}{
  \begin{minipage}{0.93\linewidth}
  \centering
  \begin{tabular}{||c|r|r|r|r|r|} \hline 
\rowcolor[gray]{.9}  \textbf{Queries} & \textbf{Join+NoHeu} & \textbf{Join+Heu} & \textbf{Window+NoHeu} & \textbf{Window+Heu} & \textbf{Cost+Heu}\\ \hline 
SAgg 1G & 1 & 3.927 & 0.946 & 0.600 & \textbf{0.261} \\ \hline 
SAgg 10G & 1 & 9.148 & 0.984 & 0.655 & \textbf{0.265} \\ \hline 
\hline
    TPC-H 1G & 1  & \textbf{0.187}  & 0.955 & 0.220 &  0.203\\ \hline 
TPC-H 10G & 1  & 0.198 & 0.975  & 0.180 & \textbf{0.174} \\ \hline
  \end{tabular}
\end{minipage}
}
}
\caption{Total (\textbf{Left}) and average runtime per query  (\textbf{Right}) relative to Join+NoHeu for \textit{SimpleAgg} and \textit{TPC-H} workloads}
\label{tab:overview-sum-avg-sagg-tpch}
\end{minipage}
\begin{minipage}{0.66\linewidth}
  \begin{minipage}[b]{0.5\linewidth}
  \includegraphics[width=1\linewidth,trim=0 50pt 0 100pt, clip]{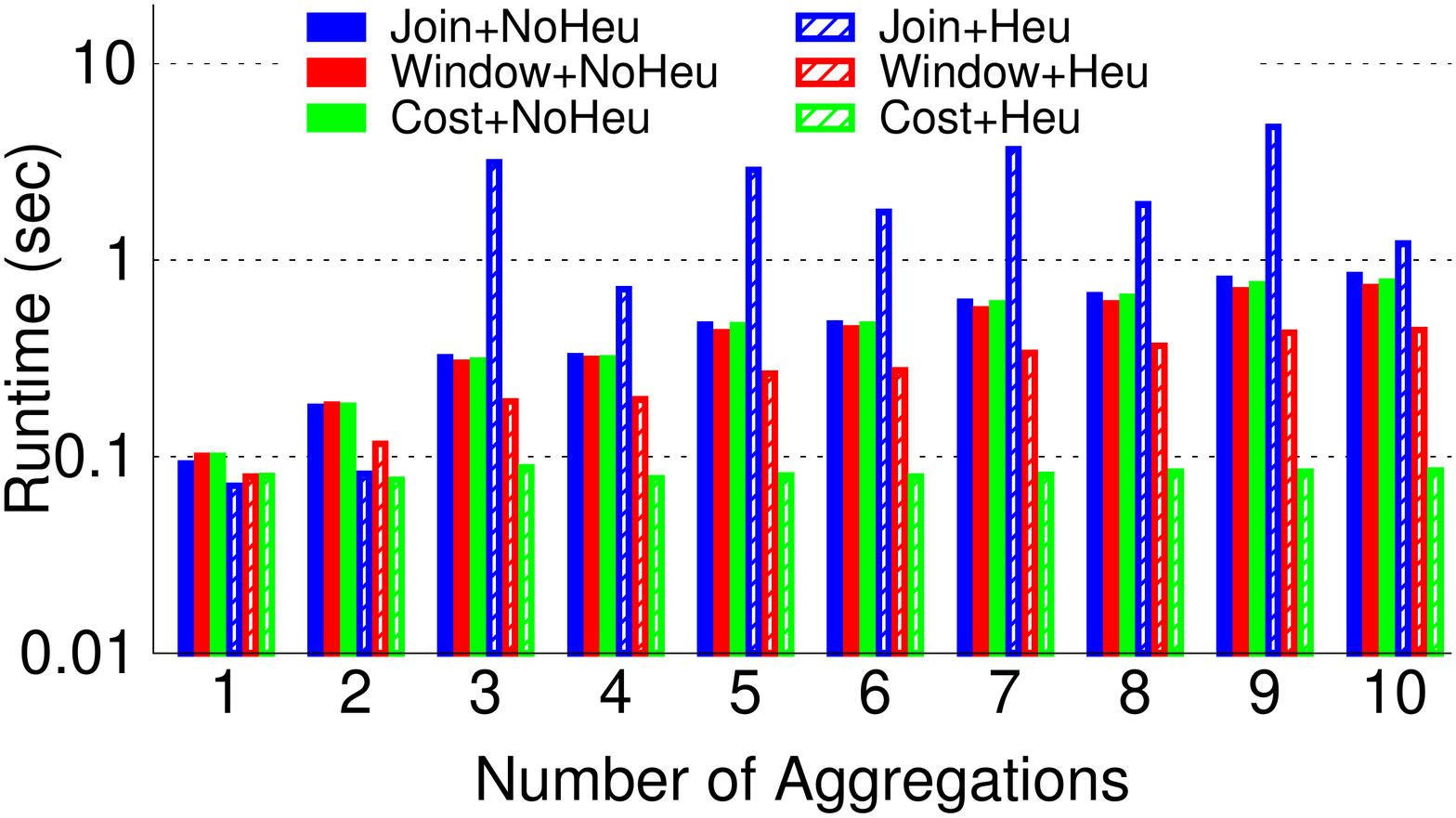}\\[-7mm]
  \caption{1GB \textit{SimpleAgg} runtime}
  \label{fig:simpleAgg-comb-1GB}  
  \end{minipage}
  \begin{minipage}[b]{0.5\linewidth}
  \includegraphics[width=0.97\linewidth,trim=0 50pt 0 90pt, clip]{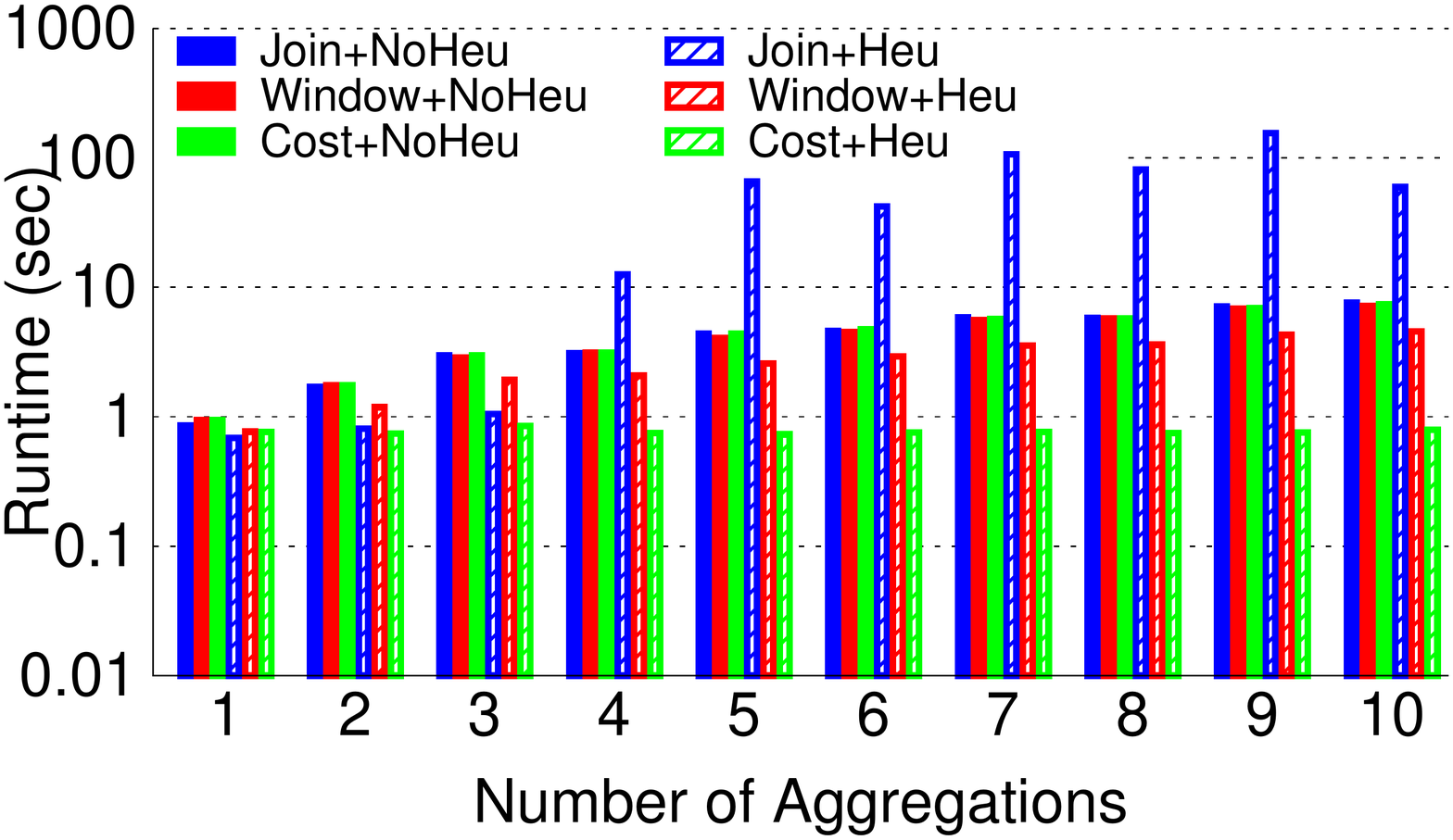}\\[-7mm] 
  \caption{10GB \textit{SimpleAgg} runtime}
  \label{fig:simpleAgg-comb-10GB}
  \end{minipage}

  \begin{minipage}[b]{0.5\linewidth}
  \includegraphics[width=1\linewidth,trim=0 50pt 0 100pt, clip]{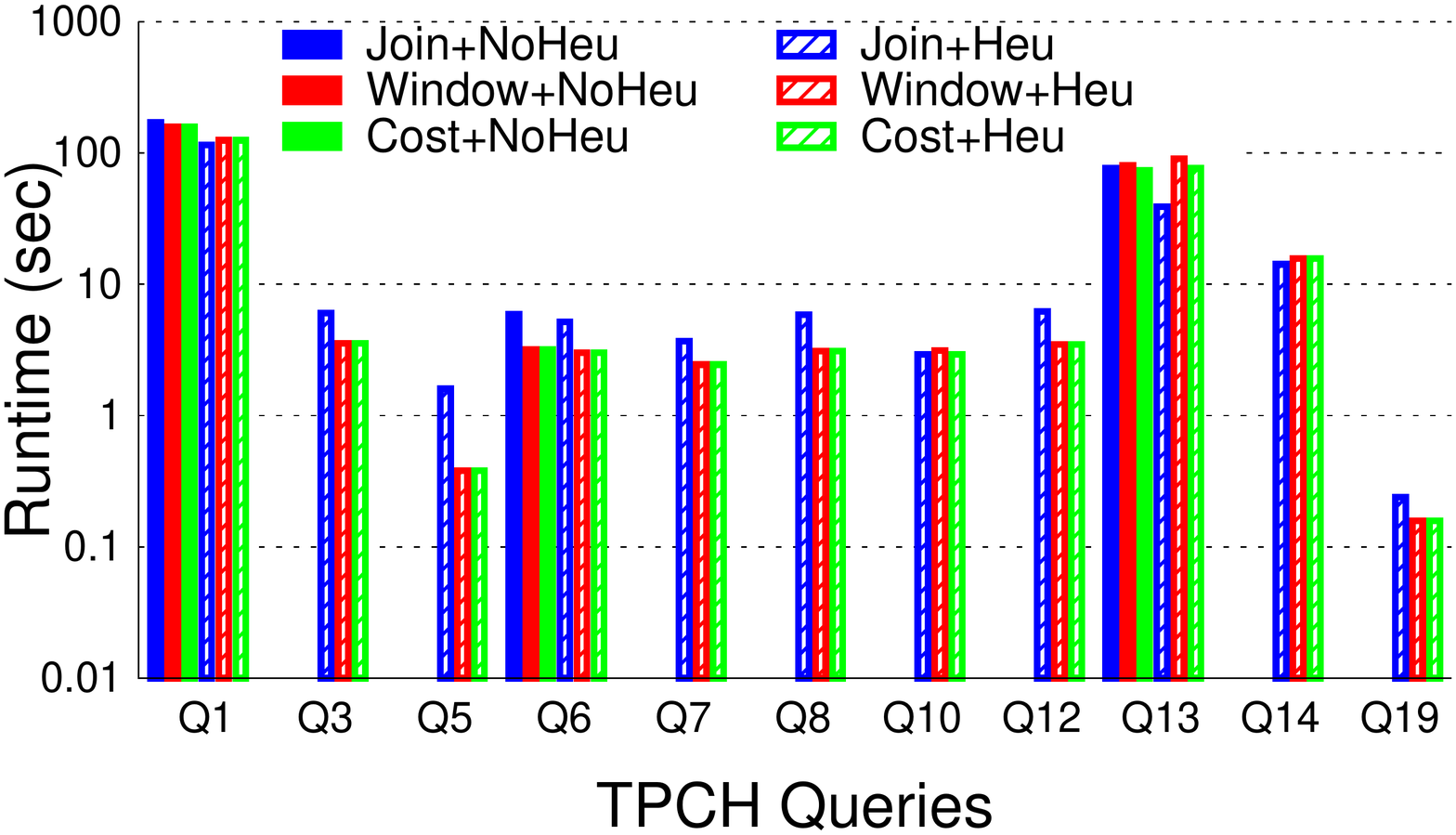}\\[-7mm]
  \caption{Runtime \textit{TPC-H} - 1GB}
  \label{fig:tpch-comb-1GB}  
  \end{minipage}
  \begin{minipage}[b]{0.5\linewidth}
  \includegraphics[width=1\linewidth,trim=0 50pt 0 100pt, clip]{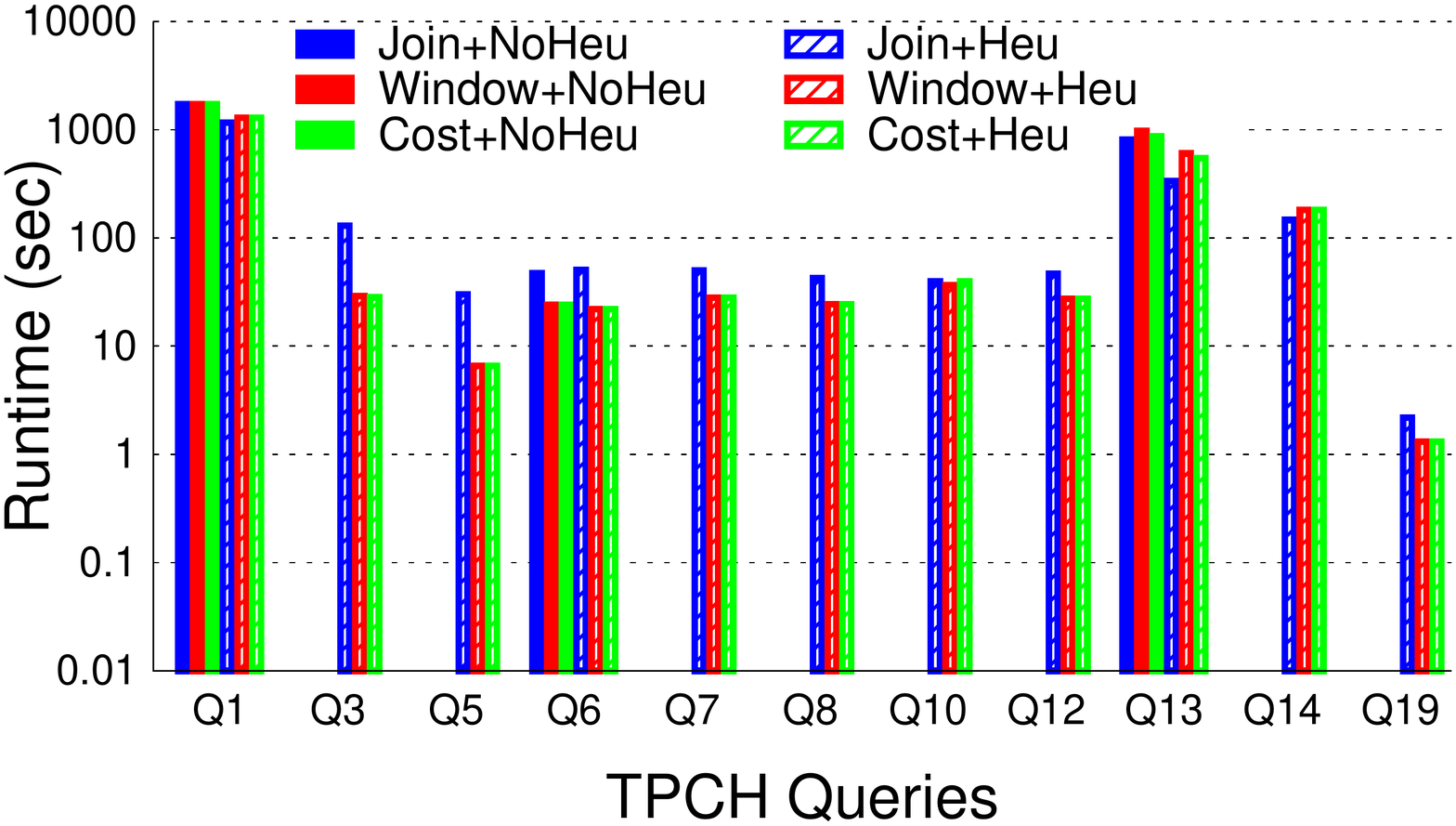}\\[-7mm] 
  \caption{Runtime \textit{TPC-H} - 10GB}
  \label{fig:tpch-comb-10GB}
\end{minipage}

  \begin{minipage}[b]{0.5\linewidth}
  \includegraphics[width=1\linewidth,trim=0 60pt 0 100pt, clip]{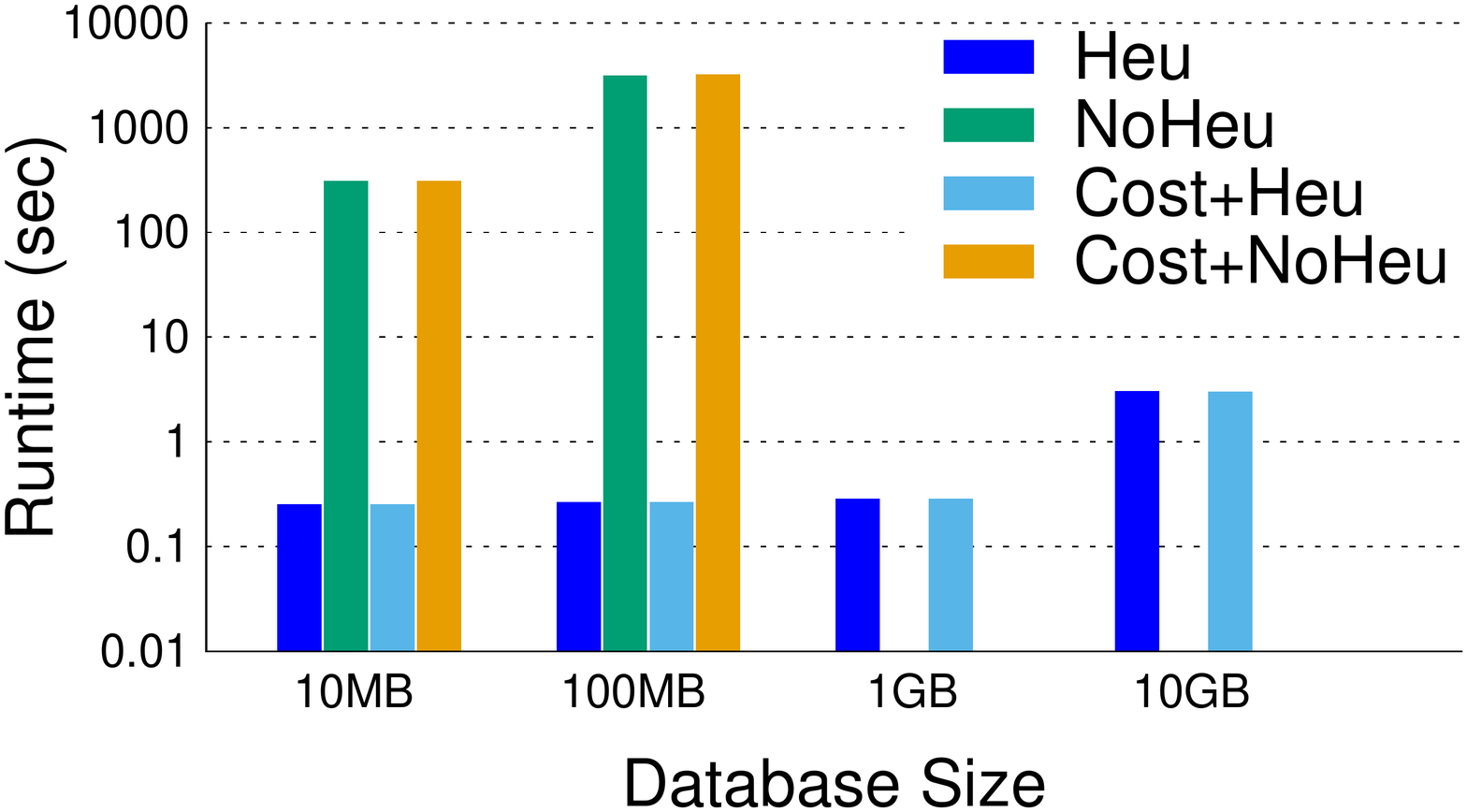}\\[-7mm]
  \caption{Provenance Export}
  \label{fig:export}  
  \end{minipage}
  \begin{minipage}[b]{0.5\linewidth}
  \includegraphics[width=1\linewidth,trim=0 60pt 0 100pt, clip]{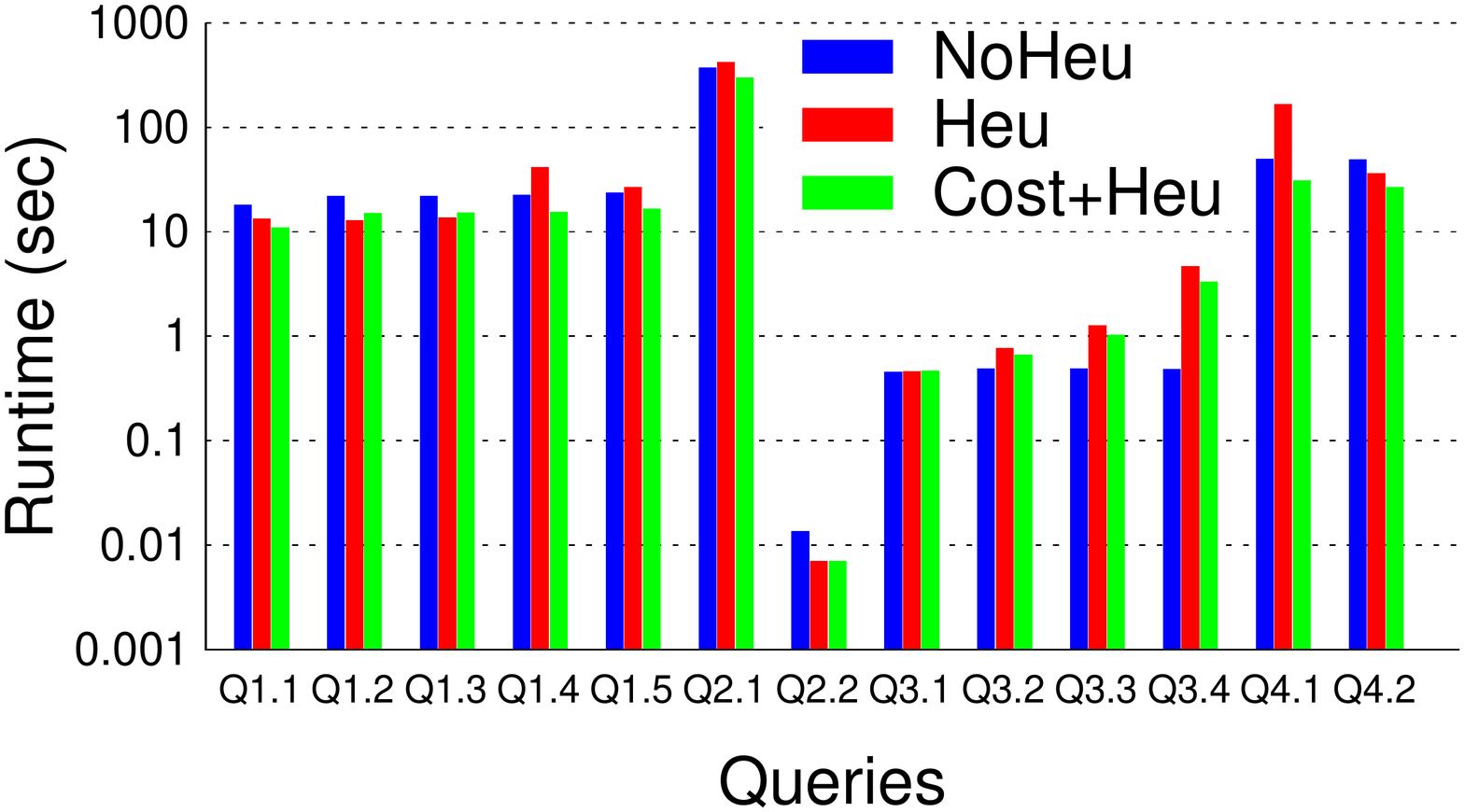}\\[-7mm] 
  \caption{Datalog Provenance}
  \label{fig:provenance-game}
  \end{minipage}

\begin{minipage}[b]{1\linewidth}  
\begin{minipage}[b]{0.48\linewidth}
\includegraphics[width=1\linewidth,trim=0 -10pt 0 30pt, clip]{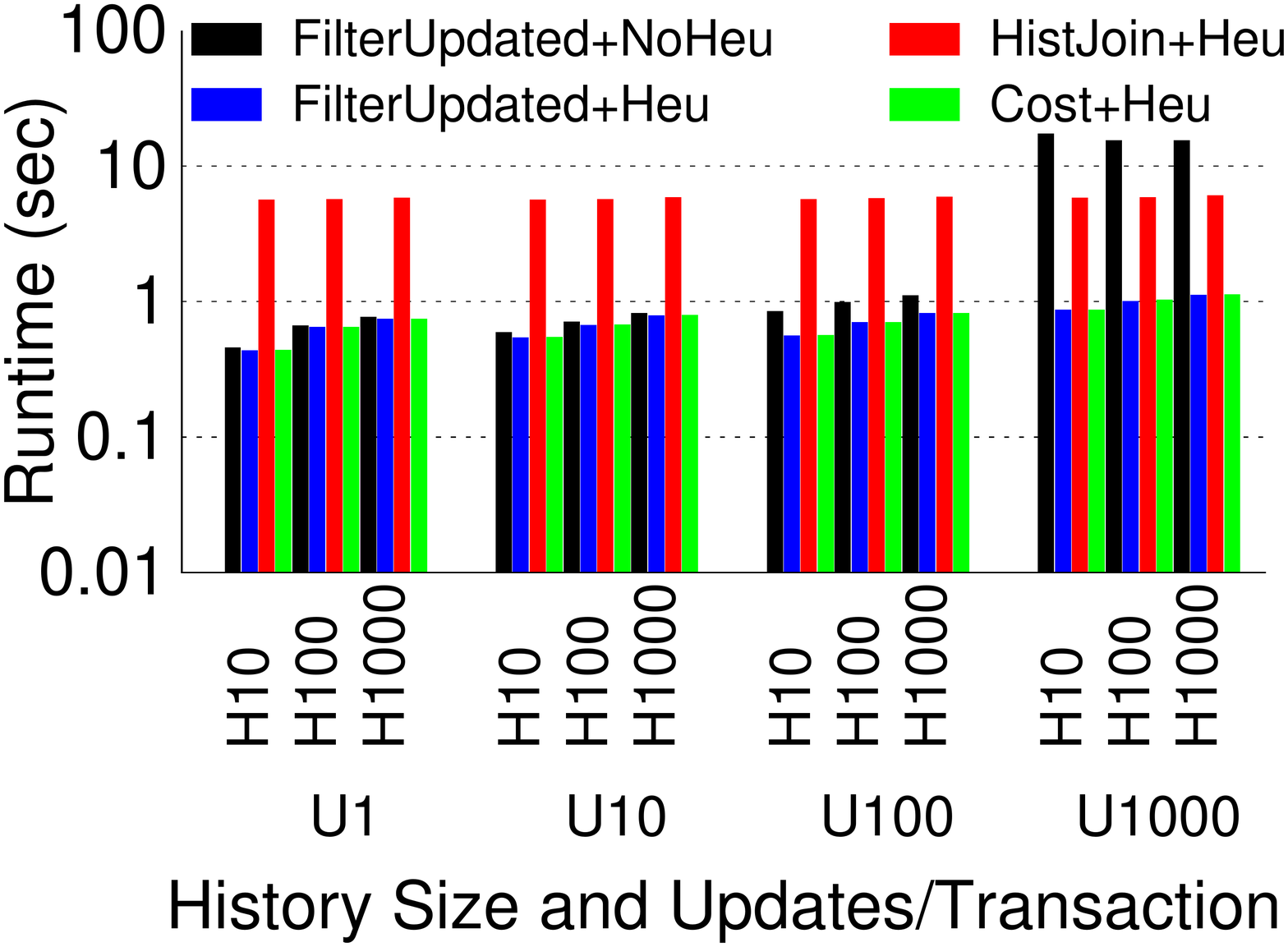}\\
\vspace{-7mm}
\end{minipage}
  \begin{minipage}[b]{0.48\linewidth}
\includegraphics[width=1\linewidth,trim=0 0pt 0 30pt, clip]{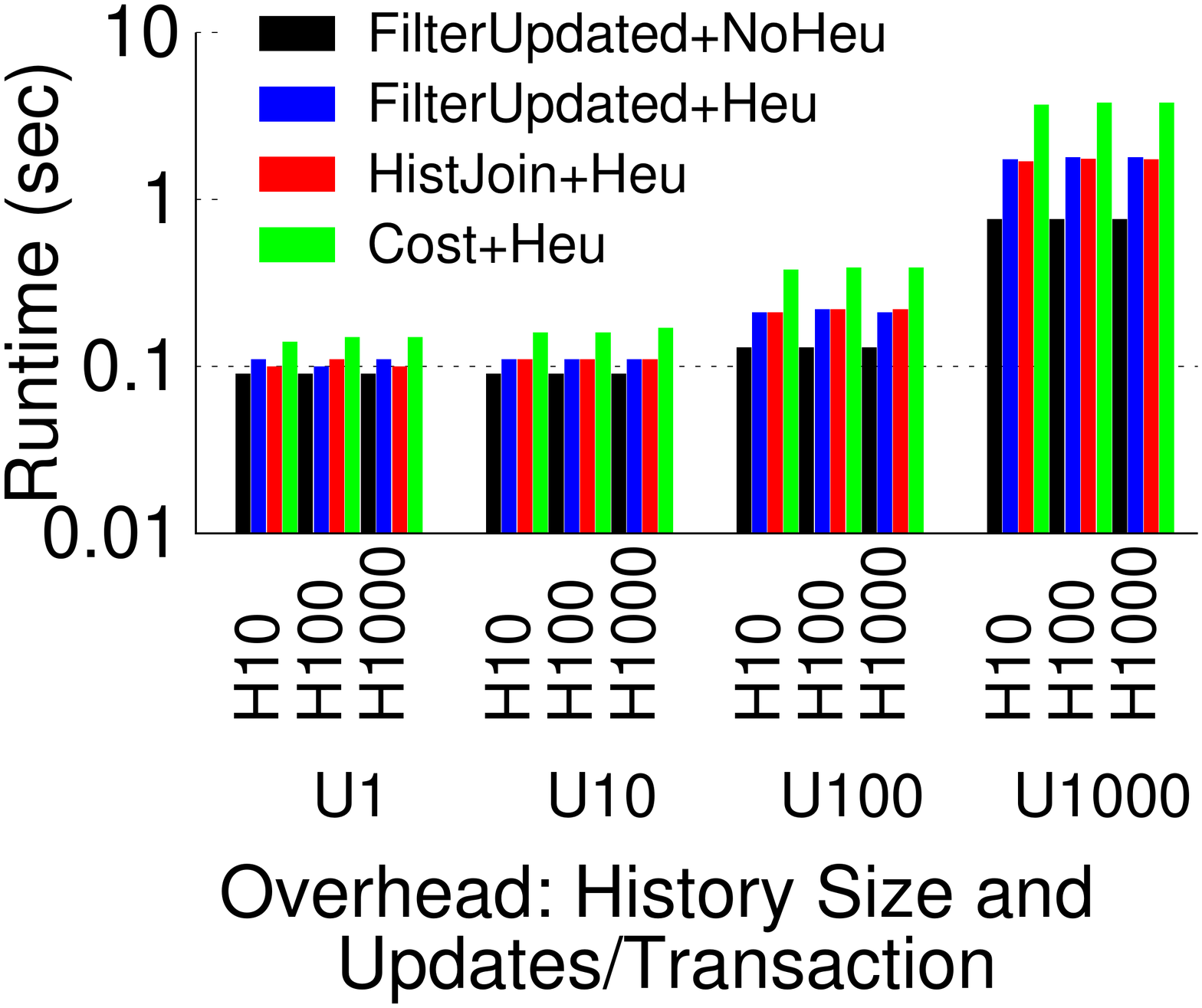}\\
  \vspace{-7mm}
\end{minipage}
  \caption{Transaction provenance - runtime and overhead}
  \label{fig:Transaction-provenance-runtime}
\end{minipage}

  \begin{minipage}{1\linewidth}
  \begin{minipage}[b]{0.49\linewidth}
  \includegraphics[width=1\linewidth,trim=0 50pt 0 100pt, clip]{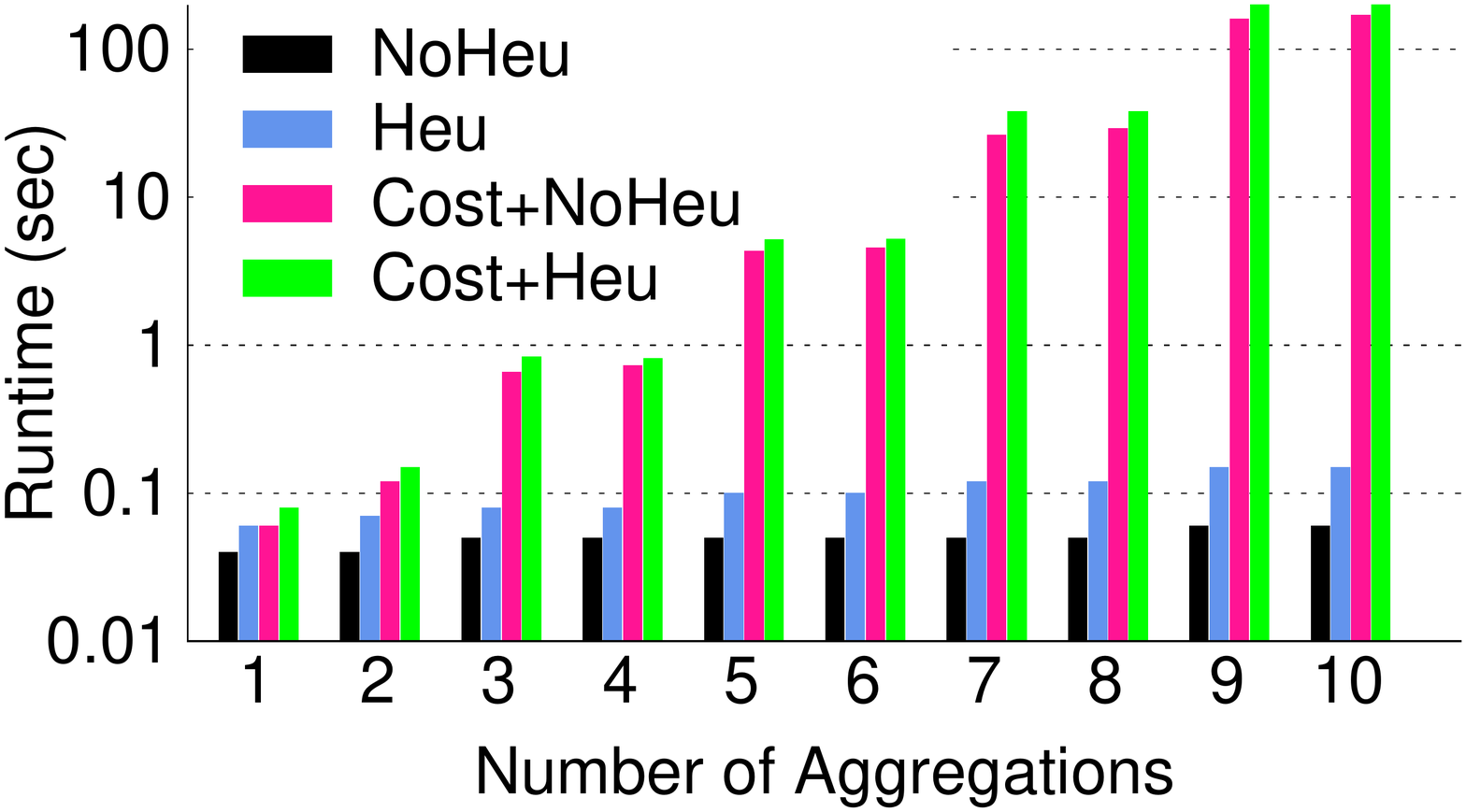}\\[-7mm]
  \label{fig:simple-agg-overhead}  
  \end{minipage}
  \begin{minipage}[b]{0.49\linewidth}
  \includegraphics[width=1\linewidth,trim=0 50pt 0 100pt, clip]{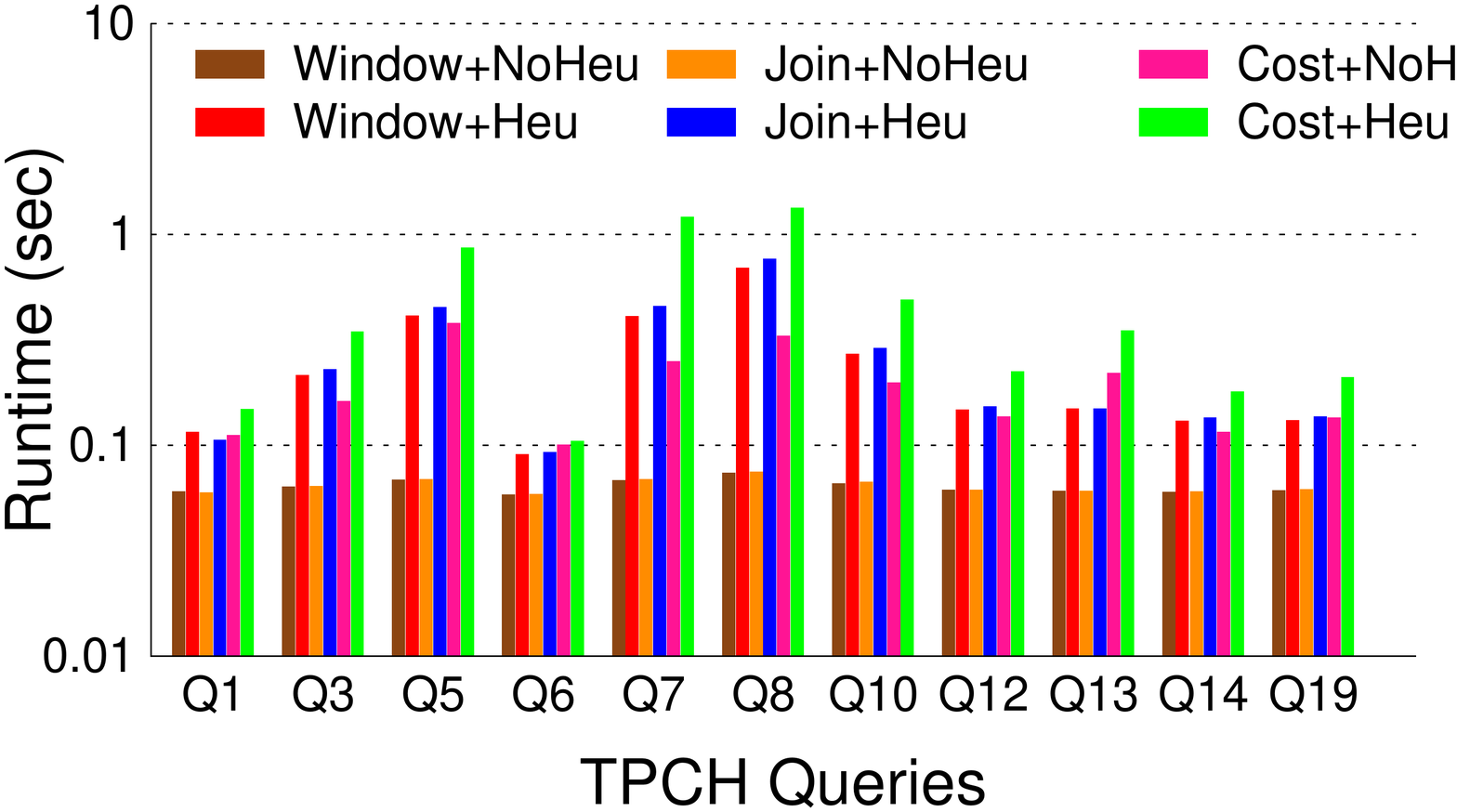}\\[-7mm] 
  \label{fig:tpch-overhead}
  \end{minipage}
  \caption{\textit{SimpleAgg} (\textbf{Left}) and \textit{TPC-H} (\textbf{Right}) Overhead}
  \label{fig:simagg-tpch-overhead}
\end{minipage}

\end{minipage}
\begin{minipage}{0.33\linewidth}
  \begin{minipage}{1\linewidth}
  \centering
\resizebox{1\linewidth}{!}{  
  \begin{minipage}{1.6\linewidth}
    \centering
  \begin{tabular}{|c|r|r|r|r|r|} \hline 
    \rowcolor[gray]{.9}   & \multicolumn{1}{|c|}{\textbf{NoHeu}}  & \multicolumn{1}{|c|}{\textbf{NoHeu}} & \multicolumn{1}{|c|}{\textbf{Heu}}  & \multicolumn{1}{|c|}{\textbf{Heu}}  & \multicolumn{1}{|c|}{\textbf{Cost+Heu}}\\
    \rowcolor[gray]{.9}   & \multicolumn{1}{|c|}{\textbf{(Worst)}} & \multicolumn{1}{|c|}{\textbf{(Best)}} & \multicolumn{1}{|c|}{\textbf{(Worst)}} & \multicolumn{1}{|c|}{\textbf{(Best)}} & \\
    \hline 
Min & 1.33 & 1.33 & \textbf{1.00} & \textbf{1.00} &  \textbf{1.00}\\ \hline 
Avg & 1,878.76 & 1,877.95 & 14.16 & 2.82 & \textbf{1.04} \\ \hline 
Max & $+$12,173.35& $+$12,173.35 & 68.63 & 7.80 & \textbf{1.18} \\ \hline 
  \end{tabular}
\end{minipage}
}\\%[-1mm]
\caption{Min, max, and avg runtime relative  to the best method per workload aggregated over all workloads.}
\label{tab:overview-all}
\end{minipage}

\begin{minipage}{1\linewidth}
\resizebox{0.7\linewidth}{!}{  
  \begin{minipage}{1.05\linewidth}
    \centering
    $\,$\\[2mm]
  \begin{tabular}{|c|r|r|r|r|} \hline 
  \rowcolor[gray]{.9}  \textbf{Queries} & \textbf{FilterUpdated} & \textbf{HistJoin}& \textbf{FilterUpdated} & \textbf{Cost}\\
  \rowcolor[gray]{.9}  \textbf{Queries} & \textbf{+NoHeu} & \textbf{+Heu} & \textbf{+Heu} & \textbf{+Heu}\\ \hline 
HSU/T & 55.11 & 69.50 &  \textbf{8.91} & \textbf{8.96} \\ \hline 
TAPU & 30.13 & 26.08 & \textbf{12.94} & \textbf{12.89} \\ \hline 
  \end{tabular}
\end{minipage}
}
\caption{Total workload runtime for transaction provenance}
\label{tab:overview-sum-transaction}
\end{minipage}

\begin{minipage}{1\linewidth}
 
\resizebox{0.7\linewidth}{!}{
  \begin{minipage}{0.95\linewidth}
  \begin{tabular}{|c|r|r|r|} \hline 
\rowcolor[gray]{.9}  \textbf{Queries} & \textbf{NoHeu} & \textbf{Heu} & \textbf{Cost+Heu} \\ \hline 
Export 10M & 310.49 &  \textbf{0.25}&  \textbf{0.25} \\ \hline 
Export 100M & 3,136.94 & \textbf{0.27} &  \textbf{0.26}  \\ \hline 
Export 1G & +21,600 & \textbf{0.28} &  \textbf{0.28}   \\ \hline 
Export 10G & +21,600 & \textbf{3.03} & \textbf{3.01}  \\ \hline \hline
Datalog Provenance & 583.96 & 736.50 & \textbf{437.75} \\ \hline 
  \end{tabular}
\end{minipage}
}\\%[-1mm]
\caption{Total runtime for export and Datalog workloads}
\label{tab:overview-sum-export-gp}
\end{minipage}

  \begin{minipage}[b]{1\linewidth}
  \includegraphics[width=1\linewidth,trim=0 0pt 0 20pt, clip]{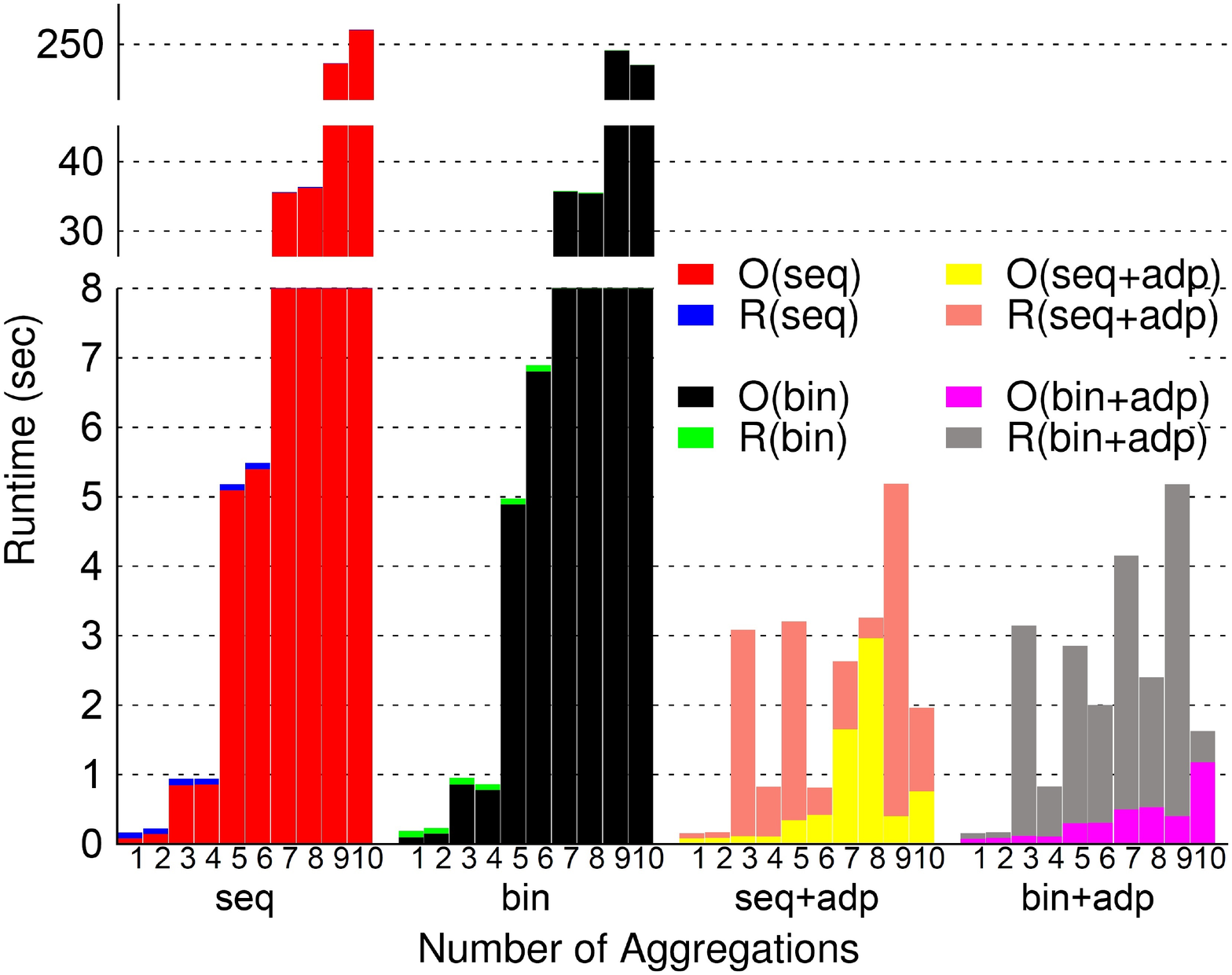}\\[-5mm]
  \caption{Optimization + runtime for Simple Agg. - 1GB}
  \label{fig:simAggs-all-stack}  
  \end{minipage}
  \begin{minipage}[b]{1\linewidth}
  \includegraphics[width=1\linewidth,trim=0 30pt 0 40pt, clip]{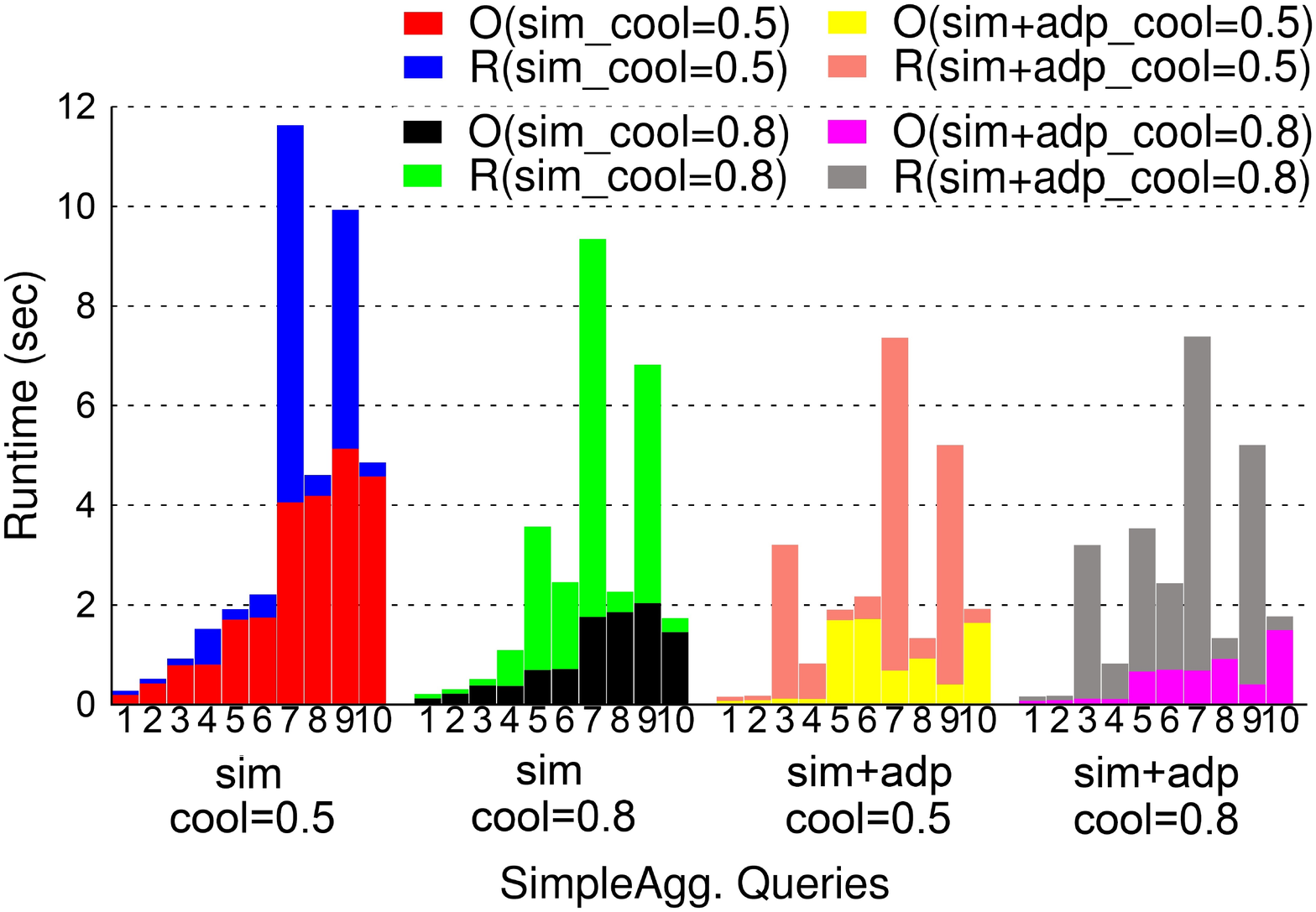}\\[-5mm] 
  \caption{Optimization + runtime for Simple Aggregation workload using Simulated Annealing - 1GB dataset}
  \label{fig:simple-agg-sim-annl}
  \end{minipage}

\end{minipage}
\end{figure*}

\subsection{Measuring Query Runtime}

\parttitle{Overview} 
Fig.~\ref{tab:overview-all} shows an overview of our results. We show the average  runtime of each method relative to the best method per workload, e.g., if \textit{Cost} performs best for a workload then its runtime is normalized to 1. We  use relative overhead instead of total runtime over all workloads, because some workloads are significantly more expensive than other. 
For the \textit{NoHeu} and \textit{Heu} methods we report the performance of the best and the worst option for each choice point. For instance, for the \textit{SimpleAgg} workload the performance is impacted by the choice of whether the \textit{Join} or \textit{Window} method is used to instrument aggregation operators with \textit{Window} performing better (\textit{Best}). Numbers prefixed by a $'+'$  indicate that for this method some queries of the workload did not finish within the maximum time we have allocated for each query. Hence, the runtime reported for these cases should be interpreted as a lower bound on the actual runtime.  Compared with other methods, \textit{Cost+Heu} is on average only 4\% worse than the best method for the workload and has 18\% overhead in the worst case. Note that we confirmed that in all cases where an inferior plan was chosen by our CBO that was because of inaccurate cost estimations by the backend database. If we heuristically choose the best option for each choice point, then this results in a  178\% overhead over CBO on average. However, achieving this performance requires that the best option for each choice point is known upfront. 
Using a suboptimal heuristic on average increases runtime by a factor of $\sim$ 14 compared to CBO. These results also confirm the critical importance of our PATs since deactivating these transformations 
increases runtime by a factor of $\sim$ 1,800 on average.

\parttitle{Simple Aggregation Queries} 
We measure the runtime of computing provenance for the \textit{SimpleAgg} workload over the 1GB and 10GB TPC-H datasets varying the number of aggregations per query. 
The total workload runtime is shown in Fig.~\ref{tab:overview-sum-avg-sagg-tpch} (the best method is shown in bold). We also show the average runtime per query relative to the runtime of \textit{Join+NoHeu}. 
 CBO significantly outperforms the other methods. The \textit{Window} method is more effective than the \textit{Join} method if a query contains multiple levels of aggregation. Our heuristic optimization improves the runtime of this method by about 50\%. The unexpected high runtimes of \textit{Join+Heu} are explained below. 
Fig.~\ref{fig:simpleAgg-comb-1GB} and \ref{fig:simpleAgg-comb-10GB} show the results for individual queries. Note that the y-axis is log-scale.
Activating \textit{Heu} 
improves performance in most cases, but the dominating factor for this workload  is choosing the right method for instrumenting aggregations. 
The exception is the \textit{Join} method, where runtime increases when \textit{Heu} 
is activated.  
We inspected the plans used by the backend DBMS for this case. A suboptimal join order was chosen for \textit{Join+Heu} based on inaccurate estimations of intermediate result sizes. For \textit{Join} the DBMS did not remove intermediate operators that blocked join reordering and, thus, executed the joins in the order provided in the input query which turned out to be more efficient in this particular case.
Consistently, 
CBO did either select \textit{Window} as the superior method (confirmed by inspecting the generated execution plan) or did outperform both \textit{Window} and \textit{Join} by instrumenting some aggregations using the \textit{Window} and others with the \textit{Join} method.\\[2mm]

\parttitle{TPC-H Queries} 
We compute the provenance of TPC-H queries to determine whether the results for simple aggregation queries
translate to more complex queries. 
The total workload execution time is shown in Fig.~\ref{tab:overview-sum-avg-sagg-tpch}. 
 We also show the average runtime per query relative to the runtime of \textit{Join+NoHeu}.
Fig.~\ref{fig:tpch-comb-1GB} and \ref{fig:tpch-comb-10GB} show the running time for each query for the 1GB and 10GB  datasets. Our CBO significantly outperforms the other methods with the only exception of \textit{Join+Heu}. Note that the runtime of  \textit{Join+Heu} for Q13 and Q14 is lower than \textit{Cost+Heu} which causes this effect.  
Depending on the dataset size and query, there are cases where the \textit{Join} method is superior and others where the \textit{Window} method is superior. The runtime difference between these methods is less pronounced than for \textit{SimpleAgg}, presenting a challenge for our CBO. 
Except for Q13 which contains 2 aggregations, all other queries only contain one aggregation. The CBO was able to determine the best method to use in almost all cases. Inferior choices are again caused by inaccurate cost estimates. 
We also show the results for \textit{NoHeu}. However, only three queries finished within the allocated time slot of 6 hours (Q1, Q6 and Q13). These results demonstrate the need for PATs and the robustness of our CBO method.

\parttitle{Transactions}
We next compute the provenance of transactions executed over the synthetic dataset using the techniques introduced in~\cite{AG17}. We vary the number of updates per
transaction ($U1$ up to $U1000$) and the size of the database's history
($H10$, $H100$, and $H1000$). 
The total workload runtime is shown in Fig.~\ref{tab:overview-sum-transaction}.  
The left graph in Fig.~\ref{fig:Transaction-provenance-runtime} shows detailed results. 
We compare the runtime of \textit{FilterUpdated} and \textit{HistJoin} (\textit{Heu} and \textit{NoHeu}) with \textit{Cost+Heu}. 
Our CBO choses \textit{FilterUpdated}, the superior option.

\parttitle{Provenance Export}
Fig.~\ref{fig:export} shows results for the provenance export workload for dataset sizes from 10MB up to 10GB (total workload runtime is shown in Fig.~\ref{tab:overview-sum-export-gp}). 
\textit{Cost+\-Heu} and \textit{Heu} both outperform \textit{NoHeu} demonstrating the key role of PATs for this workload. 
Our provenance instrumentations use window operators for enumerating intermediate result tuples which prevents the database from pushing selections and reordering joins.  \textit{Heu} outperforms \textit{NoHeu}, because it removes some of these window operators (PAT rule~\eqref{eq:window-function}).
 CBO does not further improve performance, because the export query does not apply aggregation or duplicate elimination, i.e., none of the choice points were hit.

\parttitle{Why Questions for Datalog}
The approach~\cite{LS16} we use for generating provenance for 
Datalog queries with negation may produce queries which contain a large amount of duplicate elimination operators and shared subqueries. The heuristic application of PATs would remove all but the top-most duplicate elimination operator (rules~\eqref{eq:duplicate-remove} and~\eqref{eq:duplicate-remove-set}  in Fig.~\ref{fig:algebraic-rules}). However, this is not always the best option, because a duplicate elimination, while adding overhead, can reduce the size of inputs for downstream operators. Thus, as mentioned before we consider the application of Rule 2 as an optimization choice in our CBO.
The total workload runtime and results for individual queries are shown in Fig.~\ref{tab:overview-sum-export-gp} and Fig.~\ref{fig:provenance-game}, respectively. 
Removing all redundant duplicate elimination operators (\textit{Heu}) is not always better than removing none (\textit{NoHeu}). Our CBO (\textit{Cost+Heu}) has the best performance in almost all cases by choosing a subset of duplicate elimination operators to remove. 

\begin{figure}
\begin{minipage}{1\columnwidth}
\captionsetup{font=small}

\centering
  \resizebox{0.9\linewidth}{!}{
  \begin{minipage}{1.1\linewidth}

\label{tab:one-level}
  \begin{tabular}{|c| c || r  r | r  r |} \cline{3-6} 
\multicolumn{2}{c|}{} &  \multicolumn{2}{c|}{\cellcolor{lgrey}\textbf{Normal (NoHeu)}} &  \multicolumn{2}{c|}{\cellcolor{lgrey}\textbf{Fac (Heu)}} \\ \hline
\rowcolor[gray]{.9} \textbf{Query} &    \textbf{Size} &  \textbf{Prov} & \textbf{Xml} & \textbf{Prov} & \textbf{Xml} \\ \hline 
\multirow{4}{*}{\textbf{Q1}} &    10MB & 0.0991 & 0.0538  & 0.1062  & \textbf{0.0410}   \\  
&    100MB &  0.8481  & 0.37629  & 0.9302  & \textbf{0.2582}     \\   
&    1GB &  8.4233  & 6.9261  & 8.9676   & \textbf{4.8069}  \\   
&   10GB & 122.5400  & 95.1900  & 150.2800 & \textbf{77.2800}   
    \\ \hline
\multirow{4}{*}{\textbf{Q2}}&  10MB & 0.1876 & 0.0584 & 0.1386  & \textbf{0.0352} \\ 
&  100MB & 17.5624  &  \cellcolor{red!25}error & 12.4271  & \textbf{0.2627} \\ 
&  1GB & +3600.0000  & \cellcolor{red!25}error  & 1329.0000  & \textbf{5.3133}  \\
&  10GB & +3600.0000  & \cellcolor{red!25}error  & +3600.0000  & \textbf{86.9500} \\ \hline
\multirow{2}{*}{\textbf{Q3}}& 10MB & 0.4312 & 0.1357 
                                                                                      & 0.4414  & \textbf{0.0918}
  \\ 
  
&  100MB & 42.8268  & 35.9454  & 
                                                                       47.4869  & \textbf{5.2949} 
  
  \\ \hline
  
  \end{tabular}

\end{minipage}
}
\end{minipage}\\[-2mm]
\caption{Runtime for Factorization Queries (Q1 to Q3)}
\label{tab:factor-runtimes}
\end{figure}

\parttitle{Factorizing Provenance}
We compare the runtime of Pipeline L1 (\textbf{Prov}) 
against P5 (\textbf{XML}) which produces a nested representation of provenance. We test the effect of the heuristic application of aggregation push-down (Rules 5 and 8 from Fig.~\ref{fig:algebraic-rules}) to factorize provenance.  Fig.~\ref{tab:factor-runtimes} shows the runtimes for the factorization workload (queries Q1 to Q3). In general, \textit{XML} outperforms \textit{Prov} since it reduces the number of query results (rows) and total size of the results in bytes. 
\textit{Prov} does not benefit much from aggregation pushdown, because this does not affect the size of the returned provenance. This optimization improves performance for \textit{XML}, specifically for larger database instances. In summary, \textit{XML+Heu} is the fasted method in all cases, outperforming \textit{Prov+Heu} by a factor of up to 250. Note that DBMS X does not support large XML values in certain query contexts that require sorting. A query that encounters such a situation will fail with an error message (marked in red in Fig.~\ref{tab:factor-runtimes}).

\parttitle{Set vs. Bag Coalescing}
We also run sequenced temporal queries comparing \textit{Heu} (use set-coalesce) and \textit{NoHeu}. %Query Q1 produces a small result. The runtimes are 4.85s \textit{(Heu)} and 5.27s for \textit{(NoHeu)}. \textit{Heu} improves the runtime by $\sim$10\%. Query Q2 returns 2.8M tuples. The runtimes are 35.38s \textit{(Heu)} and 64s \textit{(NoHeu)}. Choosing the right coalescing operator is more important for this query ($\sim$45\% improvement). 
The result set of Query Q1 is small. Thus, using set-coalesce (\textit{Heu}) only improves performance  by $\sim$10\%.  The runtimes are 4.85s \textit{(Heu)} and 5.27s \textit{(NoHeu)}.  % improves the runtime by $\sim$10\%.
+Choosing the right coalescing operator is more important for Query Q2 which returns 2.8M tuples (35.38s for \textit{Heu} and 64s for \textit{NoHeu}). % ($\sim$45\% improvement). 

\subsection{Optimization Time and CBO Strategies}

\parttitle{Simple Aggregation}
We show the optimization time of several methods in Fig.~\ref{fig:simagg-tpch-overhead} (left). 
Heuristic optimization (\textit{Heu}) results in an overhead of $\sim$50ms compared to the time of compiling a provenance request without optimization (\textit{NoHeu}). This overhead is only slightly affected by the number of aggregations. 
The overhead is higher for \textit{Cost} because we have 2 choices for each aggregation, i.e., the plan space size is $2^{i}$ for $i$ aggregations. 
We have measured where time is spend during CBO 
and have determined that the majority of time is spend in costing SQL queries using the backend DBMS. Note that even though we did use the exhaustive search space traversal method for our CBO,  
the sum of optimization time and runtime for \textit{Cost} is still less than this sum for the \textit{Join} method for some queries.

\parttitle{TPC-H Queries}
In Fig.~\ref{fig:simagg-tpch-overhead} (right), we show the optimization time for TPC-H queries. Activating PATs results in $\sim$50ms overhead in most cases with a maximum overhead of $\sim$0.5s.  This is more than offset by the gain in query performance (recall that with \textit{NoHeu} only 3 queries finish within 6 hours for the 1GB dataset). CBO takes up to 3s in the worst case.

\parttitle{CBO Strategies}
We now compare query runtime and optimization time for the CBO search space traversal strategies introduced in Sec.~\ref{sec:cbo}.   
Recall that the \textit{sequential-leaf-traversal \textbf{(seq)}} and
\textit{binary-search-traversal \textbf{(bin)}} strategies are both exhaustive strategies. 
 \textit{Simulated Annealing \textbf{(sim)}} is the metaheuristic as introduced in Sec.~\ref{sec:traversal-strategies}.
We also combine these strategies with our \textit{adaptative \textbf{(adp)}} heuristic that limits time spend on optimization based on the expected runtime of the best plan found so far. 
 Fig.~\ref{fig:simAggs-all-stack} shows the total time (runtime (\textbf{R}) + optimization time (\textbf{O})) for the simple aggregation workload. We use this workload because it contains some queries with a large plan search space.  
 Not surprisingly,  the runtime of queries produced by \textit{seq} and \textit{bin} is better than \textit{seq+adp} and \textit{bin+adp} as \textit{seq} and \textit{bin} traverse the whole search space. However, their total time is much higher than \textit{seq+adp} and \textit{bin+adp} for larger numbers of aggregations. 
Fig.~\ref{fig:simple-agg-sim-annl} shows the total time of \textit{sim} with and without the \textit{adp} strategy for the same workload. We used cooling rates (\textit{cr}) of 0.5 and 0.8 because they resulted in the best performance. 
The \textit{adp} strategy improves the runtime in all cases except for the query with 3 aggregations.  
We also evaluated the effect of the \textit{cr} and \textit{c} parameters for simulated annealing \textit{\textbf{(sim)}} and its  adaptive version \textit{\textbf{(sim+adp)}}  by varying the \textit{cr}  (0.1 $\sim$ 0.9) and \textit{c} value (1, 100 and 10000) for Simple Aggregation query Q10 over the 1GB dataset. The choice of parameter \textit{c} had negledible impact. Thus, we focus on \textit{cr}. Tab.~\ref{tab:q10-sim} shows the optimization time for \textit{c}=10000 for these two methods. The query execution time was 0.27s for \textit{cr} (0.1 $\sim$ 0.8) and 1.2s for \textit{cr} 0.9. The total cost is minimized (2.0+0.27=2.27s) when for \textit{cr} 0.8. \textit{sim+adp} further reduces the optimization time to roughly 1.64s independent of the \textit{cr}.

\begin{table}
\begin{minipage}{0.9\linewidth}
\captionsetup{font=small}

  \resizebox{0.83\linewidth}{!}{
  \begin{minipage}{0.9\linewidth}
\centering

  \begin{tabular}{|c|c|c|c|c|c|c|c|c|c|} \cline{2-10}
\multicolumn{1}{c}{}   &  \multicolumn{9}{|c|}{\textbf{cooling rate (cr)}} \\ \hline
  \rowcolor[gray]{.9} \textbf{Method}  &  \textbf{0.1} & \textbf{0.2} & \textbf{0.3} & \textbf{0.4} & \textbf{0.5} & \textbf{0.6} & \textbf{0.7} & \textbf{0.8} & \textbf{0.9} \\ \hline 
    Sim & 28.9 & 14.6 & 8.6  & 6.5  & 4.6  & 3.6 & 2.6 & 2.0 &1.5  \\ \hline
    Sim+Adp & 1.64 & 1.64 & 1.64  & 1.64  & 1.64  & 1.64 & 1.63 & 1.62 &1.5  \\ \hline

  \end{tabular}  
\end{minipage}
}
\end{minipage}\\[-2mm]
\caption{Parameter Sensitivity for Simulated Annealing}
\label{tab:q10-sim}
\end{table}

 \section{Conclusions and Future Work}\label{sec:conclusion}

We present the first cost-based optimization framework for provenance instrumentation and its implementation in GProM. 
Our approach supports both heuristic and cost-based choices and is applicable to a wide range of instrumentation pipelines.
We study provenance-specific algebraic transformations (PATs) and instrumentation choices (ICs), i.e., alternative ways of realizing provenance capture.
We demonstrate experimentally 
that our optimizations improve performance by several orders of magnitude for diverse provenance tasks.                                       
An interesting avenue for future work is to incorporate CBO with provenance compression. 

{\footnotesize \parttitle{Acknowledgements} This work was supported by NSF Award \#1640864. Opinions, findings and conclusions expressed in this material are those of the authors and do not necessarily reflect the views of the National Science Foundation.}

\appendices

\section{Pipelines}\label{sec:pipelines}

Fig.~\ref{fig:all-pipelines} shows all pipelines described in this paper. We present additional details about pipeline L5 and L6 (Fig.~\ref{fig:xml-rewrite-approach} and~\ref{fig:temporal-rewrite-approach}) in the following.

\begin{figure}[t]
  \centering
\subfloat[
Provenance is captured using an annotated version of relational algebra which is first translated into relational algebra over a relational encoding of annotated relations and then into SQL code. 
]{  \label{fig:general-rewrite-approach}
  \begin{minipage}{0.96\linewidth}
  \centering
\includegraphics[width=1\columnwidth]{figs/gen_rewr_approach.pdf}
\end{minipage}
}\\[-0.5mm]
\subfloat[
In addition to the steps of \textbf{(a)}, this pipeline 
uses \emph{reenactment}~\cite{AG17} to compile annotated updates into annotated queries. 
]{\label{fig:trans-rewrite-approach}
  \begin{minipage}{0.98\linewidth}
  \centering
  \includegraphics[width=0.9\columnwidth]{figs/gen_reenact_approach.pdf}
\end{minipage}
}\\[-0.5mm]
\subfloat[Computing provenance graphs for Datalog queries~\cite{LS16} based on a rewriting called \emph{firing rules}. The instrumented Datalog program is first compiled into relational algebra and then into SQL.]{  \label{fig:DL-rewrite-approach}
\begin{minipage}{0.98\linewidth}
  \centering
\includegraphics[width=1\columnwidth]{figs/gen_dl_approach.pdf}
\end{minipage}  
}\\[-2.2mm]
\subfloat[Translating the relational provenance encoding produced by L1 into PROV-JSON.]{  \label{fig:expor-rewrite-approach}
\begin{minipage}{0.98\linewidth}
  \centering
\includegraphics[width=1\columnwidth]{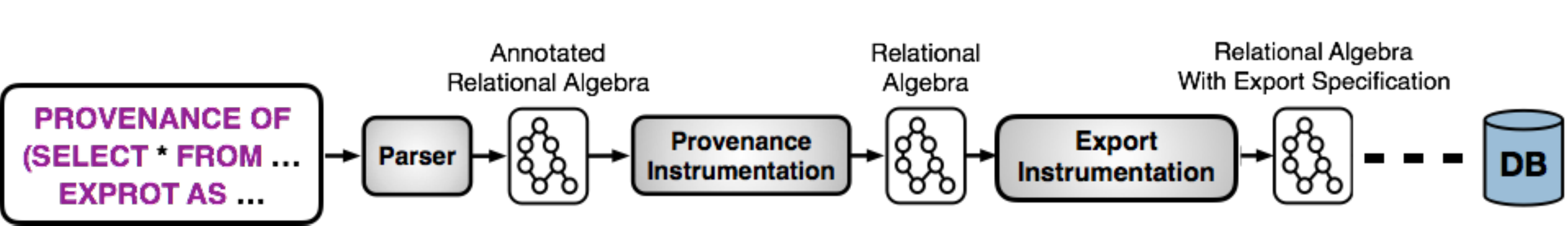}
\end{minipage}  
}\\[-0.5mm]
\subfloat[Capturing provenance for queries using L1 and encoding it in a nested representation (XML).]{  \label{fig:xml-rewrite-approach}
\begin{minipage}{0.98\linewidth}
  \centering
\includegraphics[width=1\columnwidth]{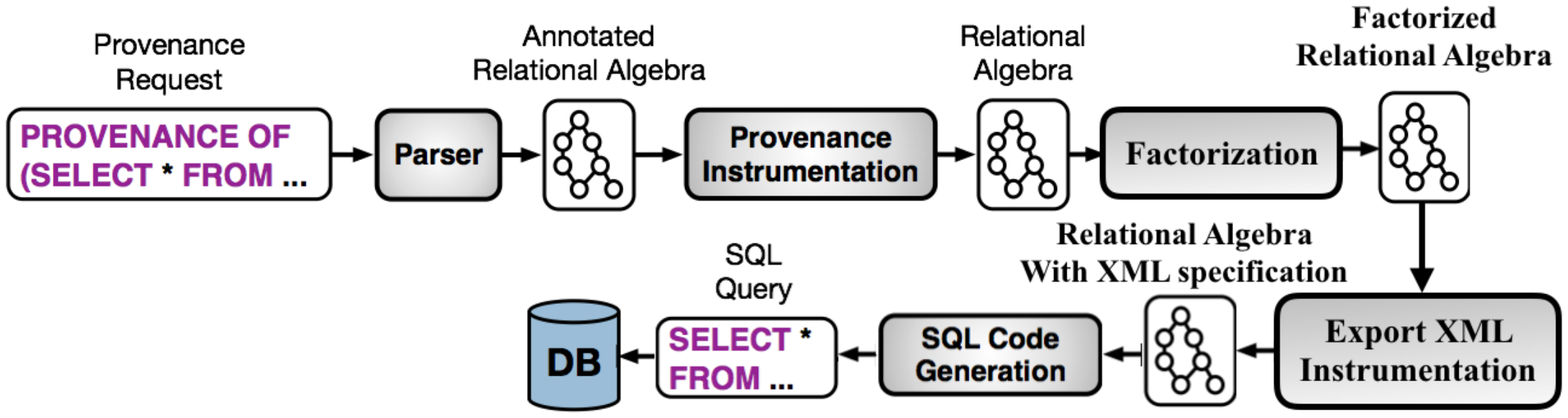}
\end{minipage}  
}\\[-0.5mm]
\subfloat[Translating temporal queries with sequenced semantics~\cite{DBLP:reference/db/BohlenJ09} into SQL queries over an interval encoding of temporal data.]{  \label{fig:temporal-rewrite-approach}
\begin{minipage}{0.98\linewidth}
  \centering
\includegraphics[width=1\columnwidth]{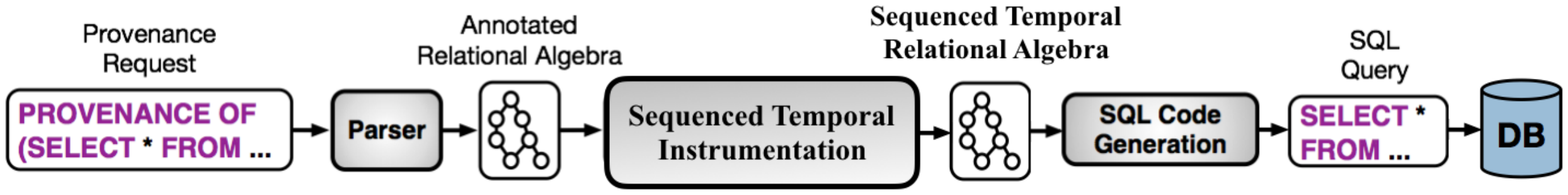}
\end{minipage}  
}\\

\caption{Instrumentation: \textbf{(a)} SQL, \textbf{(b)} transactions, \textbf{(c)} Datalog, \textbf{(d)} provenance export, \textbf{(e)} Factorized Provenance and \textbf{(f)} Sequenced Temporal Queries.}
\label{fig:all-pipelines}

\end{figure}

\begin{figure*}[t]
\centering

\begin{minipage}{0.2\linewidth}
  \centering
  \subfloat[Emp]{\label{fig:l5-ex-emp}
    \centering
            \begin{minipage}{1.0\linewidth}
              \centering
              \begin{tabular}{|c|c|c}
                \cline{1-2}
		\chead name & \chead cid & \underline{$\mathbb{N}[X]$} \\ \cline{1-2}
		 Peter & 1 & $x_1$\\  \cline{1-2}
		 Alice & 1 & $x_2$ \\   \cline{1-2}
 		 Bob & 1 & $x_3$ \\ \cline{1-2}  
    \end{tabular}
  \end{minipage}
  } 
\end{minipage}
\begin{minipage}{0.2\linewidth}
  \centering
  \subfloat[Company]{\label{fig:l5-ex-company}
    \centering
            \begin{minipage}{1.0\linewidth}
              \centering
              \begin{tabular}{|c|c|c} \cline{1-2}
		\chead cid & \chead cname & \underline{$\mathbb{N}[X]$} \\ \cline{1-2}
		 1 & IBM & $y$\\  \cline{1-2}
    \end{tabular}
  \end{minipage}
  }
\end{minipage}
\vspace{5mm}
\begin{minipage}{0.3\linewidth}
  \centering
  \subfloat[Query result (with provenance)]{\label{fig:ex-L5-q-result}
    \centering
            \begin{minipage}{1.0\linewidth}
              \centering
              \begin{tabular}{|c|l}
                \cline{1-1}
                 \chead cname &  \underline{$\mathbb{N}[X]$} \\ \cline{1-1}
		 IBM & $x_1 \cdot y + x_2 \cdot y + x_3 \cdot y$\\  \cline{1-1}
    \end{tabular}
  \end{minipage}
  }
\end{minipage}
\newsavebox{\xmlquery}
\begin{lrbox}{\xmlquery}
\begin{lstlisting}[style=psql]
SELECT cname
FROM Emp NATURAL JOIN Company
GROUP BY cname;
\end{lstlisting}
\end{lrbox}
\begin{minipage}{0.45\linewidth}
\subfloat[Query $Q_{comp}$ returning companies with at least one employee.]{\label{fig:ex-L5-q}
\hspace{1.5cm}\usebox{\xmlquery}\hspace{1.5cm}
}
\end{minipage}
\vspace{4mm}

\begin{minipage}{0.68\linewidth}
  \centering
  \subfloat[Provenance of query $Q_{comp}$ generated by Pipeline L1]{\label{fig:ex-L5-pipeline-L1-result}
    \centering
            \begin{minipage}{1.0\linewidth}
              \centering
\begin{tabular}{ |c||c|c||c|c|}

\multicolumn{1}{|c||}{\bf result} & \multicolumn{2}{c||}{\bf prov. Emp} & \multicolumn{2}{c||}{\bf prov. Company} \\
 \chead cname & \chead P(Emp,name) & \chead P(Emp,cid) & \chead P(Company,cid) & \chead P(Company,cname) \\ \hline
 IBM & Peter & 1 & 1 & IBM\\   \hline
 IBM & Alice & 1 & 1 & IBM\\   \hline
 IBM  & Bob & 1 & 1 & IBM\\   \hline
    \end{tabular}
  \end{minipage}
  }
 
\end{minipage}
\hspace{2mm}
\begin{minipage}{.26\linewidth}
  \centering
  \subfloat[Provenance of query $Q_{comp}$ as generated by Pipeline L5. The full XML document is shown in Fig.~\ref{fig:ex-L5-pipeline-L1-result-xml-no-fact}]{\label{fig:ex-L5-pipeline-L5-result}
    \centering
\begin{minipage}{1.0\linewidth}
              \centering
\begin{center}
\vspace{10mm}
\begin{tabular}{ |c|c| }
\hline
  \chead cname &  \chead prov  \\  \hline
 IBM & \lstinline[style=xmlstyle]!<add> ...! \\  \hline
\end{tabular}
\end{center}
  \end{minipage}
  }
\end{minipage}

\begin{minipage}{1.0\linewidth}
\begin{minipage}[t]{.45\linewidth}

\newsavebox{\xmlresultunfac}
\begin{lrbox}{\xmlresultunfac}
\begin{lstlisting}[style=xmlstyle]
<add>
  	<mult> 
  		<Emp> Peter, 1 </Emp>
  		<Company> 1, IBM </Company>     
  	</mult>
  	<mult> 
  		<Emp> Alice, 1 </Emp>
  		<Company> 1, IBM </Company>     
  	</mult>
  	<mult> 
  		<Emp> Bob, 1 </Emp>
  		<Company> 1, IBM </Company>     
  	</mult>
</add>
\end{lstlisting}
\end{lrbox}

\subfloat[XML representation of ``flat'' provenance polynomial $x_1 \cdot y + x_2 \cdot y + x_3 \cdot y$ generated by running Pipeline L5.]{\label{fig:ex-L5-pipeline-L1-result-xml-no-fact}
\hspace{1cm}\usebox{\xmlresultunfac}\hspace{1cm}
}
\end{minipage}
\hspace{7mm}
\begin{minipage}[t]{.5\linewidth}
\centering
\newsavebox{\xmlresultfac}
\begin{lrbox}{\xmlresultfac}
\begin{lstlisting}[style=xmlstyle]
<add>
	<mult>
		<add> 
			<Emp> Peter, 1 </Emp>
			<Emp> Alice, 1 </Emp>   
			<Emp> Bob, 1 </Emp> 
		</add>
		<Company> 1, IBM </Company>     
	</mult>
</add>
\end{lstlisting}
\end{lrbox}

\subfloat[XML representation of the factorized polynomial $(x_1 + x_2 + x_3) \cdot y$ generated by  pushing the group-by to the input employee relation based on PAT rule~(\ref{eq:group-by-push-down-preserving-join}).]{\label{fig:ex-L5-pipeline-L1-result-xml-fact}
\hspace{1cm}\usebox{\xmlresultfac}\hspace{1cm}
}
\end{minipage}
\end{minipage}

\caption{Provenance for a query $Q_{comp}$ generated by Pipeline L1 (flat encoding of provenance polynomials) and Pipeline L5 (flat and factorized encoding of provenance polynomials represented as an XML document).}
\label{fig:p5-exp}

\end{figure*}

\subsection{Pipeline L5 - Factorized Provenance}
\label{sec:pipel-l5-fact}

Pipeline L5 represents the provenance of a tuple in the result of a query as a single XML document. The XML document storing the provenance of a tuple is a nested encoding of a provenance polynomial. The result of pipeline L5 is an SQL query which computes the result of the input query and uses an additional column to store the XML document representing the provenance of a tuple. We use SQL/XML features supported by many database systems to construct such XML documents.

Consider a database with relations \textsf{Emp(name,cid)} and \textsf{Company(cid, cname)} as shown in Fig.~\ref{fig:l5-ex-emp} and~\ref{fig:l5-ex-company}. The following query (SQL code shown in Fig.~\ref{fig:ex-L5-q}) returns companies with at least one employee:

$$Q_{comp} = \Aggregation{cname}{}(Emp \join Company)$$

This query returns a single result tuple over the example database. We show the result and the provenance polynomial (annotation of the tuple in semiring $\mathbb{N}[X]$)\footnote{Note that we have taken some liberty with provenance polynomials here. The correct way of representing the result of a group-by query that preserves the generality of semiring $\mathbb{N}[X]$ requires the introduction of a $\duplicate$  operator for semiring expressions~\cite{AD11d}.} computed for $Q_{comp}$ in Fig~\ref{fig:ex-L5-q-result}. The provenance polynomial encodes that there are three alternative ways of deriving the query result (addition): joining \textsf{(Peter,1)} with \textsf{(1,IBM)} ($x_1 \cdot y$), joining \textsf{(Alice,1)} with \textsf{(1,IBM)} ($x_2 \cdot y$), and joining \textsf{(Bob,1)} with \textsf{(1,IBM)} ($x_3 \cdot y$).
Using Pipeline L1 to compute the provenance of this query, we get the result shown in Fig.~\ref{fig:ex-L5-pipeline-L1-result}. The provenance polynomial is encoded as three tuples each representing one monomial (multiplication). If Pipeline L5 is used, then a single result tuple is produced storing the full provenance of the query result tuple \textsf{(IBM)}. The result relation produced by L5 is shown in Fig.~\ref{fig:ex-L5-pipeline-L5-result}. The full XML document is shown in  Fig.~\ref{fig:ex-L5-pipeline-L1-result-xml-no-fact}.
Pipeline L5 factorizes the provenance based on the structure of the input query by constructing the polynomial annotating a query result tuple  one step at a time by instrumenting operators to combine the provenance for their inputs.
In the case of $Q_{comp}$, the query first applies a join (multiplication) and then aggregates the result of the join (addition). Thus, the XML document generated by Pipeline L5 represents a ``flat'' provenance polynomial which is a sum of products. Since factorization is determined by the query structure we can generate a different factorization by applying equivalence preserving algebraic transformations to restructure the query. For instance, assuming that \textsf{cid} is a key for relation \textsf{comp}  we can rewrite $Q_{comp}$ by pushing the aggregation into the left input of the join grouping on \textsf{cid} instead of \textsf{cname}:
 
$${Q_{fac}} = \projection_{cname}(\Aggregation{cid}{count(*)}(Emp) \join Company) $$ 

Using this restructured query, Pipeline L5 would produce the more concise XML document shown in Fig.~\ref{fig:ex-L5-pipeline-L1-result-xml-fact} which corresponds to the factorized provenance polynomial $(x_1 + x_2 + x_3) \cdot y$.

Note that the transformation we have applied here is one of our provenance-specific algebraic transformations (PAT) (Rule~\eqref{eq:group-by-push-down-preserving-join} shown in Fig.~\ref{fig:algebraic-rules-supp}). In fact, the purpose of this rule, and also of the similar PAT rules~\eqref{eq:add-duplicate-removal} and~\eqref{eq:group-by-push-down}, is to factorize the provenance generated by Pipeline L5.
An interesting avenue for future work is to see how to combine our rules with techniques that have been developed for factorized databases~\cite{OS16}.

\begin{figure*}[t]
\centering
\begin{minipage}{1\linewidth}
\begin{minipage}{0.3\linewidth}
  \centering
  \subfloat[Temporal relation Emp]{\label{fig:temp-ex-emp}
    \centering
            \begin{minipage}{1.0\linewidth}
              \centering
\begin{tabular}{ |c|c||c|}
\hline
 \chead name & \chead sal & \chead period  \\ \hline
 Peter & 30k & [3, 10) \\   \hline
Bob & 30k & [3, 15)  \\   \hline
Alice & 30k & [2, 4)  \\   \hline
Alice & 50k & [4, 9)  \\   \hline
\end{tabular}
  \end{minipage}
  } 
\end{minipage}
\begin{minipage}{0.3\linewidth}
  \centering
\subfloat[Alternative encoding of Emp]{\label{fig:temp-ex-emp-alt}
    \centering
            \begin{minipage}{1.0\linewidth}
              \centering
\begin{tabular}{ |c|c||c|}
\hline
 \chead name & \chead sal & \chead period  \\ \hline
 Peter & 30k & [3, 10) \\   \hline
Bob & 30k & [3, 10)  \\   \hline
Bob & 30k & [10, 15)  \\   \hline
Alice & 30k & [2, 4)  \\   \hline
Alice & 50k & [4, 9)  \\   \hline
\end{tabular}
  \end{minipage}
}
  \end{minipage}
\begin{minipage}{0.3\linewidth}
  \centering
  \subfloat[Normalized temporal query result]{\label{fig:temp-ex-q-result}
    \centering
            \begin{minipage}{1.0\linewidth}
              \centering
\begin{tabular}{ |c||c|}
\hline
 \chead numEmp & \chead period  \\ \hline
1 & [2, 3) \\   \hline
3 & [3, 9)  \\   \hline
2 & [9, 10)  \\   \hline
1 & [10, 15)  \\   \hline
\end{tabular}
  \end{minipage}
  } 
\end{minipage}

\caption{Example interval-timestamped temporal database and normalized sequenced query result produced by Pipeline L6.}
\label{fig:bag}
\end{minipage}

\end{figure*}

\subsection{Pipeline L6 -  Sequenced Temporal Queries}
\label{sec:l6.-sequ-temp}

We now briefly describe Pipeline L6 and discuss the instrumentation choice between set-coalesce and bag-coalesce for this pipeline (Sec.~\ref{sec:set-vs-bag}).

An important type of temporal queries are queries with so-called sequenced semantics~\cite{DBLP:reference/db/BohlenJ09}. 
Given a temporal database, a query Q under sequenced semantics returns a temporal relation that assigns to each point in time the result of evaluating Q over the snapshot of the database at this point in time.
Pipeline L6 instruments a non-temporal input query to evaluate it under sequenced semantics over an interval-timestamped encoding of temporal data.
By interval-timestamped we are referring to a common way of representing temporal data by associating each tuple with the time interval during which it is valid and storing this interval inline with the tuple.  Fig.~\ref{fig:temp-ex-emp} shows an example of such an encoding. For instance, at time $3$, Alice did earn 30k and at time $7$ she did earn 50k. There are many ways of how to represent a temporal database using interval-timestamped relations which are all equivalent in terms of the snapshots they encode. For instance, Fig.~\ref{fig:temp-ex-emp-alt} shows an alternative encoding of the \textsf{Emp} relation where Bob's salary is recorded as two tuples instead of one tuple.
To avoid having to deal with this potentially confusing ambiguity, Pipeline L6 represents query results using a unique normal form.  

For example, consider the following non-temporal query that counts the number of employees.

\begin{lstlisting}[style=psql]
SELECT count(*) AS numEmp
FROM Emp
\end{lstlisting}

Interpreted under sequenced semantics, this query will show how the number of employees changes over time. 
The result produced by Pipeline L6 for this query is shown in Fig.~\ref{fig:temp-ex-q-result}. For instance, 
from time $9$ to $10$ (exclusive) there were two employees (Peter and Bob). 

A standard method for normalizing interval-temporal data is called \textit{coalescing}~\cite{BS96}. Coalescing merges duplicates of a tuple with overlapping or adjacent time-intervals. For instance, applying coalescing to the relation from Fig.~\ref{fig:temp-ex-emp-alt} we get the relation from Fig.~\ref{fig:temp-ex-emp} because the adjacent interval for tuple \textsf{(Bob,30k)} would be merged into a time interval \textsf{[3,15)}.
Coalescing is only applicable to set semantics since it merges overlapping intervals which is not correct for bag semantics. The normalization we apply in Pipeline L6 is a generalization of coalescing for bag semantics. The details of this normalization and how we implement it though instrumentation are beyond the scope of this paper. However, for understanding the instrumentation choice discussed in Sec.~\ref{sec:set-vs-bag}  it is only important to know that 1) our implementation of set-coalescing as instrumentation is more efficient than our implementation of bag-coalescing and 2) if a query result does not contain duplicates then set-coalescing produces the same result
as bag coalescing. Based on these observations we can use set-coalescing instead of bag-coalescing whenever a query result is guaranteed to not contain any duplicates to improve query performance.
If property $\keyProp$  for the root operator of a query contains a key that does not contain any temporal attributes, then the query result is guaranteed to not contain any duplicates and we can apply set-coalescing. Note that the following less strict condition is also sufficient: If $k$ is a key for every snapshot then we can apply set-coalescing. This condition holds if the non-temporal input query result has a key. This follows from the definition of sequenced semantics, but the details are beyond the scope of this paper. For instance, our example query is an aggregation, i.e., for the non-temporal version we have $\keyProp(\rootOp) = \{\{numEmp\}\}$. Thus, set-coalescing can be used here.

\section{Background}\label{sec:supp-background}

\subsection{Definition of $\exprEval$}\label{sec:supp-back-expr-eval}
Recall that the semantics of projection expressions is defined using a function $\exprEval(t,e)$ which returns the result of evaluating $e$ over $t$. The evaluation function $\exprEval(t,e)$ is defined recursively in Fig.~\ref{fig:project-expr-eval-def}. Here $t$ is a tuple, $c$ a constant, $a$ an attribute, $e$, $e_1$, and $e_2$ are expressions, and $\compExpr$ is a comparison operator as defined in the grammar for projection expressions. For example, consider the evaluation of  projection expression $\eIf{a=3}{b \cdot 2}{a+c}$ over a tuple $t=$\textsf{(2,4,3)} with schema \textsf{(a,b,c)}.
\begin{align*}
  &\exprEval(t, \eIf{a=3}{b \cdot 2}{a+c})\\ =
  &\begin{cases}
    \exprEval(t, (b \cdot 2)) & \mathtext{if} \exprEval(t, (a=3)) \\
    \exprEval(t, (a + c)) & \mathtext{otherwise}
  \end{cases}
\end{align*}
Now to determine which of the two cases applies we have to evaluate $\exprEval(t, (a=3))$.
\begin{align*}
  \exprEval(t, (a=3)) &= (\exprEval(t, a) = \exprEval(t, 3))\\
  &= (2 = 3) = false
\end{align*}
Based on this result we can proceed with the evaluation of $\exprEval(t, \eIf{a=3}{b \cdot 2}{a+c})$.

\begin{align*}
  &\exprEval(t, \eIf{a=3}{b \cdot 2}{a+c}) \\
  =   &\exprEval(t, {a+c})\\
 = &\exprEval(t,a) + \exprEval(t,c)\\
  = &2 + 3 = 5
\end{align*}

\begin{figure}[t]
  \centering
  \begin{align*}
  \exprEval(t, c) &= c\\
  \exprEval(t, a) &= t.a\\
  \exprEval(t, e_1 + e_2) &= \exprEval(t, e_1) + \exprEval(t, e_2)\\
  \exprEval(t, e_1 \cdot e_2) &= \exprEval(t, e_1) \cdot \exprEval(t, e_2)\\
  \exprEval(t, e_1 \wedge e_2) &= \exprEval(t, e_1) \wedge \exprEval(t, e_2)\\
  \exprEval(t, e_1 \vee e_2) &= \exprEval(t, e_1) \vee \exprEval(t, e_2)\\
    \exprEval(t, \neg e) &= \neg \exprEval(t, e)\\
    \exprEval(t, e_1\, \compExpr\, e_2) &= \exprEval(t,e_1)\, \compExpr\, \exprEval(t,e_2)\\
  \exprEval(t, \eIf{e}{e_1}{e_2}) &=
  \begin{cases}
    \exprEval(t, e_1) & \mathtext{if} \exprEval(t, e) \\
    \exprEval(t, e_2) & \mathtext{otherwise}
  \end{cases}
\end{align*}

\caption{Evaluation rules for projection expressions}
\label{fig:project-expr-eval-def}
\end{figure}

\subsection{Window Operator Example}\label{sec:supp-win-op-ex}
Consider the relation \textsf{Emp(name, salary, month)} shown in Fig~\ref{fig:window-example} (a) storing the salary an employee has received for a certain month. Query

$$\Win{sum(salary)}{x}{name}{month}(Emp)$$

computes for each employee and month the total salary the employee has received up to and including this month. The result of this query  is shown in Fig~\ref{fig:window-example} (b). For instance, to compute the result for Bob in the 2nd month, we determine the partition (group) to which tuple $t$ = \textsf{(Bob, 4700,2)} belongs to. This partition contains all tuples with \textsf{name = Bob}. Within this partition tuples are sorted on their \textsf{month} value. The window for $t$ contains all tuples from the partition that have a month value that is less than or equal to $t.month = 2$. Thus, the window contains $t$ itself and the tuple \textsf{(Bob, 6000, 1)}. Computing $sum(salary)$ over this window we get $6000 + 4700 = 10700$.

\begin{figure}[t]
\centering
\begin{minipage}{0.4\linewidth}
  \centering

  \subfloat[Emp]{
    \centering
          \resizebox{0.8\linewidth}{!}{
            \begin{minipage}{1.0\linewidth}
              \centering
\begin{tabular}{ |c|c|c|}
\hline
 \chead name & \chead salary & \chead month  \\ \hline
 Bob & 5300 & 3 \\   \hline
 Alice & 6500 & 2 \\   \hline
 Alice & 5600 & 4 \\   \hline
 Bob & 6000 & 1 \\   \hline
 Bob & 4700 &  2 \\   \hline
 Alice & 6800 & 3 \\   \hline
 Alice & 5800 & 1\\   \hline
\end{tabular}
  \end{minipage}
  }
  } 
\end{minipage}
\begin{minipage}{0.5\linewidth}
  \centering
  \subfloat[Query Result]{
    \centering
          \resizebox{0.8\linewidth}{!}{
            \begin{minipage}{1.0\linewidth}
              \centering
\begin{tabular}{ |c|c|c|c|}
\hline
 \chead name & \chead salary & \chead month &  \chead x \\ \hline
 Bob & 6000 & 1 & 6000 \\   \hline
 Bob & 4700 & 2  & 10700 \\   \hline
 Bob & 5300 & 3  & 16000 \\   \hline
 Alice & 5800 & 1 & 5800 \\   \hline
 Alice & 6500 & 2 & 12300 \\   \hline
 Alice & 6800 & 3 & 19100 \\   \hline
 Alice & 5600 & 4 & 24700 \\   \hline
\end{tabular}
  \end{minipage}
  }
  } 
\end{minipage}

\caption{Example application of the window operator}
\label{fig:window-example}

\end{figure}

\section{Undecidability of Property Inference}\label{sec:supp-properties-inference-rule}

In this section, we provide the remaining proofs for the undecidability claims made in Sec.~\ref{sec:properties-inference}. In particular we claimed that the following problems are undecidable for the bag algebra we are considering in this work: computing candidate keys, determining all equivalences that hold, computing a minimal set of sufficient attributes, 
and determining whether a query is duplicate-insensitive. Note that the proof of undecidability of determining candidate keys was already shown in Section~\ref{sec:properties-inference}. 

\newcommand{\theokeys}{1}
\setcounter{Theorem}{1}
\begin{Theorem}\label{theo:set-is-undecidable}
Let $\qSub$ be a subquery of a query $\query$. The problem of deciding whether $\qSub$ is duplicate-insensitive is undecidable.
\end{Theorem}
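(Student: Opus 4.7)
The plan is to mirror the reduction used in the proof of Theorem~1: reduce from the (undecidable) problem of determining whether a multivariate polynomial $f\colon \mathbb{Z}^n \to \mathbb{Z}$ is injective, whose undecidability follows from Hilbert's tenth problem. Given such an $f$, I introduce a relation schema $R(x_1,\ldots,x_n)$ over the domain $\mathbb{Z}$ with candidate key $X = \{x_1,\ldots,x_n\}$, and set $\query = \qSub = \projection_{f(x_1,\ldots,x_n)\to b}(R)$. Here I take $\qSub$ to be $\query$ itself, which is a legitimate (trivial) subquery of $\query$. Under this choice, $\query \equiv \query[\qSub \gets \duplicate(\qSub)]$ reduces to the statement ``$\query$ never produces duplicates on any instance of $R$ that respects the key $X$.''

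Next I would establish the two directions of the equivalence between duplicate-insensitivity of $\qSub$ and injectivity of $f$.  For the $(\Leftarrow)$ direction, assume $f$ is injective. For any instance $I$ of $R$, the key $X$ forces $I$ to be duplicate-free, so distinct input tuples $t \neq t'$ give $f(t) \neq f(t')$ by injectivity; hence $\query(I)$ contains no duplicates and applying $\duplicate$ to $\qSub$ changes nothing, giving $\query \equiv \query[\qSub \gets \duplicate(\qSub)]$. For the $(\Rightarrow)$ direction, suppose $f$ is not injective; pick $I \neq J$ with $f(I) = f(J) = y$, and consider the instance $\{I, J\}$ of $R$ (which still satisfies the key since $I$ and $J$ differ on $X$). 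Then $\query(\{I,J\}) = \{(y)^2\}$ while $\query[\qSub \gets \duplicate(\qSub)](\{I,J\}) = \{(y)^1\}$, so the two queries disagree on this instance and are not equivalent; contrapositively, duplicate-insensitivity of $\qSub$ forces $f$ to be injective.

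Combining the two directions, a decision procedure for duplicate-insensitivity would give a decision procedure for injectivity of multivariate integer polynomials, which is impossible. I would then conclude that the problem of Definition~\ref{def:def_set} is undecidable even when restricted to queries consisting of a single generalized projection, strengthening the statement in the same way that Theorem~\ref{theo:keys-undecidable} is strengthened.

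The main obstacle is minor and bookkeeping in nature: I must verify that the key constraint $X \in keys(R)$ is preserved by the counterexample instance $\{I,J\}$ (it is, since $I \neq J$ implies they differ in at least one $x_i$) and that taking $\qSub = \query$ conforms to the paper's use of ``subquery.'' Both are immediate, so the reduction goes through essentially verbatim from the proof of Theorem~\ref{theo:keys-undecidable}, with the only substantive change being the interpretation of the target property (no duplicates in the output rather than existence of a singleton key on $b$).
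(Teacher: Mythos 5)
Your reduction is correct and is essentially the paper's: both reduce from injectivity of multivariate integer polynomials using the relation $R(x_1,\ldots,x_n)$ with candidate key $X$ and the generalized projection $\projection_{f(x_1,\ldots,x_n)\to b}(R)$, and both use the witness instance $\{I,J\}$ with $f(I)=f(J)$ in exactly the same way. The one difference is how the reduction is completed around the gadget: you take $\qSub=\query$ to be the projection itself, so that duplicate-insensitivity amounts to ``$\query$ never produces duplicates'' under top-level bag equivalence, whereas the paper sets $\query_f=\Aggregation{}{count(*)}(\qSub)$ with $\qSub=\projection_{f(x_1,\ldots,x_n)\to b}(R)$ a proper subquery, so that eliminating duplicates changes the observable count from $(2)$ to $(1)$ on the witness instance. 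Your variant is legitimate --- the paper itself treats the query rooted at the root operator as a subquery of itself (e.g., in the base cases of the property-inference correctness proofs), and bag equivalence distinguishes multiplicities --- and it yields the marginally sharper statement that the problem is already undecidable for a single generalized projection; the paper's aggregation wrapper buys undecidability for a \emph{proper} subquery and sidesteps any quibble about whether $\qSub=\query$ is admissible under Definition~\ref{def:def_set}.
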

\begin{proof}
  We prove the result using a similar reduction as used in the proof of Theorem~\ref{theo:keys-undecidable}. Given a polynomial $f$ we construct a query, such that a subquery of this query is duplicate-insensitive iff $f$ is injective. We construct $R$ as in the proof of Theorem~\ref{theo:keys-undecidable} and define $\query_f = \Aggregation{}{count(*)}(\projection_{f(x_1, \ldots, x_n) \to b}(R))$ and $\qSub = \projection_{f(x_1, \ldots, x_n) \to b}(R)$. Eliminating duplicates from the result of $\qSub$ will only affect the count iff there are any duplicates which is the case if $f$ is not injective. To see why this is the case, assume that $f$ is not injective, then there have to exist two inputs $I$ and $J$ such that $f(I) = y =  f(J)$. Consider the instance $R = \{I,J\}$ for which $\qSub$ returns $\{y^2\}$. Then $\Aggregation{}{count(*)}(\qSub)$ returns $(2)$, but $\Aggregation{}{count(*)}(\duplicate(\qSub))$ returns $(1)$. $\qSub$ can only be duplicate-insensitive if it does not contain any duplicates, because otherwise the count would decrease when these duplicates are removed.
  Thus, $\qSub$ is duplicate-insensitive if $f$ is injective and it follows that determining whether $\qSub$ is duplicate-insensitive is undecidable.
\end{proof}

\begin{Lemma}
$\aEquiv$ is an equivalence relation.  
\end{Lemma}
\begin{proof}
To prove that $\aEquiv$ is an equivalence relation we have to prove that it is reflexive, symmetric, and transitive. WLOG consider a subquery $\qSub$ of a query $\query$ and let $a,b,c \in \schema{\qSub}$. 

\myproofpar{reflexivity}
We have to show that $a \aEquiv a$. Consider the equivalence
$\selection_{a=a}(\qSub) \equiv \qSub$. This equivalence holds because $a=a$ is a tautology. Thus, trivially $\query \equiv \query[\qSub \gets \selection_{a=a}(\qSub)]$ has to hold too.

\myproofpar{symmetry}
We have to show that if $a \aEquiv b$, then also $b \aEquiv a$. The  equivalence
$\selection_{a=b}(\qSub) \equiv \selection_{b=a}(\qSub)$ follows from the symmetry of equality. Thus, $\query \equiv \query[\qSub \gets \selection_{a=b}(\qSub)]$ implies $\query \equiv \query[\qSub \gets \selection_{b=a}(\qSub)]$.

\myproofpar{transitivity}
From $\selection_{a=b}(Q) \equiv Q$ and $\selection_{b=c}(Q) \equiv Q$ follows that $\selection_{a=b}(\selection_{b=c}(Q)) \equiv Q$. Using the standard equivalence $\selection_{\theta_1}(\selection_{\theta_2}(Q)) \equiv \selection_{\theta_1 \wedge \theta_2}(Q)$ we get $\selection_{a=b \wedge b=c}(Q) \equiv Q$. Using the fact that equality is transitive we deduce that $\selection_{a=b \wedge b=c \wedge a=c}(Q) \equiv Q$. Then applying the above equivalence this implies $\selection_{a=c}(\selection_{a=b \wedge b=c}(Q)) \equiv Q$. Substituting $\selection_{a=b \wedge b=c}(Q)$ with $Q$ based on $\selection_{a=b \wedge b=c}(Q) \equiv Q$ we get $\selection_{a=c}(Q) \equiv Q$. 
\end{proof}

\begin{Theorem}\label{theo:eq-undecidable}
Let $\qSub$ be a subquery of a query $\query$ and $a,b \in \schema{\qSub}$. Determining whether $a \aEquiv b$ is undecidable.
\end{Theorem}
\begin{proof}
  We prove the claim through a reduction from query equivalence which is known to be undecidable for full relational algebra (for both sets and bags). The undecidability for bags is a corollary  of the undecidability of containment of union of conjunctive queries (UCQs) over bags~\cite{IR95}, because $\query \sqsubseteq \query'$ iff $\query - \query' \equiv \query_\emptyset$ where $\query_\emptyset$ is a query that returns the emptyset on all inputs (e.g., $R - R$ for some relation $R$). 
An alternative derivation of this result is based on the undecidability of equivalence of relational calculus queries for sets~\cite{D69}.

Consider two queries $Q_1$ and $Q_2$ and let $Q_{cntDiff} = \Aggregation{}{count(*) \to a}((Q_1 - Q_2) \union (Q_2 - Q_1))$. Note that $Q_{cntDiff}$ computes the number of tuples in the symmetric difference of $Q_1$ and $Q_2$. We claim that $\forall I: Q_{cntDiff}(I) = \{(0)\}$ iff $Q_1 \equiv Q_2$. This trivially holds, because the symmetric difference of two queries can only be empty on all inputs iff the two queries are equivalent. 
Now consider the following query $\query_{test} = \projection_{a, 0 \to b} (\query_{cntDiff})$. Based on the definition of $\aEquiv$,  $a \aEquiv b$ holds for query $\query_{test}$ iff $\query_{cntDiff}$ returns $(0)$ on all inputs. Thus, $a \aEquiv b$ iff $\query_1 \equiv \query_2$. However, since query equivalence is undecidable it follows that determining whether $a \aEquiv b$  is undecidable too.
\end{proof}

\begin{Theorem}\label{theo:icols-undecidable}
Let $\qSub$ be a subquery of a query $\query$ and let $E \subset \schema{\qSub}$. The problem of determining whether $E$ is sufficient is undecidable.   
\end{Theorem}
\begin{proof}
  We prove the claim by reduction from query equivalence. Let $\query_1$, $\query_2$, and $\query_{cntDiff}$ be as in the proof for Theorem~\ref{theo:eq-undecidable}. Furthermore, define $Q_{test} = \Aggregation{}{count(*)}(\duplicate(\qSub))$ and $\qSub = \projection_{a, 1 \to b}( Q_{cntDiff}) \union \{(0,1)\}$. We have $\schema{\qSub} = \{a,b\}$. Consider the problem of deciding whether $E = \{b\}$ is a sufficient set of attributes for  $\qSub$ within the context of $\query_{test}$. We claim that $\{b\}$ is sufficient iff $\query_1 \equiv \query_2$. If this claim holds then deciding whether $\{b\}$ is sufficient is undecidable since query equivalence is undecidable. Thus, it remains to prove that claim.

\myproofpar{$\Leftarrow$}
Assume that $\query_1 \equiv \query_2$.  We have to show that $\{b\}$ is sufficient. As discussed in the proof of Theorem~\ref{theo:eq-undecidable}, $\query_{cntDiff}$ returns $(0)$ on all inputs iff $\query_1 \equiv \query_2$. Consider $\qSub$. This query returns $\{(c, 1)^1, (0,1)^1\}$ for some $c \neq 0$ on any instance $I$ such that $\query_1(I) \neq \query_2(I)$ and $\{(0,1)^2\}$ on all instances $I$ where $\query_1(I) = \query_2(I)$. It follows that $\query_{test}(I) = \{(1)^1\}$ for instances
where $\query_1(I) = \query_2(I)$ and $\{(2)^1\}$ otherwise. Since $\query_1 \equiv \query_2$, $\qSub(I) = \{(0,1)^2\}$ and $\query_{test}(I) = \{(1)^1\}$ for all instances $I$. Let $\query_{rewr} = \query_{test}[\qSub \gets \projection_{b}(\qSub)]$. We have to show that $\query_{rewr} \equiv \query_{test}$. Since for any instance $I$, $\projection_{b}(\qSub)(I) = \{(1)^2\}$ we have $\query_{rewr}(I) = \query_{test}(I)$.

\myproofpar{$\Rightarrow$}
We prove that $\query_1 \not\equiv \query_2$ implies that $\{b\}$ is not sufficient. If $\query_1 \not \equiv \query_2$, then there has to exist an instance $I$ such that $\query_{cntDiff}(I) = \{(c)^1\}$ for some $c \neq 0$. Then $\qSub(I) = \{(c,1)^1, (0,1)^1\}$ and $\query_{test}(I) = \{(2)^1\}$. However, $\projection_{b}(\qSub)(I) = \{(1)^2\}$ and, thus, for $\query_{rewr}$ as above we have $\query_{rewr}(I) = \{(1)^1\}$. Now since, $\query_{test}(I) \neq \query_{rewr}(I)$ we have $\query_{test} \not\equiv \query_{rewr}$ and, thus, $\{b\}$ is not sufficient. 
\end{proof}

\section{Property Inference}\label{sec:supp-inference-rule}

In this section we introduce and discuss the inference rules for properties $\icolsProp$, $\keyProp$, and $\ecProp$.

\begin{table*}[p]
\renewcommand{\arraystretch}{1.4}
  \centering
  \begin{tabular}{|c|c|l|} \hline 
\rowcolor[gray]{.9} \thead{Rule} & \thead{Operator $\curOp$} & \thead{Property \textit{\icolsProp{}} inferred for the input(s) of $\curOp$}\\ \hline 
 1,2 & $ \rootOp$ or $\duplicate(R)$ & $\icolsProp{}(\rootOp) = \schema{\rootOp}$, $ \icolsProp{}(R) =  \schema{R}$
  
  \\ \hline
  
  3 & $\selection_ {\theta}(R)$ & $  \icolsProp{}(R)  = \icolsProp{}(R) \cup \icolsProp{}(\curOp) \union cols(\theta) $   
  
  \\ \hline
4 & $\projection_ {e_{1} \rightarrow b_{1},...,e_{n} \rightarrow b_{n}(R)}$ & $   \icolsProp{}(R)  = \icolsProp{}(R) \cup (\bigcup_{i \in \{1,...,n\}} cols(e_i))$ \\ \hline
  5 & $R \crossprod S$ & $   \icolsProp{}(R)  = \icolsProp{}(R) \cup (\icolsProp{}(\curOp) \intersection \schema{R}) $   \\ 
                 &  & $  \icolsProp{}(S)  = \icolsProp{}(S) \cup (\icolsProp{}(\curOp) \intersection \schema{S}) $ \\ \hline
 6 & $_{G}\aggregation _{F(a)}(R)$ & $ \icolsProp{}(R) = \icolsProp{}(R) \cup G \cup \{a \}$
\\   \hline
7, 8,9  & $R \union S$ or $R \intersection S$     or $R \difference S$ & $  \icolsProp{}(R)  = \schema{R}  $, $  \icolsProp{}(S)  = \schema{S}  $  \\ \hline
10 & $\omega_{f(a) \to x, G\|O}(R)$ &  $\icolsProp{}(R) = \icolsProp{}(R) \cup \icolsProp{}(\curOp) - \{x\} \cup \{a\} \cup G \cup O$\\ \hline
  \end{tabular}
\caption{Top-down inference of property \textit{\icolsProp{}}} 
\label{tab:top-down-icols}

\end{table*}

\begin{table*}[p]
\centering
\renewcommand{\arraystretch}{1.4}

  \begin{tabular}{|c|c|l|} \hline 
\rowcolor[gray]{.9}  \thead{Rule} & \thead{Operator $\curOp$} & \thead{Property \textit{key} inferred for operator $\curOp$}\\ \hline 
    1 & $\selection_{\theta}(R)$ & $\keyProp(\curOp) = \keyProp(R) \cup \{ k \cup \{b\} - \{a\} \mid a,b \in \schema{R} \wedge k \in \keyProp(R) \wedge a \in k \wedge \theta \Rightarrow (a=b) \}$\\ \hline
  2 & 
      $R \difference S$& $ key(\curOp) = key(R) $ 
  
  \\ \hline
     3 & $\projection_ {a_{1} \rightarrow b_{1},\ldots, a_{n} \rightarrow b_{n}}(R)$ 
                     & $ key(\curOp) = \{ \aKey[B/A] | \aKey \in key(R) \wedge \aKey \subseteq \{a_1,..., a_n\} \}$ for $A = \{a_1,\ldots, a_n\}$ and $B = \{b_1, \ldots, b_n\}$ \\                      
                     \hline                     
  4 & $R \crossprod S$ & $ key(\curOp) = \{\aKey_1 \union \aKey_2 | \aKey_1 \in key(R) \wedge \aKey_2 \in key(S)\} $   \\ \hline

    5 & $\Aggregation{G}{f(a)}(R)$  & $\keyProp(\curOp) = \minKey(\{G\} \cup \{\aKey \mid \aKey \in key(R) \wedge \aKey \subseteq G \})$ \\

   \hline
    6 & $\aggregation_{f(a)}(R)$ & $key(\curOp) = \{\{f(a)\}\}$ \\ \hline
  7 & $\duplicate(R)$ & $ key(\curOp) =  \minKey(key(R) \cup \{\schema{R}\})$ \\ \hline 
  8 & $R \union S$ & $ key(\curOp) = \emptyset $ \\ \hline 
                                                    
  9 & $R \intersection S$ & $ key(\curOp) = \minKey(key(R) \union key(S)[\schema{R}/\schema{S}])$  \\ \hline

  10 & $\omega_{f(a) \to x, G\|O}(R)$ & $ key(\curOp) = key(R)  $ \\ \hline
  \end{tabular}

\caption{Bottom-up inference of property \textit{key}} 
 \label{tab:bottom-up-key}

\end{table*}

\begin{table*}[p]
\centering
\renewcommand{\arraystretch}{1.4}
  \begin{tabular}{|c|c|l|} \hline 
\rowcolor[gray]{.9}  \thead{Rule} & \thead{Operator} $\bf \curOp$ & \thead{Property \textit{$\ecb$}  inferred for operator $\curOp$}\\ \hline 
  1 & R & $\ecb(\curOp) = \{\{a\}\mid a \in \schema{R} \}$ \\ \hline
  2 & $\selection_ {\theta}(R)$ & $\ecb(\curOp) = \ecClosure (\ecb(R) \cup 
\{\{a,b\}\mid  \theta \Rightarrow (a=b) \} 
)$

    \\ \hline                                                          
    3 & $\projection_ {a_{1} \rightarrow b_{1},...,a_{n} \rightarrow b_{n}}(R)$ & $\ecb(\curOp) = \ecClosure ( \{\{ b_i, b_j \} \mid \exists \anEC \in \ecb(R) \wedge a_i \in \anEC \wedge a_j \in \anEC \}
 \union \{\{ b_i, c \} \mid \exists \anEC \in \ecb(R) \wedge a_i \in \anEC \wedge c \in \aDom \}$\\ &&\hspace{1.7cm} $\union \{\{b_i\} | i \in \{1,\ldots,n\}\})$ \\
  \hline
 5 & $R \crossprod S$ & $\ecb(\curOp) = \ecb(R) \cup \ecb(S) $\\ \hline
 6 & $ _{G} \aggregation _{F(a)}(R) $ & $\ecb(\curOp) = \{  \anEC \cap (G \cup \aDom) \mid \anEC \in \ecb(R) \}  \cup \{\{F(a)\}\} $ \\ \hline
 7,8 & $\duplicate(R)$ or $R \difference S$ & $\ecb(\curOp) = \ecb(R) $ \\ \hline
  9 & $R \union S$ & $\ecb(\curOp) = \ecClosure (\{ \anEC \cap \anEC' \mid \anEC \in \ecb(R) \wedge \anEC' \in \ecb(S)[\schema{S}/\schema{R}] \}$ \\
     \hline
 10 & $R \intersection S$ & $\ecb(\curOp) = \ecClosure( \ecb(R) \union \ecb(S)[\schema{S}/\schema{R}])$ \\ \hline
11 & $\omega_{f(a) \to x, G\|O}(R)$ & $\ecb(\curOp) = \ecb(R) \cup \{\{x\}\}$ \\ \hline
  \end{tabular}
 
\caption{Bottom-up inference of property \textit{ec}} 
\label{tab:bottom-up}
\end{table*}

\begin{table*}[p]
\centering
\renewcommand{\arraystretch}{1.4}

\begin{tabular}{|c|c|l|} \hline 
\rowcolor[gray]{.9}  \thead{Rule} & \thead{Operator $\curOp$} & \thead{Property \textit{$\ect$} inferred for the input(s) of $\curOp$}\\ \hline 
  1,2 & $\selection_ {\theta}(R)$ or $\duplicate(R)$ & $ \ect(R, \curOp) = \ecProp(\curOp)$ 
  
  \\ \hline
 
  3 & $\projection_ {a_{1} \rightarrow b_{1},...,a_{n} \rightarrow b_{n}}(R)$ & $\ect(R, \curOp) = \ecClosure ( \{\{ a_i, a_j \} \mid \exists \anEC \in \ecProp(\curOp) \wedge b_i \in \anEC \wedge b_j \in \anEC \})$ \\ \hline  
4  & $R \crossprod S$ & $\ect(R, \curOp) =  \{\anEC-\schema{S}|\anEC \in \ecProp(\curOp)\}$  \\ 
    &    & $\ect(S, \curOp) = \{\anEC-\schema{R}|\anEC \in \ecProp(\curOp) \}$ \\ \hline                    
   
   5 &  $_G \aggregation _{F(a)}(R)$ & $\ect(R, \curOp) = \{ \anEC \cap (G \cup \aDom) |\anEC \in \ecProp(\curOp)\}  $ \\ 
   \hline
  6,7 & $R \union S$ & $ \ect(R, \curOp) = \ecProp(\curOp)$ \\ 
                &   or $R \intersection S$                                & $ \ect(S, \curOp) = \ecProp(\curOp)[\schema{R}/\schema{S}] $ \\ \hline
 8 & $R \difference S$ & $ \ect(R, \curOp) = \ecProp(\curOp)$, $\ect(S, \curOp) = \emptyset$ \\ \hline

    9 & $\omega_{f(a) \to x, G\|O}(R)$ & $ \ect(R, \curOp) = \{ \anEC \cap (G \cup \aDom) |\anEC \in \ecProp(\curOp)\}$  \\ \hline
  \end{tabular}

\caption{Top-down inference of property \textit{\ecProp}} 
\label{tab:top-down}

\end{table*}

\parttitle{Inferring the icols Property}
We compute \icolsProp{} in a top-down traversal using the inference rules shown in Tab.~\ref{tab:top-down-icols}. 
Given the undecidability of determining a minimal set of sufficient attributes, we developed rules that compute a sufficient  set of attributes which may or may not be minimal. Having forsaken minimality, we take the liberty to ignore opportunties for reducing the size of $\icolsProp$ if this unnecessarily complicates the computation (e.g., requires more than one traversal of the algebra graph). For instance, we do not consider interactions of the $\icolsProp$ with the $\keyProp$ property. 
As a general rule, a set of attributes $E$ is sufficient for an operator's input if 1) it contains all attributes that are needed to generate output attributes that are sufficient for the parents of the operator (recall that we are dealing with algebra graphs), 2) it contains all attributes needed to evaluate the operator itself (e.g., attributes used in the condition of a selection operator), and 3) projecting the output of the operator's input on this set of attributes does not affect the number of duplicates produced by the operator (which in turn could affect the result of downstream operators).

We initialize $\icolsProp$ for all operators to the empty set. Then $\icolsProp$ for the root operator $\rootOp$ of query $Q$ is set to $\schema{Q}$ since all these attributes are part of the query result (Rule 1). All input attributes are needed to evaluate a duplicate elimination operator, because removing an attribute may change the number of tuples in the result (Rule 2). All attributes from a selection's condition $\theta$ (denoted as $cols(\theta)$) are needed to evaluate the selection. Thus, all attributes needed to evaluate the ancestors of the selection plus $cols(\theta)$ are required to evaluate the selection (Rule 3). For a projection we need all attributes that are used to compute the projection expressions determining the values of attributes that are part of $\icolsProp$ for the projection (Rule 4). For crossproduct we restrict the columns needed to evaluate ancestors of the cross product to its inputs. This is correct, since the number of duplicates produced by the crossproduct are not affected by projections of its inputs (Rule 5). 
For an aggregation we need all group-by attributes to guarantee that the same number of tuples are returned even if some group-by attribute values are not accessed by ancestors of the aggregation (Rule 6). Additionally, we need the attribute over which the aggregation function is computed. For instance, for query $\Aggregation{b,c}{sum(a)}(R)$ we would set $\icolsProp(R) = \{a,b,c\}$. 
For union, intersection, and difference we need all input attributes to not affect the result, because applying a projection to only one of the inputs would cause the schema of the inputs to no longer be the same (Rules 7 to 9).\footnote{The number of duplicates produced by a bag union is not affected by additional projections. Thus, only attributes needed to evaluate ancestors of the union have to be retained. We could extend the definition of sufficient sets of attributes to deal with such cases where a projection has to be applied to both inputs. However, it is not necessary to add this additional level of complexity since our PAT rule~\eqref{eq:union-icols} presented in Appendix~\ref{sec:supp-inference-rule} pushes projections though union operations which has ultimately the same effect.} To evaluate a window operator we need all attributes that are used to compute the aggregation function, order-by ($O$) and partition-by parameters ($G$) (Rule 10).

\begin{Example}\label{ex:icols-op-inference}
Consider the following query $\query = \projection_{a + b \to x}(\Win{sum(c)}{d}{b}{a}(\selection_{a < 5}(R)))$ where $\schema{R} =  (a,b,c)$. Since the projection operator is the root of the query, we have $\icolsProp(\projection_{a+b \to x}) = \{x\}$. Proceeding with the top-down traversal, we set $\icolsProp$ for the window operator to $\{a,b\}$ since $a$ and $b$ are needed to compute the projection expressions $a+b$. For the selection we set $\icolsProp$ to $\{a,b,c\}$, because these columns are needed to compute the window operator. Finally, $\icolsProp(R) = \{a,b,c\}$, the attributes needed to evaluate the selection condition ($cols(a<5) = \{a\}$) and the ancestors of the selection. Note that attribute $d$ which stores the result of $sum(c)$ is not needed to evaluate the query result and, thus, the window operator can be removed. In fact, one of the PAT rules we introduce in Sec.~\ref{sec:heuristic} removes window operators for which the result of the aggregation function is not part of $\icolsProp$ for the window operator.
\end{Example}

\parttitle{Inferring the key Property}
We compute property \keyProp{} in a bottom-up traversal (see Tab.~\ref{tab:bottom-up-key}). 
Sometimes we may infer a super key $k$ which is a superset of another super key $k'$.
Note that any superset of a super key is also a super key. 
Thus, it would be redundant to store both $k$ and $k'$ since from $k'$ we can infer $k$.
We use a function $\minKey(K)$ where $K$ is a set of keys to remove such redundant keys. 
Function $\minKey$ is defined as:
$$\minKey(K) = \{\aKey| \aKey\in K \wedge \nexists \aKey' \in K : \aKey' \subset \aKey\}$$
For instance, $\minKey(\{a,b,c\}, \{a,b\}\}) = \{\{a,b\}\}$ because $\{a,b,c\}$ contains $\{a,b\}$.
Property $\keyProp{}$ for a relation $R$ is determined based on primary key and uniqueness constraints that hold on $R$. For most database systems  this information is available through the system catalog. For instance, if $\schema{R} = (a,b,c,d,e)$, $\{a,b\}$ is the primary key of $R$, and uniqueness constraints are defined for  $\{c,d\}$ and $\{e\}$, then $\keyProp(R) = \{ \{a,b\}, \{c,d\}, \{e\} \}$.
Any key that holds for the input of a selection is naturally also a key of the selection's output since a selection returns a subset of its input relation  (Rule 1). Furthermore, if the condition of a selection implies an equality $a=b$, then functional dependencies $a \to b$ and $b \to a$ hold. Which means that we can replace attribute $a$ with $b$ in any key $k$. For any two keys $\aKey$ and $\aKey'$ generated in this fashion, it may be the case that $\aKey \subseteq \aKey'$. We apply $\minKey$ to remove keys that contain other keys. We use a sufficient condition for checking the implication $\theta \Rightarrow (a=b)$ by transforming the condition into conjunctive normal form and then checking whether a conjunct $a=b$ exists.  A projection returns one result tuple for every input tuple. 
Thus, any key $k$ that holds over the input of a projection will hold (modulo renaming) in the projection's output unless some of the key attributes are projected out (Rule 3). To simplify the exposition we have stated Rule 3 for a projection where each projection expression is a reference to an attribute. The extension to generalized projection uses the same condition. For instance, for a projection $\projection_{a + b \to d, c \to e}(R)$ where $keys(R) = \{\{a\}, \{c\}\}$ we would infer one key $\{e\}$.
A cross product returns all combinations $(t,s)$ of tuples $t$ from the left and $s$ from the right  input. It is possible to uniquely identify $t$ and $s$ using a pair of keys from the left and right input. Thus, for any key $k_1$ for $R$ and any key $k_2$ for $S$, $k_1 \cup k_2$  is a key for the output of the crossproduct.
For aggregation operators we consider two cases: 1) aggregation with group-by and 2) without group-by. For an aggregation with group-by, the values for group-by attributes are unique in the output and, thus, are a superkey for the relation. Furthermore, all keys that are subsets of the group-by attributes are still keys in the output. Hence, if none of the keys are contained in the group-by attributes we can use the group-by attributes as a key and otherwise use all keys contained in the group-by attributes (Rule 5).  Aggregation without group-by returns a single tuple. For this type of aggregation,  the aggregation function result is a trivial key (Rule 6). The bag union of two input relations does not have a key even if both inputs have keys because we do not know whether the values for these keys overlap (Rule 8). 
The result relation computed by an intersection $R \intersection S$ is a subset of both $R$ and $S$. Thus, any key from either input is guaranteed to hold over the output (Rule 9). Of course, attributes from keys of $S$ have to be renamed. Set difference returns a subset of the left input relation. Thus, any key that holds over the left input is guaranteed to hold over the output (Rules 2). The window operator adds a new attribute value to every tuple from its input. Thus, every key that holds over the input also holds over the window operator's output (Rule 10).

\begin{Example}\label{ex:set-op-inference}
Consider the algebra graph shown below. We show property $\keyProp$ for each operator as red annotations. Assume that the primary key of relation $R$  is $\{a,b\}$ and that a unique constraint is defined for attribute $d$. Thus, $\keyProp(R) = \{\{a,b\}, \{d\}\}$. The selection enforces a condition $b=c$. Thus, in addition to the keys that hold over relation $R$ we can infer an additional key $\{a,c\}$. None of these keys is contained in each other. That is, $\minKey$ returns  $\{\{a,b\}, \{a,c\}, \{d\}\}$. The projection $\projection_{a,c}$ only retains keys that are subsets of the set of projection attributes $\{a, c\}$. It follows that $\keyProp(\projection_{a,c}) = \{\{a,c\}\}$.
  \begin{center}
  \begin{tikzpicture}
[op/.style={anchor=south},
pro/.style={red,font=\footnotesize},
conn/.style={->,line width=1pt}]

\node[op] (d2) at (0,1) {$\projection_{a,c}$};
\node[pro,right] (d2s) at (d2.east) {$\{\{a,c\}\}$};

\node[op] (s) at (0,0.3) {$\selection_{b=c}$};
\node[pro,right] (ss) at (s.east) {$\{\{a,b\}, \{a,c\}, \{d\}\}$};

\node[op] (r) at (0,-0.5) {$R$};
\node[pro,right] (rs) at (r.east) {$\{\{a,b\}, \{d\}\}$};

\draw[conn] (s) to (d2);

\draw[conn] (r) to (s);
\end{tikzpicture}
\end{center}
\end{Example}

\parttitle{Inferring Property \ecProp{}}
We compute \ecProp{} in a bottom-up traversal (Tab.~\ref{tab:bottom-up}) followed by a top-down traversal (Tab.~\ref{tab:top-down}).
In the inference rules we use an operator $\ecClosure$ that takes a set $S$ of ECs as input and merges ECs if they overlap. This corresponds to repeated application of transitivity: $a = b \wedge b = c \Rightarrow a=c$.
Operator $\ecClosure$ is defined as the least fixed-point of operator $\cal E$ shown below:\\[-12mm]
\begin{center}
\resizebox{1\linewidth}{!}{
  \begin{minipage}{1.1\linewidth}
    \begin{align*}
      {\cal E}(S) = &\{ \anEC \union \anEC' \mid \anEC \in S \wedge \anEC' \in S \wedge \anEC \cap \anEC' \neq \emptyset \wedge \anEC \neq \anEC' \} \\
                     &\cup \{ \anEC \mid \anEC \in S \wedge  \not\exists
                       \anEC' \in S: \anEC \neq \anEC' \wedge \anEC \cap \anEC' \neq \emptyset \}
    \end{align*}
  \end{minipage}
}
\end{center}

The bottom-up traversal computes an initial set of equivalences $\ecb(op)$ for each operator $op$ in the query.
The top-down inference rules propagate equivalences from parents to children, restricting them to attributes that exist in the children where necessary.
Since we are dealing with algebra graphs, an operator may have multiple parents. It is only safe to propagate an equivalence from a parent to a child if this equivalence holds for all parents of the child. The top-down rules compute $\ect(op,p)$ which stores a set of equivalences that could be propagated from parent $p$ to operator $op$ if this parent would be the only parent of $op$. To compute the set of equivalences which are propagated, we intersect the sets of the equivalences for all parents of an operator. The final set of equivalence $\ecProp(op)$ for an operator $op$ is then computed as the union of the set of equivalence determined during bottom-up inference ($\ecb(op)$) and the result of the pair-wise intersection of equivalence classes $\ect(op,p)$ for all parents:
\begin{center}
    \centering
\resizebox{1\linewidth}{!}{
  \begin{minipage}{1\linewidth}
    \begin{align*}
\ecProp(op) = \ecClosure(\ecb(op) \cup \{ \{a,b\} \mid \forall p \in \opparents(op):  \\
			\exists \anEC \in \ect(op,p): a \in \anEC \wedge b \in \anEC \}) 
    \end{align*}
  \end{minipage}
}
\end{center}
Here $\opparents(op)$ denotes the parent operators of $op$. 
In the following we first discuss the bottom-up inference rules, then the top-down inference rules, and finally present two examples of how to apply these rules.

\begin{figure*}[t]\centering
  
 \begin{center}
\begin{tikzpicture}
[op/.style={anchor=south},
conn/.style={->,line width=1pt}]

\node[op] (j) at (0,2.5) {$\intersection$};

\node[op] (p1) at (-0.5,1.8) {$\projection_{a}$};
\node[op] (p2) at (0.5,1.8) {$\projection_{c}$};

\node[op] (s1) at (-0.5,1) {$\selection_{a=3}$};
\node[op] (r1) at (-0.5,0.3) {$R$};

\node[op] (s2) at (0.5,1) {$\selection_{c=d}$};
\node[op] (r2) at (0.5,0.3) {$S$};

\draw[conn,bend left] (p1) to (j);
\draw[conn,bend right] (p2) to (j);

\draw[conn] (s1) to (p1);
\draw[conn] (s2) to (p2);

\draw[conn] (r1) to (s1);
\draw[conn] (r2) to (s2);
\end{tikzpicture}   
\begin{tikzpicture}
[op/.style={anchor=south},
pro/.style={red,font=\footnotesize},
conn/.style={->,line width=1pt}]

\node[op] (j) at (0,2.5) {$\intersection$};
\node[pro,right] (js) at (j.north east) {$\{\{a,3\}\}$};

\node[op] (p1) at (-0.5,1.8) {$\projection_{a}$};
\node[pro,left] (p1s) at (p1.west) {$\{\{a,3\}\}$};
\node[op] (p2) at (0.5,1.8) {$\projection_{c}$};
\node[pro,right] (p2s) at (p2.east) {$\{\{c\}\}$};

\node[op] (s1) at (-0.5,1) {$\selection_{a=3}$};
\node[pro,left] (s1s) at (s1.west) {$\{\{a,3\},\{b\}\}$};
\node[op] (r1) at (-0.5,0.3) {$R$};
\node[pro,left] (r1s) at (r1.west) {$\{\{a\},\{b\}\}$};

\node[op] (s2) at (0.5,1) {$\selection_{c=d}$};
\node[pro,right] (s2s) at (s2.east) {$\{\{c,d\}\}$};
\node[op] (r2) at (0.5,0.3) {$S$};
\node[pro,right] (r2s) at (r2.east) {$\{\{c\},\{d\}\}$};

\draw[conn,bend left] (p1) to (j);
\draw[conn,bend right] (p2) to (j);

\draw[conn] (s1) to (p1);
\draw[conn] (s2) to (p2);

\draw[conn] (r1) to (s1);
\draw[conn] (r2) to (s2);
\end{tikzpicture}
\begin{tikzpicture}
[op/.style={anchor=south},
pro/.style={red,font=\footnotesize},
conn/.style={->,line width=1pt}]

\node[op] (j) at (0,2.5) {$\intersection$};
\node[pro,right] (js) at (j.north east) {$\{\{a,3\}\}$};

\node[op] (p1) at (-0.5,1.8) {$\projection_{a}$};
\node[pro,left] (p1s) at (p1.west) {$\{\{a,3\}\}$};
\node[op] (p2) at (0.5,1.8) {$\projection_{c}$};
\node[pro,right] (p2s) at (p2.east) {$\{\{c,3\}\}$};

\node[op] (s1) at (-0.5,1) {$\selection_{a=3}$};
\node[pro,left] (s1s) at (s1.west) {$\{\{a,3\},\{b\}\}$};
\node[op] (r1) at (-0.5,0.3) {$R$};
\node[pro,left] (r1s) at (r1.west) {$\{\{a,3\},\{b\}\}$};

\node[op] (s2) at (0.5,1) {$\selection_{c=d}$};
\node[pro,right] (s2s) at (s2.east) {$\{\{c,d,3\}\}$};
\node[op] (r2) at (0.5,0.3) {$S$};
\node[pro,right] (r2s) at (r2.east) {$\{\{c,d,3\}\}$};

\draw[conn,bend left] (p1) to (j);
\draw[conn,bend right] (p2) to (j);

\draw[conn] (s1) to (p1);
\draw[conn] (s2) to (p2);

\draw[conn] (r1) to (s1);
\draw[conn] (r2) to (s2);
\end{tikzpicture}
\end{center}

\caption{Example application of the inference rules for property $\ecProp$}
\label{fig:ex-ec-inference}
\end{figure*}
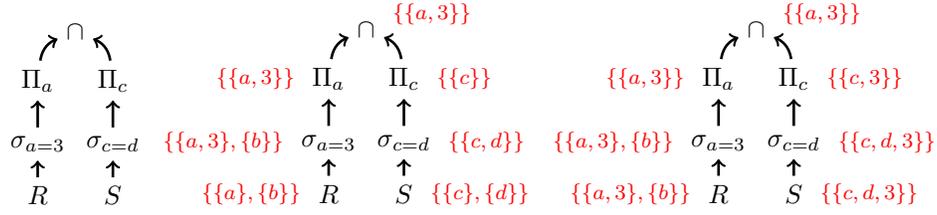

\parttitle{Bottom-up Inference}
For a relation $R$ we place each attribute in its own equivalence class (Rule 1).
For selections (Rule 2), we transform the selection condition into  conjunctive normal form, for each conjunct $a = b$ add a new EC $\{a,b\}$,  and then apply the $\ecClosure$ operator to merge the equivalence classes that contain $a$ and $b$. For instance, if $\ecb(R) = \{\{a,b\}, \{c,d\}\}$ for the input of a selection $\selection_{a=c \wedge c = 3}(R)$, then $\ecb(\selection_{a=c \wedge c = 3}) = \ecClosure(\{a,b\}, \{c,d\}, \{a,c\}, \{c, 3\}) = \{a,b,c,d,3\}$. 
Any equivalence $a \aEquiv b$ that holds over the input of a projection, also holds over its output as long as attributes $a$ and $b$ are present (potentially under different names) in the output (Rule 3). Analog $a \aEquiv c$ where $c$ is a constant holds if $a$ is present. Recall that we denote the domain of constants as $\aDom$.   Since only a subset of the attributes from an equivalence class of the input may be present, we reconstruct equivalence classes based on which attributes are present using the $\ecClosure$ operator. Note that to keep the presentation simple, we stated the rule for a projection where the all projection expressions are references to attributes. This inference rule can be applied to generalized projections by ignoring projection expressions that are not just references to attributes. For instance, for a projection $\projection_{a+b \to x, c \to y, d \to z}(R)$ where $\ecb(R) = \{\{a,b\}, \{c,d\}\}$ we would infer $\{\{y,z\}\}$ as the ECs for the projection.
All equivalences from both inputs also hold over the result of a cross product, because each output tuple is a concatenation of one tuple from the left and one tuple from the right input (Rule 4).
An equivalence $a \aEquiv b$ holds over the result of an aggregation operator if it holds in the input and the attributes $a$ and $b$ are part of the result schema which is the case if $a,b \in G$. Similar if $a \equiv c$ for $c \in \aDom$ then the equivalence holds if $a \in G$. We intersect ECs from the input of the aggregation with $G \cup \aDom$ to find all such equivalences and then apply $\ecClosure$ to compute their closure. The attribute storing the aggregation function result is placed in a new EC by itself since we cannot assume it to be equal to any of the group-by attributes (Rule 6).
Any equivalence that holds over the input of a duplicate elimination operator also holds over its output since the operator does not modify tuples (Rule 7). 
Since a difference operator returns a subset of its left input, any equivalence that holds over the left input also holds over the output (Rule 8).
The rule for union (Rule 9) renames the attributes of $S$ in $\ecb(S)$ to the attributes of $R$ which we write as $\ecb(S)[\schema{S}/\schema{R}]$. An equivalence holds over the result of a union if it holds (modulo renaming) over both inputs, because if an equivalence holds only over one of the inputs the other input may contain a tuple which does not fulfill the equivalence. Rule 9 intersects equivalence classes from both inputs (after renaming) to find equivalences that hold in both inputs and then applies the $\ecClosure$ operator to merge any overlapping equivalence classes in the result. 
For example, consider a relation $R$ with schema $\schema{R} = (a,b,c)$ where $\ecb(R)=\{\{a,b,c\}\}$ and a relation $S$ with schema $\schema{S} = (d,e,f)$ where $\ecb(S)=\{\{d,e\},\{f\}\}$. Then for a query $R \union S$, we have $\ecb(R \union S)=\{\{a,b\},\{c\}\}$.
Since any tuple in the result of an intersection has to be present in both inputs, any equivalence $a \aEquiv b$ that holds over one of the inputs also holds over the output. Rule 10 unions the set of equivalence classes for the left input and right input (after appropriate renaming) and then applies $\ecClosure$ to merge overlapping ECs.
For example, consider a relation $R$ with schema $\schema{R} = (a,b,c)$ where $\ecb(R)=\{\{a,b\}, \{c,3\}\}$ and relation $S$ with schema $\schema{S} = (d,e,f)$ where $\ecb(S)=\{\{d\},\{e,f\}\}$. For the query $R \intersection S$ we have $\ecb(R \intersection S)=\{\{a,b,c,3\}\}$.
A window operator extends each tuple from its input with a new attribute storing the result of the aggregation function. Thus, any equivalence that holds over the input also holds over the output. We cannot assume that the result of the aggregation function is equal to any of the other attributes for all inputs. Hence, we place the aggregation result attribute $x$ in its own equivalence class (Rule 11).

\parttitle{Top-down Inference}
The top-down inference rules are based on algebraic equivalences that push selections redundantly down through operators. 
The rules for selection and duplicate removal (Rules 1 and 2) propagate all equivalences to the child of the operator. Equivalences that hold over the result of a projection can be pushed to its input (Rule 3) if the attributes occurring in the equivalence exist in the input (modulo renaming). For a crossproduct we push equivalences to a child by restricting them to the schema of the child (Rule 4). We can propagate equivalences to the child of an aggregation if the equivalences are over group-by attributes and constants (Rule 5). All equivalences that hold for an intersection or union can be propagated to both children if renaming is applied to adapt the attribute names for the right input (Rules 6 and 7). For difference operators we can only  propagate equivalences to the left input (Rule 8). Finally, for a window operator we can propagate equivalences that involve partition-by attributes ($G$) and constants ($\aDom$).

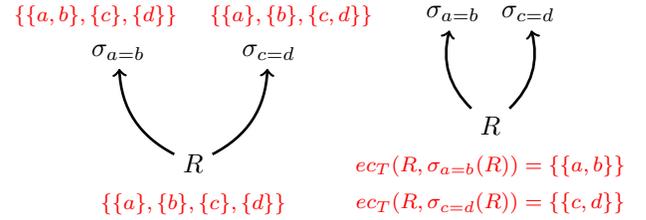
\begin{figure}[t]
\begin{center}
  \begin{minipage}{0.5\linewidth}
\begin{tikzpicture}
[op/.style={anchor=south},
pro/.style={red,font=\footnotesize},
conn/.style={->,line width=1pt}]

\node[op] (p1) at (-1,1.8) {$\selection_{a=b}$};
\node[pro] (p1s) at (-1.3,2.5)  {$\{\{a,b\},\{c\},\{d\}\}$};
\node[op] (p2) at (1,1.8) {$\selection_{c=d}$};
\node[pro] (p2s) at (1.3,2.5)  {$\{\{a\},\{b\},\{c,d\}\}$};
\node[op] (r) at (0,0.3) {$R$};
\node[pro] (rs) at (0, 0)  {$\{\{a\},\{b\},\{c\},\{d\}\}$};

\draw[conn,bend left] (r) to (p1);
\draw[conn,bend right] (r) to (p2);

\end{tikzpicture}
\end{minipage}
\begin{minipage}{0.4\linewidth}
\begin{tikzpicture}
[op/.style={anchor=south},
pro/.style={red,font=\footnotesize},
conn/.style={->,line width=1pt}]

\node[op] (p1) at (-0.5,1.8) {$\selection_{a=b}$};
\node[op] (p2) at (0.5,1.8) {$\selection_{c=d}$};
\node[op] (r) at (0,0.3) {$R$};
\node[pro] (rs1) at (0, 0)  {$\ect(R, \selection_{a=b}(R)) = \{\{a,b\}\}$};
\node[pro] (rs2) at (0, -0.5)  {$\ect(R, \selection_{c=d}(R)) = \{\{c,d\}\}$};

\draw[conn,bend left] (r) to (p1);
\draw[conn,bend right] (r) to (p2);

\end{tikzpicture}
\end{minipage}
\end{center}
\caption{Example for computing $\ecb$, the result of bottom-up traversal (left), $\ect$, the auxiliary result of top-down traversal (right), and $\ecProp$ (equal to $\ecb$ shown on the left for this example)}
\label{fig:ex-ec-bottom-up-top-down}
\end{figure}

 \begin{Example}
Consider the algebra tree shown in Fig.~\ref{fig:ex-ec-inference} (left). The result of bottom-up inference and final result after top-down inference  of the $\ecProp$ property is shown in the middle and right of this figure, respectively. During bottom-up inference the equalties enforced by the selections are incorporated into the sets of equivalence classes for the input relations $R(a,b)$ and $S(c,d)$. The projections preserve equivalences $x \aEquiv y$ for which  $x$ is projected on and either  $y$ is also an attribute in the projection result or $y$ is a constant. Equivalences from both inputs (modulo renaming) hold for the intersection. The top-down rules propagate equivalences from parents to children. Note that in this graph every operator has only one parent. Thus, the final set of equivalence classes for an operator $op$ is $\ecClosure(\ecb(op) \union \ect(op,p))$ where $p$ is the only parent of the operator. Based on the final result produced by top down inference, we know that only tuples from $R$ where $a = 3$ are of interest and for relation $S$ we only are interested in the tuple $(3,3)$.   
 \end{Example}

 \begin{Example}\label{eg:ec-bottom-up-top-down-example}
   Consider the algebra tree shown in Fig.~\ref{fig:ex-ec-bottom-up-top-down}, the $\ecb$ (bottom-up inference) and $\ect$ (top-down inference) are shown on the left and right of this figure, respectively. Note that for this particular example, the final result $\ecProp$ is same as $\ecb$ shown on the left. During bottom-up inference, the equalities enforced by the selections are incorporated into the sets of equivalence classes for the input relation $R$. The top-down rules then determine which equivalences can be propagated from a parent to its child.
   As the result of top-down inference we get $\ect(op,p)$ for an operator $op$ and one of its parents $p$. Since both selections do not have parents, nothing is inferred for these operators. For operator $R$ we get $\ect(R, \selection_{a=b}(R)) = \{\{a,b\}, \{c\}, \{d\}\}$ and $\ect(R, \selection_{c=d}(R)) = \{\{a\},\{b\},\{c,d\}\}$. As explained in the beginning of this section, it is only safe to propagate equivalences that hold for all parents. Here there are no equivalences that hold for both parents of $R$. Thus, we have $\ecProp(R) = \ecb(R)$. To see why it is unsafe to propagate equivalence that only hold for some parents consider the following instance $R = \{(1,1,2,3), (1,3,2,2)\}$. Then, $\selection_{a=b}(R) = \{(1,1,2,3)\}$ and $\selection_{c=d}(R)=\{(1,3,2,2)\}$. If we would have propagated equivalences unconditionally from parents to children, then $\ecProp(R) = \{\{a,b\}, \{c,d\}\}$. However, replacing $R$ with $\selection_{a=b}(R)$ ($\selection_{c=d}$) would affected the result of $\selection_{c=d}$ ($\selection_{a=b}$) and, thus, neither $a \aEquiv b$ nor $c \aEquiv d$ hold for $R$. 
 \end{Example}

\section{Property Inference Correctness Proofs}\label{sec:supp-correct-proof}

We now prove the correctness of our property inference rules. Recall that, as mentioned in the first paragraph of Sec.~\ref{sec:prop-inference}, the correctness criteria we are applying are based on Defs.~\ref{def:def_keys},  \ref{def:def_set}, \ref{def:def_ec}, and \ref{def:def_icols}.
In the following, we make use of the height $\qH(Q)$ of a query $Q$ which is defined as:
\begin{itemize}
\item If $Q = R$ where $R$ is a relation, then $\qH(Q) = 1$
\item If $Q = op(Q_1)$, then $\qH(Q)  = \qH(Q_1) + 1$
\item If $Q = op(Q_1,Q_2)$, then $\qH(Q) = \max(\qH(Q_1),\qH(Q_2)) + 1$.
\end{itemize}

We also define the depth $\qD(op)$ of an operator $op$ in a query $\query$ assuming that operators are uniquely identified within the context of a query. Consider an operator $op$ within a query $\query$ and let $\qSub$ be the subquery rooted at $op$. We define:

$$\qD(op) = \qH(\query) - \qH(\qSub)$$.

For example, if $Q = \projection_A(\selection_{\theta}(R))$, then  $\qH(Q)=3$. The depth of operator $\projection_A$ is $\qD(\projection_A) = \qH(\query) - \qH(\query) = 0$, $\qD(\selection_{\theta}) = \qH(\query) - \qH(\selection_\theta(R)) = 3- 2 = 1$ and $\qD(R) = 2$.

\begin{Theorem}\label{thm:icols}
  Let $Q$ be a query, $op$ an operator in $Q$, and $\qSub$ be the subquery of $Q$ rooted at $op$. 
The set $icols(op)$ 
is a sufficient set of attributes for $\qSub$. 
\end{Theorem}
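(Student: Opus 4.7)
\medskip

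The plan is to prove the theorem by induction on the depth $\qD(op)$ of the operator in $Q$, using the fact that the inference rules in Table~\ref{tab:top-down-icols} are top-down: the value of $\icolsProp(op)$ is determined by accumulating contributions from each parent of $op$ and from the operator-specific attribute requirements of those parents. The goal at each inductive step is to establish the equivalence $\query \equiv \query[\qSub \gets \projection_{\icolsProp(op)}(\qSub)]$ where $\qSub$ is the subquery rooted at $op$.

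For the base case $\qD(op) = 0$, the operator $op$ is the root $\rootOp$, so Rule~1 gives $\icolsProp(\rootOp) = \schema{\rootOp}$, and $\projection_{\schema{\query}}(\query) \equiv \query$ holds trivially. For the inductive step, I would assume the theorem for every operator at depth strictly less than $d$, and then consider an operator $op$ at depth $d$ whose parents in the algebra DAG are $p_1, \dots, p_k$. Writing $A_{op,p_i}$ for the contribution to $\icolsProp(op)$ demanded by the rule triggered at parent $p_i$, we have $\icolsProp(op) = \bigcup_i A_{op,p_i}$, and by the induction hypothesis each $\icolsProp(p_i)$ is already sufficient for the subquery rooted at $p_i$.

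The core of the argument reduces to a collection of per-operator lemmas, one for each rule in Table~\ref{tab:top-down-icols}: for every operator $p_i = f(op, \dots)$, one must show the local equivalence
\begin{equation*}
\projection_{\icolsProp(p_i)}\bigl(f(op,\dots)\bigr) \equiv \projection_{\icolsProp(p_i)}\bigl(f(\projection_{A_{op,p_i}}(op),\dots)\bigr).
\end{equation*}
Once these local equivalences are established, replacing $\qSub$ by $\projection_{\icolsProp(op)}(\qSub)$ preserves, for each parent $p_i$, the relation $\projection_{\icolsProp(p_i)}(\qSub_{p_i})$ (because $\icolsProp(op) \supseteq A_{op,p_i}$, so adding extra attributes cannot affect the projected output), and then the inductive hypothesis for each $p_i$ lets us propagate equivalence all the way up to $\query$.

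The per-operator lemmas are straightforward for cases where all required attributes are explicit in the rule: for $\selection_\theta$ we need $\icolsProp(\curOp) \cup \colsOf(\theta)$, for generalized projection we need all attributes appearing in the retained projection expressions, for aggregation we need the group-by attributes plus the aggregated attribute (retaining group-by attributes is what ensures the multiplicity of each output group is preserved), and for window we additionally need the partition-by and order-by attributes. The obligation to keep $\schema{R}$ for duplicate elimination, union, intersection, and difference similarly follows because dropping any attribute could change tuple multiplicities or schema compatibility. The main obstacle, and the place requiring the most care, is handling the interaction between multiplicities and projection under bag semantics for aggregation and window: the rule must retain enough attributes that the groups (and, for $\win$, the windows) computed over the projected input coincide with those computed over the original input, so that the values of $f(a)$ agree on every surviving tuple. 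Once these multiplicity-preservation arguments are spelled out, the combined induction yields the theorem; minimality is never claimed and hence never needs to be shown.
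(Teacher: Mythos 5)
Your proposal is correct and follows essentially the same route as the paper's proof: induction on operator depth, reduction to the multi-parent case via the monotone union $\icolsProp(op) = \bigcup_i A_{op,p_i}$ together with the fact that supersets of sufficient attribute sets remain sufficient, and a case analysis establishing a local projection push-down equivalence for each rule in Table~\ref{tab:top-down-icols}. The per-operator obligations you identify (retaining $\colsOf(\theta)$, group-by plus aggregated attributes, partition-by and order-by attributes, and the full schema for duplicate elimination and set operations, with multiplicity preservation as the delicate point) are exactly the ones the paper discharges.
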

\begin{proof}
Recall that according to Def.~\ref{def:def_icols},  a set of attributes $E$ is called sufficient for $\qSub$ if $Q \equiv Q[\qSub \leftarrow \projection_{E}(\qSub)]$. Thus, we have to show that for any query $\query$, subquery $\qSub$ where $op$ is the root of $\qSub$ the following equivalence holds: $Q[\qSub \leftarrow \projection_{\icolsProp(op)}(\qSub)]$. Note that $\qSub \equiv \qSub'$ implies $Q[\qSub \gets \qSub']$ for any subquery $\qSub$ of a query $\query$ and query $\qSub'$. Thus, where convenient we will prove $\qSub \equiv \projection_{\icolsProp(op)}(\qSub)$ instead of directly proving $Q[\qSub \leftarrow \projection_{\icolsProp(op)}(\qSub)]$.
We prove the claim by induction over the depth of an operator $op$. 
Note that we will prove the claim for one parent of an operator at a time.
This is correct since 1) all inference rules are monotone in the sense that they may add additional attributes to $\icolsProp(R)$, but never remove any attribute from $\icolsProp(R)$; and 2) that any superset of a sufficient set of attributes is also sufficient.
That is, if we prove that $\icolsProp(R)$ inferred based on one parent of $R$ is sufficient wrt. to this parent, then since $\icolsProp(R)$ is the union of all the sets of sufficient attributes inferred for all parents it follows that $\icolsProp(R)$ is sufficient wrt. all parents. 
The transformation shown below illustrates this argument. We start from an operator $op$ with parents $p_1$ to $p_n$. Let $\icolsProp_j$ denote the set of attributes inferred based on the rule for parent $p_j$. Note that the rules enforce that $\icolsProp(op) = \bigcup_{i \in \{1,\ldots,n\}} \icolsProp_i$. Based on the proof of each individual rule shown in the following, for each parent $p_j$ we can introduce a projection $\projection_{\icolsProp_j}$ on the path between $op$ and $p_j$ without affecting the query result.  We can then merge these individual projections into one projection on $\icolsProp(op) = \bigcup_{i \in \{1,\ldots,n\}} \icolsProp_i$.

  \begin{center}
    \begin{minipage}[c]{1.8cm}
    \begin{tikzpicture}
[op/.style={anchor=south},
conn/.style={->,line width=1pt}]

\node[op] (p1) at (-0.8,-1) {$p_1$};
\node[op] (pi) at (0,-1) {$p_j$};
\node[op] (pn) at (0.8,-1) {$p_n$};

\node[op] (op) at (0,-2) {$op$};

\draw[conn,dotted,-] (p1) to (pi);
\draw[conn,dotted,-] (pi) to (pn);

\draw[conn] (op) to (p1);
\draw[conn] (op) to (pi);
\draw[conn] (op) to (pn);
\end{tikzpicture}
\end{minipage}
\begin{minipage}[c]{0.5cm}
{\huge ${\equiv}$}
\end{minipage}
\begin{minipage}[c]{3.5cm}
    \begin{tikzpicture}
[op/.style={anchor=south},
conn/.style={->,line width=1pt}]

\node[op] (p1) at (-1.2,-1) {$p_1$};
\node[op] (pi) at (0,-1) {$p_j$};
\node[op] (pn) at (1.2,-1) {$p_n$};

\node[op] (s1) at (-1.2,-2) {$\projection_{\icolsProp_1}$};
\node[op] (si) at (0,-2) {$\projection_{\icolsProp_j}$};
\node[op] (sn) at (1.2,-2) {$\projection_{\icolsProp_n}$};

\node[op] (op) at (0,-3) {$op$};

\draw[conn,dotted,-] (p1) to (pi);
\draw[conn,dotted,-] (pi) to (pn);

\draw[conn] (s1) to (p1);
\draw[conn] (si) to (pi);
\draw[conn] (sn) to (pn);

\draw[conn] (op) to (s1);
\draw[conn] (op) to (si);
\draw[conn] (op) to (sn);
\end{tikzpicture} 
\end{minipage}
\begin{minipage}[c]{0.5cm}
{\huge ${\equiv}$}
\end{minipage}
\begin{minipage}[c]{2cm}
    \begin{tikzpicture}
[op/.style={anchor=south},
conn/.style={->,line width=1pt}]

\node[op] (p1) at (-0.8,-1) {$p_1$};
\node[op] (pi) at (0,-1) {$p_j$};
\node[op] (pn) at (0.8,-1) {$p_n$};

\node[op] (s) at (0,-2) {$\projection_{\icolsProp(op)}$};

\draw[conn,dotted,-] (p1) to (pi);
\draw[conn,dotted,-] (pi) to (pn);

\node[op] (op) at (0,-3) {$op$};

\draw[conn] (s) to (p1);
\draw[conn] (s) to (pi);
\draw[conn] (s) to (pn);

\draw[conn] (op) to (s);
\end{tikzpicture} 
\end{minipage}
\end{center}

\myproofpar{Base case}
Let $op$ be an operator of depth 0, i.e., $Q = op(Q_1)$ for some query $Q_1$ ($Q = op(Q_1,Q_2)$ if $op$ is a binary operator). We prove that $\icolsProp(op)$ is a sufficient set of attributes. 
Applying the rules from Table~\ref{tab:top-down-icols}, we get $\icolsProp(op) = \schema{Q}$. Substituting this into the correctness conditions we get: 
$Q[Q \leftarrow \projection_{\schema{Q}}(Q)] = \projection_{\schema{Q}}(Q) \equiv Q$ which trivially holds because a projection on all attributes returns its input unmodified.

\myproofpar{Inductive step} Assume we have proven that the condition of Def.~\ref{def:def_icols} holds for any operator of a query $Q$ with depth less than or equal to $n$. We have to prove that the same holds for any operator $op_{n+1}$ of depth $n+1$. Let $op_n$ denote a parent of such an operator (of depth $n$). Let $Q_{n+1}$ ($Q_{n}$) denote the subquery of $Q$ with root $op_{n+1}$ ($op_{n}$).
The set $\icolsProp(op_{n+1})$ is computed based on $\icolsProp(op_n)$ and the type of operator $op_n$.
Based on the induction hypothesis, we know that the condition of Def.~\ref{def:def_icols} holds for $op_n$.  For each operator type, we have to prove that $Q[Q_{n+1} \gets \projection_{icols(op_{n+1})}(Q_{n+1}) \equiv Q$ given that $Q[Q_n \gets \projection_{icols(op_{n})}(Q_{n}) \equiv Q]$ holds and that $op_n$ is of this type.  

\myproofpar{$op_n = \duplicate$} If $op_n = \duplicate$, then we get $icols(op_{n+1}) = \schema{Q_{n+1}}$.
Obviously this holds, because $Q_{n+1} \equiv \projection_{\schema{Q_{n+1}}}(Q_{n+1})$.

\myproofpar{$op_n = \selection$} If $op_n = \selection_\theta$, then 
we get $icols(op_{n+1}) = icols(op_n) \cup cols(\theta)$ where $cols(\theta)$ denotes the columns referenced in the selection condition $\theta$. We have to show that $$Q[ Q_n \gets \projection_{icols(op_n)}(\selection_\theta(\projection_{icols(op_n) \cup cols(\theta)}(Q_{n+1})))] \equiv Q$$ This holds, because a projection can be pushed through a selection as long as $cols(\theta)$ is retained (the condition $\theta$ is only well-defined if all attribute from $cols(\theta)$ are available).

\myproofpar{$op_n = \projection$} Consider $op_n = \projection_ {A}$ for  $ A = e_{1} \rightarrow b_{1},...,e_{n} \rightarrow b_{n}$.  
We have $\icolsProp{}(op_{n+1}) = cols(e_1) \cup ... \cup cols(e_n)$ where $cols(e_i)$ denotes the columns referenced in the expression $e_i$.
We have to show that $$Q[ Q_n \gets \projection_{icols(op_n)}(\projection_ {A}(\projection_{\icolsProp(op_{n+1})}(Q_{n+1})))] \equiv Q$$ This holds, because the result of a projection is not affected by removing attributes that are not referenced in any of its projection expressions.

\myproofpar{$op_n = \crossprod$}  Consider $op_n = Q_{left} \crossprod Q_{right}$ and let $op_{left}$ and $op_{right}$ denote the root operators of $Q_{left}$ and  $Q_{right}$, respectively. We have $   \icolsProp{}(op_{left})  = \icolsProp{}(op_n) \intersection \schema{Q_{left}} $  and $  \icolsProp{}(op_{right})  = \icolsProp{}(op_n) \intersection \schema{Q_{right}}$.
We have to prove that:
\begin{align*}
  Q[ Q_n \gets \projection_{icols(op_n)}(\projection_{ icols(op_{left})}(Q_{left}) \crossprod Q_{right})] \equiv Q\\
  Q[ Q_n \gets \projection_{icols(op_n)}(Q_{left} \crossprod \projection_{\icolsProp{}(op_{right})} (Q_{right}))] \equiv Q
\end{align*}
Since cross product is commutative, it is sufficient to prove one of these two equivalences. 
Based on the induction hypothesis we know that 
$$  Q[ Q_n \gets \projection_{icols(op_n)}(Q_{left} \crossprod Q_{right})] \equiv Q$$
The above equivalence follows from the standard algebraic equivalence shown below: $$\projection_{A}(R \crossprod S) = \projection_A(\projection_{A \cap \schema{R}}(R) \crossprod S)$$.

\myproofpar{$op_n = \aggregation$} For $op_n = \Aggregation{G}{f(a)}$ where $G = \{b_1, \ldots, b_n\}$ we have $icols(op_n) = G \cup \{a\}$. The aggregation's output  is computed based on the group-by attributes $G$ and the input $a$ alone. Thus, $\Aggregation{G}{f(a)}(\projection_{b_1, \ldots, b_n, a}(Q_{n+1})) \equiv \Aggregation{G}{f(a)}(Q_{n+1})$.

\myproofpar{$op_n = \difference$ or $op_n = \intersection$ or $op_n = \union$} Let $op_n = \difference$, $\union$, or $\intersection$, $op_{n+1}$ be either the left or the right input of the set operation, and $\qSub$ be the subquery rooted at $op_{n+1}$. We have $\icolsProp(op_{n+1}) = \schema{\qSub}$. As established for other operators above, $\schema{\qSub}$ is a sufficient set of attributes for $\qSub$.

\myproofpar{$op_n = \win$} Let  $op_n = \Win{f(a)}{x}{G}{O}(op_{n+1})$. Similar to aggregation,  to compute the output of a window operator we need the partition attributes $G$, order attributes $O$, and the input attribute $a$ for the aggregation function. Since $\icolsProp{}(op_{n+1}) = \icolsProp{}(op_n) - \{x\} \cup \{a\} \cup G \cup O$, all attributes that are need to compute $f(a)$ are present. The result then follows from the fact that a projection can be redundantly pushed through a window operator. That is, let $A \supseteq (\{a\} \cup G \cup O)$ and $x \not\in A$, then $\projection_A(\Win{f(a)}{x}{G}{O}(R)) \equiv \projection_A(\Win{f(a)}{x}{G}{O}(\projection_A(R)))$. To see why this is the case consider the definition of $\win$. Each tuple is extended with an additional attribute $x$ that stores the result of $f(a)$ over $P_t$ which is defined as $\{ (t_1.a)^n | {t_1}^n \in R \wedge t_1.G = t.G \wedge t_1 \leq_O t \} $. Note that none of the expressions used in the comprehension are affected by a projection that retains $a$, $G$, and $O$. Thus, the equivalence $\projection_A(\Win{f(a)}{x}{G}{O}(R)) \equiv \projection_A(\Win{f(a)}{x}{G}{O}(\projection_A(R)))$ holds.
\end{proof}

\begin{table*}
\centering
\renewcommand{\arraystretch}{1.4}

\begin{tabular}{|c|c|l|} \hline 
\rowcolor[gray]{.9} Rule & Operator $\curOp$ & Inferred property \textit{\setProp{}} for the input(s) of $\curOp$\\ \hline 
  1&  $\circledast$  & $\setProp{}(\circledast) = false$
  \\ \hline
  2&  $_{G}\aggregation _{F(a)}(R)$ & $ \setProp{}(R) = false$
  \\ \hline
 3 & $\selection_{\theta}(R)$ & $ \setProp{}(R) = \setProp{}(R) \wedge \setProp{}(\curOp) $ 
  
  \\ \hline
 4 & $\projection_{A}(R)$ & $  \setProp{}(R) = \setProp{}(R) \wedge \setProp{}(\curOp)  $ \\ \hline
                     
 5 &  $\duplicate(R)$ & $  \setProp{}(R) = \setProp{}(R) \wedge true $ \\ \hline
 6-9 &  $R \join_ {a=b} S$ or $R \crossprod S$ or  & $ \setProp{}(R) = \setProp{}(R) \wedge \setProp{}(\curOp) $ \\ 
  & $R \union S$ or $R \intersection S$ & $ \setProp{}(S) = \setProp{}(S) \wedge \setProp{}(\curOp)  $ \\ \hline
10 & $R \difference S$ & $ \setProp{}(R) = false $ \\ 
     &                 & $\setProp{}(S) = false  $ \\ \hline 
11 & $\omega_{f(a) \to x, G\|O}(R)$ & $\setProp{}(R) = false$ \\ \hline
  \end{tabular}  

  \caption{Top-down inference of Boolean property \textit{\setProp{}}}
\label{tab:top-down-set-supp}
\end{table*}

\begin{Theorem}\label{thm:bottom-up-ec}
Let $op$ be an operator in a query $\query$ and $\qSub$ denote the subquery rooted at $op$, then  $\ecb(op$)  is a set of equivalence classes for $\qSub$. \end{Theorem}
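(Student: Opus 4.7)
The plan is to prove Theorem~\ref{thm:bottom-up-ec} by structural induction on the height $\qH(\qSub)$ of the subquery rooted at $op$. Since the rules in Table~\ref{tab:bottom-up} only refer to $\ecb$ of the children of $op$ (never to the surrounding context $\query$), it actually suffices to establish the stronger unconditional statement: for every $E \in \ecb(op)$ and every $a,b \in E$ we have $\qSub \equiv \selection_{a=b}(\qSub)$. From this the original claim follows, since $\qSub \equiv \selection_{a=b}(\qSub)$ immediately implies $\query \equiv \query[\qSub \gets \selection_{a=b}(\qSub)]$ by substitution, which is exactly $a \aEquiv b$ per Def.~\ref{def:def_ec}.

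For the base case $\qSub = R$ we have $\ecb(R) = \{\{a\} \mid a \in \schema{R}\}$, and each singleton trivially satisfies $a \aEquiv a$ by reflexivity of $\aEquiv$ (already established). In the inductive step I will handle each operator from Table~\ref{tab:bottom-up} individually, assuming the claim for all child subqueries. For $\selection_\theta(R)$, any conjunct $a=b$ extracted from CNF$(\theta)$ satisfies $\selection_\theta(R) \equiv \selection_{a=b}(\selection_\theta(R))$ since $\theta$ already enforces $a=b$; equivalences inherited from $R$ survive because $\selection_{a=b}$ commutes with $\selection_\theta$. For generalized projection, I will verify that substituting equivalent expressions within a renaming preserves equality of all output tuples. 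For $\crossprod$, each inherited equivalence restricts to a subset of one side's schema, so applying $\selection_{a=b}$ to the product is equivalent to pushing it to the appropriate child. For $\Aggregation{G}{f(a)}$, filtering by $a=b$ where $a,b \in G \cup \aDom$ does not alter group boundaries or the aggregated bag, and the trivial singleton $\{f(a)\}$ is handled by reflexivity. For $\duplicate$ and $\difference$ (left input only), selections commute downward. For $\union$ and $\intersection$, I will use the standard distributivity $\selection_{a=b}(R \cup S) = \selection_{a=b}(R) \cup \selection_{a=b}(S)$ (and analogously for $\intersection$) together with the inductive hypothesis on the relevant side(s). For $\omega$, the window operator preserves individual tuples, so any input equivalence still holds, and the fresh attribute $x$ is a reflexive singleton.

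The main obstacle is justifying the uses of $\ecClosure$. The closure operator takes a raw set of two-element ECs produced by a rule and merges overlapping classes into potentially large ECs; I must show that if $\{a,b\}$ and $\{b,c\}$ are both equivalence classes in the sense of Def.~\ref{def:def_ec}, then $\{a,b,c\}$ is too, i.e., $a \aEquiv c$ holds. This is exactly the transitivity of $\aEquiv$ proven earlier, applied iteratively until the least fixed point of $\cal E$ is reached. I will state this as a separate auxiliary lemma: if $S$ is a set of ECs for $\qSub$, then so is $\ecClosure(S)$. Given this lemma, every rule that ends in an application of $\ecClosure$ reduces to showing that each pre-closure generator $\{a,b\}$ it introduces is an EC, which is handled by the operator-specific arguments above.

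A second subtlety is the rule for $\union$: here we form $\anEC \cap \anEC'$ for $\anEC \in \ecb(R)$ and $\anEC' \in \ecb(S)[\schema{S}/\schema{R}]$, and must argue that an equivalence $a \aEquiv b$ holding on both sides after renaming implies $a \aEquiv b$ on the union. This follows from distributivity of selection over union and the inductive hypothesis on both children; renaming is unproblematic because the schemas of $R$ and $S$ must agree for $\union$ to be well-defined, so we are simply aligning attribute names. Once all cases are discharged, the induction completes and the theorem holds.
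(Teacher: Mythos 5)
Your proposal is correct and follows essentially the same route as the paper's proof: induction on the height of $\qSub$, a per-operator case analysis using the standard selection-commutation equivalences, reflexivity for singletons, and transitivity of $\aEquiv$ to discharge the $\ecClosure$ applications. Your one refinement---strengthening the inductive invariant to the local statement $\qSub \equiv \selection_{a=b}(\qSub)$ rather than the contextual $\query \equiv \query[\qSub \gets \selection_{a=b}(\qSub)]$---is sound (the local equivalence implies the contextual one by substitution, and the bottom-up rules never use contextual information) and merely makes explicit what the paper does informally when it remarks that it will sometimes prove that $a=b$ holds for all result tuples instead of proving $a \aEquiv b$ directly.
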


\begin{proof}
  To prove that a set of attributes and constants $\anEC$ is an equivalence class we have to show that $\forall a,b \in E: Q \equiv Q[\qSub \leftarrow \selection_{a=b}(\qSub)]$ holds. In the following, let $op$ denote the operator rooted at $\qSub$. Several rules make use of operator $\ecClosure$ which merges overlapping equivalence classes. Note that since $a \aEquiv b$ was shown to be transitive, the application of this operator is guaranteed to return a set of equivalence classes if its input is a set of equivalence classes. That is, it is sufficient to show that the claim holds for an input to $\ecClosure$ to prove that it holds for the output of this operator. Furthermore, observe that if we can prove that $a=b$ for all tuples in the result of a subquery $\qSub$ then this implies that $Q \equiv Q[\qSub \gets \selection_{a=b}(\qSub)]$, i.e., $a \aEquiv b$. Based on this observations we will sometimes prove that $a=b$ holds for all tuples instead of proving directly that $a \aEquiv b$. 
  We show the claim by induction over the height of a subquery $\qSub$.

\myproofpar{Base case}
Consider a query $Q$ and a subquery $\qSub$ of height $1$, i.e., $\qSub = R$ for some relation $R$. According Rule 1 from Table~\ref{tab:bottom-up},  $\ecb(op)$ contains one singleton set $\{a\}$ for each attribute $a \in \schema{R}$. WLOG consider a particular class $\{a\}$ which represents a single equivalence $a \aEquiv a$. We know that $a \aEquiv a$ since $\aEquiv$ is an equivalence relation and, thus, is reflexive.

\myproofpar{Inductive step}
Assume that the claim holds for subqueries with height up to and including $n$. We have to show that the claim holds for a subquery of height $n+1$.  
We prove this individually for each type of operator. 

\myproofpar{$op = \selection$} 
Consider a selection $\selection_{\theta}(R)$. We have $\ecb(\selection_{\theta}(R)) = \ecClosure (\ecb(R) \cup \{\{a,b\}\mid  \theta \Rightarrow (a=b) \})$. 
Consider an equivalence $a \aEquiv b$ from $\ecb(R)$. We know that $\query \equiv \query[R \gets \selection_{a=b}(R)]$. Thus, it will be sufficient to focus on tuples $t$ from $R$ for which $t.a = t.b$ holds. 
If a tuple $t$ is in the result of the selection, then the tuple also exists in the selection's input $R$. Thus, if $a \aEquiv b$ holds in $R$ then $t.a=t.b$ for any such $t$ which implies that $a \aEquiv b$ holds for $\selection_{\theta}$. Based on the definition of selection, a tuple is in the result of $\selection_{\theta}(R)$ if it exists in the input  and $t \models \theta$. Consider $\{a',b'\} \in \{\{a,b\}\mid  \theta \Rightarrow (a=b) \}$. Since, $\theta \Rightarrow (a'=b')$ from $t \models \theta$ we can deduce that $t.a' = t.b'$ and, thus, $a' \aEquiv b'$.

\myproofpar{$op = \projection$} 
Let $op = \projection_{A}(R)$ for $A = a_1 \to b_1, \ldots, a_n \to b_n$.
Then 
$\ecb(op) =  \ecClosure ( \{\{ b_i, b_j \} \mid \exists \anEC \in \ecb(R) \wedge a_i \in \anEC \wedge a_j \in \anEC \}
\union \{\{ b_i, c \} \mid \exists \anEC \in \ecb(R) \wedge a_i \in \anEC \wedge c \in \aDom \}
\union \{\{b_i\} | i \in \{1,\ldots,n\}\})$.
Note that any singleton attribute set is an equivalence class because of reflexivity. Thus, we only have to prove that the two element sets in  $\{\{ b_i, b_j \} \mid \exists \anEC \in \ecb(R) \wedge a_i \in \anEC \wedge a_j \in \anEC \}$ and in $\{\{ b_i, c \} \mid \exists \anEC \in \ecb(R) \wedge a_i \in \anEC \wedge c \in \aDom \}$ are equivalence classes. First consider $\{b_1,b_2\} \in \{\{ b_i, b_j \} \mid \exists \anEC \in \ecb(R) \wedge a_i \in \anEC \wedge a_j \in \anEC \}$ for some attributes $b_1$ and $b_2$. 
 Based on the induction hypothesis we know that $Q \equiv Q[R \gets \selection_{a_1 = a_2}(R)]$. Consider the subquery $\qSub' = \projection_A(\selection_{a_1 = a_2}(R))$ and a tuple $t$ from $\selection_{a_1 = a_2}(R)$. We know that $t.a_1 = t.a_2$. Now consider, $t' = t.A$. We have $t'.b_1 = t.a_1$ and $t'.b_2 = t.a_2$ and, thus, $t'.b_1 = t'.b_2$. This implies that $\projection_A(\selection_{a_1 = a_2}(R)) \equiv \selection_{b_1 = b_2}(\projection_A(R))$ from which follows $Q \equiv Q[\projection(R) \gets \selection_{b_1 = b_2}(\projection_A(R))]$, i.e., $b_1 \aEquiv b_2$ holds. Now consider $\{b_1,c\} \in \{\{ b_i, c \} \mid \exists \anEC \in \ecb(R) \wedge a_i \in \anEC \wedge c \in \aDom \}$ where $c$ is a constant. Then we can use a modified version of the argument used for the case with two attributes to prove that $a_1 \aEquiv c$ over $R$ implies $b_1 \aEquiv c$ for the projection.

\myproofpar{$op = \crossprod$} 
We have $ec(R \crossprod S) = \ecb(R) \cup \ecb(S)$. WLOG consider an equivalence $a \aEquiv b$ that holds over the left input. Then $Q \equiv Q[R \gets \selection_{a=b}(R)]$. Consider the modified subquery $\qSub' = \selection_{a=b}(R) \crossprod S$. Applying the standard equivalence $\selection_{\theta}(R) \crossprod S \equiv \selection_{\theta}(R \crossprod S)$ we get $\selection_{a=b}(R) \crossprod S \equiv \selection_{a=b}(R \crossprod S)$ and in turn $Q \equiv Q[(R \crossprod S) \gets \selection_{a=b}(R \crossprod S)]$ which implies $a \aEquiv b$. The proof for an equivalence that holds over the right input is symmetric.

\myproofpar{$op = \aggregation$} 
We have $ec(\Aggregation{G}{f(a)}(R)) = \{  G \cap \anEC \mid \anEC \in \ecb(R) \}  \cup \{\{f(a)\}\}$.
The singleton $\{\{f(a)\}\}$ trivially is an equivalence class.
Consider an equivalence $a \aEquiv b$ from $\ecb(R)$ where $a, b \in G$. We know that  $Q \equiv Q[R \gets \selection_{a=b}(R)]$. Since selections over group-by attributes can be pulled up through aggregations this implies that $a \aEquiv b$ holds for the aggregation. Using an analog argument we can show that an equivalence $a \aEquiv c$ where $c \in \aDom$ holds for the result of the aggregation if it holds over the input.

\myproofpar{$op = \duplicate$} 
We have $\ecb(\duplicate(R)) = \ecb(R)$. Since selections can be pulled up through duplicate elimination operators, if $a \aEquiv b$ holds for $R$ then $a \aEquiv b$ holds for $\duplicate(R)$. 

\myproofpar{$op = \union$}
We have $\ecb(R \union S) = \ecClosure (\{ \anEC \cap \anEC' \mid \anEC \in \ecb(R) \wedge \anEC' \in \ecb(S)[\schema{S}/\schema{R}] \})$. WLOG let $\schema{R} = (a_1, \ldots, a_n)$ and $\schema{S} = (b_1, \ldots, b_n)$. We can restate the above comprehension as $a_i \aEquiv a_j$ holds for $R \union S$ if $a_i \aEquiv a_j$ holds for $R$ and $b_i \aEquiv b_j$ holds for $S$. We can prove that $a_i \aEquiv a_j$ holds for $R \union S$ by applying the following standard equivalence $\selection_{a_i = a_j}(R \union S) \equiv \selection_{a_i=a_j}(R) \union \selection_{b_i=b_j}(S)$. 

\myproofpar{$op = \intersection$} 
The proof is analog to the proof for union using the equivalence $\selection_{a_i = a_j}(R \intersection S) \equiv \selection_{a_i = a_j}(R) \intersection S \equiv R \intersection \selection_{b_i = b_j}(S)$ to prove that any equivalence $a_i \aEquiv a_j$ ($b_i \aEquiv b_j$) that holds for $R$ ($S$) also holds for $R \intersection S$. 

\myproofpar{$op = \difference$} 
Similar to the proofs for union and intersection. The equivalence we are using here is $\selection_{a_i = a_j} (R - S) \equiv \selection_{a_i = a_j}(R) - S$. 
Since the result of a difference is  a subset of the result of its left child, any equivalence that holds for its left child also holds for the difference operator. 

\myproofpar{$op = \win$} 
We have $\ecb(\Win{f(a)}{x}{G}{O}(R)) = \ecb(R) \cup \{\{x\}\}$. The singleton class $\{x\}$ holds because of reflexivity. Any class $a \aEquiv b$ that holds for $R$ also holds for $op$ based on the following equivalence $\Win{f(a)}{x}{G}{O}(\selection_{a=b}(R)) \equiv \selection_{a=b}(\Win{f(a)}{x}{G}{O}(\selection_{a=b}(R)))$.  
\end{proof}

\begin{Theorem}\label{thm:top-down-ec}
Let $Q$ be a query, $\qSub$ a subquery of $Q$, and $op$ be the root operator of $\qSub$. Every  $\anEC \in \ecProp(op)$ is an equivalence class for $\qSub$.  
\end{Theorem}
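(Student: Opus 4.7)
The plan is to prove the claim by induction on the depth $\qD(op)$ of the operator in $Q$, combining Theorem~\ref{thm:bottom-up-ec} for the bottom-up contributions with a per-operator analysis of the top-down rules for the remainder. Recall that
\[
\ecProp(op) = \ecClosure\bigl(\ecb(op) \cup \{\{a,b\} \mid \forall p \in \opparents(op) \ \exists \anEC \in \ect(op,p): \{a,b\} \subseteq \anEC\}\bigr).
\]
By Theorem~\ref{thm:bottom-up-ec}, every class in $\ecb(op)$ is already a valid EC for $\qSub$. The operator $\ecClosure$ merges overlapping classes, which preserves the EC property because $\aEquiv$ is an equivalence relation (hence transitive), as established in Sec.~\ref{sec:properties-inference}. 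Thus it suffices to show that every pair $\{a,b\}$ contributed by the top-down component satisfies $a \aEquiv b$ in $\qSub$.

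The base case is $\qD(op) = 0$, where $op$ is the root $\rootOp$ of $Q$ and has no parents; then the top-down contribution is empty and the claim reduces to Theorem~\ref{thm:bottom-up-ec}. For the inductive step, assume the theorem holds for all operators of strictly smaller depth. For each parent $p$ of $op$, $\qD(p) < \qD(op)$, so by the induction hypothesis $\ecProp(p)$ consists of valid ECs for the subquery $\qSub_p$ rooted at $p$. The main technical lemma I would prove is a \emph{per-parent push-down lemma}: for each operator type of $p$ and each class $\anEC \in \ect(op,p)$ computed by the corresponding row of Tab.~\ref{tab:top-down}, and any $a,b \in \anEC$, we have
\[
\qSub_p \equiv \qSub_p[op \leftarrow \selection_{a=b}(op)].
\]
Each case is established from a standard algebraic equivalence that ``pulls up'' $\selection_{a=b}$ through $p$ so that it becomes a selection on an EC attested at $p$ (which holds by the inductive hypothesis applied to $p$). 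For instance, when $p = \selection_\theta$ we use $\selection_\theta(\selection_{a=b}(op)) \equiv \selection_{a=b}(\selection_\theta(op))$; when $p = \projection_A$ we undo the renaming using $\projection_A(\selection_{a=b}(op)) \equiv \selection_{A(a)=A(b)}(\projection_A(op))$; when $p = R \crossprod S$ we restrict $\anEC$ to $\schema{op}$ so that $\selection_{a=b}$ pushes into the correct input; for aggregation and window operators we restrict to group-by/partition-by attributes and constants because selections on these attributes commute with $\aggregation$ and $\win$; for $\union$ and $\intersection$ we apply the selection to both sides after renaming; and for $\difference$ only the left input is reachable from $op$, which justifies restricting propagation to the left child.

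Granted the per-parent lemma, the hypothesis ``$\{a,b\} \subseteq \anEC$ for every parent $p$'' ensures that for every $p \in \opparents(op)$ we have $\qSub_p \equiv \qSub_p[op \leftarrow \selection_{a=b}(op)]$. Because the DAG $Q$ depends on $op$ only through its parents, simultaneously replacing $op$ with $\selection_{a=b}(op)$ leaves every parent's result unchanged and therefore leaves $Q$ unchanged, yielding $a \aEquiv b$ in $\qSub$ as required. The main obstacle is precisely this DAG (multi-parent) bookkeeping: a naive tree-style argument would push an equivalence down from a single parent and accidentally alter the input observed by the other parents; the ``$\forall p$'' quantifier in the definition of $\ecProp$ is what makes the simultaneous substitution sound, and a careful proof must use this quantifier rather than working one parent at a time. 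A small additional subtlety is that some rules (e.g.,~for $\projection$) involve closure under renaming and constants, so the per-parent lemma must be stated for classes containing constants as well; this is handled by the same algebraic equivalences, treating a constant $c$ as a zero-ary projection expression.
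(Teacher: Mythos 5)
Your proposal is correct and follows essentially the same route as the paper's own proof: induction on operator depth with the root as base case (where $\ecProp(\rootOp)=\ecb(\rootOp)$ is covered by Theorem~\ref{thm:bottom-up-ec}), a per-parent case analysis showing each top-down rule corresponds to a redundant selection push-down through that operator type, and the observation that the $\forall p$ quantifier in the definition of $\ecProp$ is what licenses merging the per-parent selections into a single $\selection_{a=b}$ below all parents in the DAG. The only cosmetic difference is that you state the per-parent lemma as a standalone equivalence of $\qSub_p$ rather than as an equivalence within the context of $Q$ (which is what the induction hypothesis actually supplies), but the underlying three-step argument — introduce the selection above $p$ by the IH, duplicate it below $p$ via the expression-level equivalence, remove the upper copy — is the same one the paper uses.
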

\begin{proof}
  Recall that $\ecProp(op)$ is computed as
$\ecProp(op) = \ecClosure(\ecb(op) \cup \{ \{a,b\} \mid \forall p \in \opparents(op): \exists \anEC \in \ect(op,p): a \in \anEC \wedge b \in \anEC \})$ where $\ecb(op)$ is the result of bottom-up inference which we have already proven to be correct (Theorem~\ref{thm:bottom-up-ec}).  
  In the top-down inference we have to take into account that one operator may have multiple parents. Only equivalences that hold for all parents can be pushed to a child. This is encoded in the definition of $\ecProp$ by intersecting the equivalence classes for all parents of an operator. Here, $\ect(op,p)$ stores equivalences that can be pushed down to $op$ from a parent $p$. Proving the correctness of $\ecProp(op)$, thus amounts to proving that for any equivalence $a \aEquiv b$ from $\ect(op,p)$, we can push down a selection $\selection_{a=b}$ over $p$ to $op$. To see why this is sufficient to prove that claim WLOG consider an operator $op$ with parents $p_1$, \ldots, $p_n$ for $n \in \mathbb{N}$. The algebra graph fragment corresponding to this operator and its parents are shown below. If an equivalence $a \aEquiv b$ holds for a parent $p_j$ then we can add a selection $\selection_{a=b}$ on top of the parent without affecting the query result. Then based on the proof of top down inference rules  for $\ect(op,p)$ presented in the following, we can push such a selection redundantly, still preserving equivalence. Finally, if such a selection can be pushed for every parent, we can replace the individual selections with a single selections that lies on all paths between $op$ and its parents. Based on the definition of $\aEquiv$ this implies that $a \aEquiv b$ holds for $op$. 

  \begin{center}
    \begin{minipage}[c]{4cm}
    \begin{tikzpicture}
[op/.style={anchor=south},
conn/.style={->,line width=1pt}]

\node[op] (ps1) at (-1.2,0) {$\selection_{a=b}$};
\node[op] (psi) at (0,0) {$\selection_{a=b}$};
\node[op] (psn) at (1.2,0) {$\selection_{a=b}$};

\node[op] (p1) at (-1.2,-1) {$p_1$};
\node[op] (pi) at (0,-1) {$p_i$};
\node[op] (pn) at (1.2,-1) {$p_n$};

\node[op] (op) at (0,-2) {$op$};

\draw[conn,dotted,-] (ps1) to (psi);
\draw[conn,dotted,-] (psi) to (psn);

\draw[conn] (p1) to (ps1);
\draw[conn] (pi) to (psi);
\draw[conn] (pn) to (psn);

\draw[conn,dotted,-] (p1) to (pi);
\draw[conn,dotted,-] (pi) to (pn);

\draw[conn] (op) to (p1);
\draw[conn] (op) to (pi);
\draw[conn] (op) to (pn);
\end{tikzpicture}
\end{minipage}
\begin{minipage}[c]{1cm}
{\huge ${\equiv}$}
\end{minipage}
\begin{minipage}[c]{3cm}
    \begin{tikzpicture}
[op/.style={anchor=south},
conn/.style={->,line width=1pt}]

\node[op] (ps1) at (-1.2,0) {$\selection_{a=b}$};
\node[op] (psi) at (0,0) {$\selection_{a=b}$};
\node[op] (psn) at (1.2,0) {$\selection_{a=b}$};

\node[op] (p1) at (-1.2,-1) {$p_1$};
\node[op] (pi) at (0,-1) {$p_i$};
\node[op] (pn) at (1.2,-1) {$p_n$};

\node[op] (s1) at (-1.2,-2) {$\selection_{a=b}$};
\node[op] (si) at (0,-2) {$\selection_{a=b}$};
\node[op] (sn) at (1.2,-2) {$\selection_{a=b}$};

\node[op] (op) at (0,-3) {$op$};

\draw[conn,dotted,-] (ps1) to (psi);
\draw[conn,dotted,-] (psi) to (psn);

\draw[conn] (p1) to (ps1);
\draw[conn] (pi) to (psi);
\draw[conn] (pn) to (psn);

\draw[conn,dotted,-] (p1) to (pi);
\draw[conn,dotted,-] (pi) to (pn);

\draw[conn] (s1) to (p1);
\draw[conn] (si) to (pi);
\draw[conn] (sn) to (pn);

\draw[conn] (op) to (s1);
\draw[conn] (op) to (si);
\draw[conn] (op) to (sn);
\end{tikzpicture} 
\end{minipage}

\begin{minipage}[c]{1cm}
{\huge ${\equiv}$}
\end{minipage}
\begin{minipage}[c]{3cm}
    \begin{tikzpicture}
[op/.style={anchor=south},
conn/.style={->,line width=1pt}]

\node[op] (ps1) at (-1.2,0) {$\selection_{a=b}$};
\node[op] (psi) at (0,0) {$\selection_{a=b}$};
\node[op] (psn) at (1.2,0) {$\selection_{a=b}$};

\node[op] (p1) at (-1.2,-1) {$p_1$};
\node[op] (pi) at (0,-1) {$p_i$};
\node[op] (pn) at (1.2,-1) {$p_n$};

\node[op] (s) at (0,-2) {$\selection_{a=b}$};

\draw[conn,dotted,-] (ps1) to (psi);
\draw[conn,dotted,-] (psi) to (psn);

\draw[conn] (p1) to (ps1);
\draw[conn] (pi) to (psi);
\draw[conn] (pn) to (psn);

\draw[conn,dotted,-] (p1) to (pi);
\draw[conn,dotted,-] (pi) to (pn);

\node[op] (op) at (0,-3) {$op$};

\draw[conn] (s) to (p1);
\draw[conn] (s) to (pi);
\draw[conn] (s) to (pn);

\draw[conn] (op) to (s);
\end{tikzpicture} 
\end{minipage}
\end{center}

\myproofpar{Base case} 
Let $op$ be the root operator, then since the root operator of a query has no parents $\ecProp(op) = \ecb(op)$ which is a set of equivalence classes as we have already proven (Theorem~\ref{thm:bottom-up-ec}).

\myproofpar{Inductive step}
Assume operator $op$ is a parent of operator $op_{1}$ (and of $op_2$ if $op$ is binary) and consider $\ect(op_{i},op)$ which is computed based on $\ecProp{}(op)$. Furthermore, let $\qSub$ be the subquery rooted at $op$. Consider an equivalence $a \aEquiv b$ from $\ecProp(op)$. We know that $Q[\qSub \gets \selection_{a=b}(\qSub)] \equiv Q$. If we can prove that we can redundantly push this selection down to $op_i$, then $a \aEquiv b$ also holds for $op_i$. The claim then follows from standard equivalences that allow selections to be pushed through projections, selections, cross products, aggregations (if the selection is over  group-by attributes), union, and intersection. For set difference we only can push selections to the left input. For window operators it is safe to push selections on partition-by attributes.
\end{proof}

\begin{Theorem}
  Consider an operator $op$ in a query $\query$ and let $\qSub$ denote the subquery rooted at $op$. If
$\setProp(op) = true$ then $\qSub$ is duplicate-insensitive.
\end{Theorem}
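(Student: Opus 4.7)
The plan is to prove the theorem by induction on the depth $\qD(op)$ of operator $op$ in the query $\query$. The base case is $\qD(op) = 0$, which means $op$ is the root $\rootOp$ of $\query$; by Rule 1 of Table~\ref{tab:top-down-set} we have $\setProp(\rootOp) = false$, so the implication holds vacuously. For the inductive step, I assume the claim holds for every operator of depth strictly smaller than $d$ and consider an operator $op$ at depth $d$ with $\setProp(op) = true$.

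The key observation is that because $\setProp$ is initialized to $true$ and each inference rule updates $\setProp(op)$ as a conjunction over contributions from its parents, $\setProp(op) = true$ forces every parent $p$ of $op$ to contribute $true$. Inspecting Table~\ref{tab:top-down-set}, the contribution is $true$ in exactly two situations: (i) $p = \duplicate$ (Rule 5, which leaves $\setProp(op)$ unchanged), or (ii) $p$ is one of $\selection, \projection, \crossprod, \join, \union, \intersection$ and $\setProp(p) = true$ (Rules 3, 4, 6--9). Operators $\aggregation$, $\difference$, and $\omega$ force $\setProp(op) = false$ (Rules 2, 10, 11) and therefore cannot occur as a parent in this case. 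I will handle the two permitted cases as follows. In case (i), idempotence of $\duplicate$ gives $p = \duplicate(op) \equiv \duplicate(\duplicate(op))$, so inserting a $\duplicate$ on the edge from $op$ to $p$ preserves $p$. In case (ii), the induction hypothesis applied to $p$ (which has smaller depth) yields $\query \equiv \query[p \gets \duplicate(p)]$; combining this with the algebraic equivalence $\duplicate(p(op, \ldots)) \equiv \duplicate(p(\duplicate(op), \ldots))$ shows that inserting $\duplicate$ between $op$ and $p$ is harmless once $p$ itself is wrapped in $\duplicate$ on the way to the root.

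To handle the DAG structure (an operator may have several parents), I will compose the local edits edge by edge. Enumerate the parents $p_1, \ldots, p_n$ of $op$ and define a sequence of queries $\query_0 = \query, \query_1, \ldots, \query_n = \query[\qSub \gets \duplicate(\qSub)]$, where $\query_k$ is obtained from $\query_{k-1}$ by inserting $\duplicate$ on the edge $(op, p_k)$ only (this is well defined after mentally unfolding the DAG on that edge). The case analysis above establishes $\query_{k-1} \equiv \query_k$ for every $k$, and transitivity of $\equiv$ yields $\query \equiv \query_n$, which is exactly the statement that $\qSub$ is duplicate-insensitive.

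The main obstacle is justifying the push-through equivalences $\duplicate(p(\duplicate(R), \ldots)) \equiv \duplicate(p(R, \ldots))$ under bag semantics, because for generalized projection (and similarly for join on non-key attributes or crossproduct) pushing $\duplicate$ below $p$ is \emph{not} an equivalence on its own: multiplicities differ. The crucial point is that the outer $\duplicate$ (which is available by the induction hypothesis applied to $p$) collapses multiplicities, so only the underlying \emph{set} of tuples in $R$ matters; I will prove each of these equivalences by a short direct calculation using the definitions in Table~\ref{tab:rel-algebra-def}, checking that $\supp(p(R,\ldots)) = \supp(p(\duplicate(R),\ldots))$ for each permitted operator. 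The excluded operators ($\aggregation$, $\difference$, $\omega$) are precisely the ones for which this set-level equivalence fails, which is why the inference rules force $\setProp(op) = false$ below them and thus why the property is sound but not complete.
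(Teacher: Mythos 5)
Your overall strategy is sound and genuinely different from the paper's. The paper does not induct: it first characterizes $\setProp(op) = true$ globally as ``every path from $op$ to the root passes through a duplicate elimination operator, and no aggregation, window, or difference operator precedes the first such $\duplicate$ on the path,'' and then argues that replacing $\qSub$ by $\duplicate(\qSub)$ changes only multiplicities, never supports, along each such path until the change is absorbed by the first $\duplicate$. Your induction on depth replaces that global path argument by a local one, but the mathematical core is identical in both proofs: the per-operator verification that the support of the output of $\selection$, $\projection$, $\crossprod$/$\join$, $\union$, and $\intersection$ depends only on the supports of the inputs, plus the observation that the operators for which this fails are exactly those whose rules force $\setProp$ to $false$. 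Your explicit warning that $\duplicate$ cannot be pushed through generalized projection without an enclosing $\duplicate$ is precisely the right point of care.

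There is, however, a gap in case (ii) of your inductive step. Your chain is: wrap $p$ in $\duplicate$ (IH applied to $p$ in $\query$), push a copy of $\duplicate$ below $p$ onto the edge to $op$ (support lemma under the outer $\duplicate$), then remove the outer $\duplicate$ around $p$. That last step invokes duplicate-insensitivity of $p$ \emph{in the modified query} $\query[\qSub \gets \duplicate(\qSub)]$, not in $\query$, and your induction hypothesis only speaks about operators of $\query$; it also cannot simply be ``read backwards,'' since the subquery rooted at $p$ has changed, and re-running the induction on the modified query is awkward because inserting a node can perturb the depth bookkeeping. The clean fix is to strengthen the induction hypothesis to: if $\setProp(op) = true$, then the context $\query[\qSub \gets \cdot]$ yields equivalent queries for any two arguments with the same schema whose results have the same support on every instance. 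That statement inducts without detours --- a parent $\duplicate$ collapses the two arguments to the same relation, and for a permitted parent $p$ with $\setProp(p) = true$ the per-operator lemma shows the support of $p$'s output is unchanged, so the strengthened hypothesis applies to $p$ --- and the theorem is the special case where the second argument is $\duplicate(\qSub)$. With that adjustment (and the edge-by-edge composition you already plan for multiple parents), your proof goes through.
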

\begin{proof}
For convenience we show the inference rules for property $\setProp$ in Table~\ref{tab:top-down-set-supp}.
We have to show that if $\setProp(op) = true$, then the equivalence shown below holds for the subquery $\qSub$ rooted at $op$.

$$\query \equiv \query[\qSub \gets \duplicate(\qSub)]$$

Before proving this claim, we first analyze under which conditions $\setProp(op) = true$. Recall that we initialize $\setProp(op) = true$ for all operators before applying the rules from Table~\ref{tab:top-down-set-supp}. $\setProp(op)$ is set to false for the root of the query (Rule 1). All rules with the exception of Rule 5 (duplicate elimination) either propagate $\setProp(\curOp)$ to a child or set $\setProp(op) = false$ for a child of the current operator. Rule 5 keeps the current state $\icolsProp(R)$. Thus, for an operator $op$ we have $\setProp(op) = true$ iff there is a duplicate elimination operator on every path from $op$ to the root operator of the query. Furthermore, for any such path no window, difference, or aggregation operator precedes the first duplicate elimination operator on the path. To see why this is required observe that Rule 2 which deals with  aggregation  sets  $\setProp$ to $false$ for the child of the aggregation. The same holds for Rule 11 which handles window operators and Rule 9 that deals with difference. The other rules would propagate $false$ to $\setProp(op)$ unless there is a duplicate elimination operator which prevents this. 

Based on this observation, from $\setProp(op) = true$ we can follow that there exists a duplicate elimination operator on each path from $op$ to the root operator of the query. WLOG let $\duplicate_1$, \ldots, $\duplicate_n$ be these duplicate elimination operators. If we can show that for any such operator $\duplicate_i$, the result of the operator is not affected by eliminating duplicates in the result of $op$, then this would imply that the claim $\query \equiv \query[\qSub \gets \duplicate(\qSub)]$ holds. Since, $\duplicate_i$ eliminates duplicates it suffices to show that replacing $op$ with $\duplicate(op)$ only affects the multiplicities of tuples generated by operators on the path from $op$ to $\duplicate_i$ but not what tuples are generated by these operators. Recall that $\supp(R) = \{ t \mid R(t) \geq 1\}$ is the set of tuples that have a non-zero multiplicity in relation $R$ and that $Q(I)$ denotes the result of evaluating query $Q$ over instance $I$. Consider a query $\query = op(\qSub)$ for an unary operator $op$ and let $\query' = op(\duplicate(\qSub))$. Analog define for binary operators $\query = op(\qSub_1, \qSub_2)$  and $\query' = op(\duplicate(\qSub_1), \duplicate(\qSub_2))$.
We have to prove that $\supp(Q(I)) = \supp(\query'(I))$ if the root operator $op$ of $\query$ has $\setProp(op) = true$. We only have to prove this for  operator types for which $\setProp(op) = true$ may hold which are projection, selection, crossproduct, union, and intersection.

\myproofpar{$\query = \selection(\query_1)$} A selection retains all tuples from the input which fulfill the selection condition independent of their multiplicities. Consider the definition of selection:

$$\selection _\theta (R) = \{ t^n|t^n \in R \wedge t \models \theta  \}$$

For any tuple $t$, $\query(I)(t) =n$ for $n \neq 0$ if $R(t) = n$. Then $\query'(I)(t) = 1$. 
Thus, $\supp(Q(I)) = \supp(Q'(I))$ holds.

\myproofpar{$\query = \projection_A(\query_1)$} For each input tuple $t$, the projection outputs $t.A$ with the same multiplicity as in the input. Using an argument analog to the one used for selection, $\query(I)(t) = n$ for some $n \neq 0$ iff $\query'(I)(t) = 1$ and, thus, $\supp(Q(I)) = \supp(Q'(I))$.

\myproofpar{$\query = \query_1 \crossprod \query_2$}
Based on the definition of $\crossprod$, eliminating duplicates in the input is only going to affect the multiplicity of tuples in the result, but will not affect their support. Let $\query_1(I)(t_1) = n$, $\query_2(I)(t_2) = m$, and $t = (t_1,t_2)$. Then $\query(I)(t) = n \cdot m$ and $\query'(I)(t) = 1 \cdot 1 = 1$.

\myproofpar{$\query = \query_1 \union \query_2$}
Applying duplicate elimination to an input of a union or intersection may affect the multiplicities of tuples, but does not affect the support. 
Let $\query_1(I)(t) = n$, $\query_2(I)(t) = m$ where either $n \neq 0$ or $m \neq 0$, then $\query(I)(t) = m+ n \neq 0$ and $\query'(I)(t) \in \{1,2\}$ and, thus, $\query'(I)(t) \neq 0$. 

\myproofpar{$\query = \query_1 \intersection \query_2$}
Let $\query_1(I)(t) = n$, $\query_2(I)(t) = m$ where both $n \neq 0$ and $m \neq 0$, then $\query(I)(t) = min(m,n) \neq 0$ and $\query'(I)(t) = min(1,1) = 1 \neq 0$.
\end{proof}

\begin{Theorem}
  Let $op$ be an operator in a query $\query$.
Any $\aKey \in \keyProp{}(op)$ is a superkey for the output of $op$.
\end{Theorem}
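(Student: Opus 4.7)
The plan is to prove the claim by induction on the height $\qH$ of the subquery $\qSub$ rooted at $op$, performing a case analysis on the type of $op$ using the rules of Table~\ref{tab:bottom-up-key}. For each case I need to verify both conditions of Def.~\ref{def:def_keys}: (i) $\aKey$ functionally determines the tuple, i.e., $\forall t,t' \in \qSub(I): t.\aKey = t'.\aKey \Rightarrow t = t'$, and (ii) the result has no duplicates, i.e., $\forall t : \qSub(I)(t) \leq 1$. Keeping both conditions explicit is important since the inference rules implicitly rely on (ii) of the children (e.g., the crossproduct rule).

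For the base case $\qSub = R$, property (i) and (ii) follow from the assumption that $\keyProp(R)$ is populated from PRIMARY KEY and UNIQUE constraints in the catalog, which are enforced by the DBMS. For the inductive step, I would handle each operator:
\emph{Selection:} Any $\aKey \in \keyProp(R)$ still determines tuples since $\selection_\theta(R) \subseteq R$ (in the multiplicity sense), so (i) and (ii) are inherited. For the additional keys of the form $\aKey \cup \{b\} - \{a\}$ when $\theta \Rightarrow (a=b)$, I use that every surviving tuple satisfies $t.a = t.b$, hence $b$ can be substituted for $a$ in an existing key.
\emph{Projection:} If $\aKey \subseteq \{a_1,\ldots,a_n\}$ is a key of $R$, then by (i) on $R$ no two distinct input tuples agree on $\aKey$; after renaming they still have distinct $\aKey[B/A]$ values and each is produced exactly once, so (i) and (ii) hold for the output.
\emph{Crossproduct:} Uses (i)+(ii) of both children: distinctness of the $\aKey_1 \cup \aKey_2$ values follows componentwise, and multiplicity $1$ follows from $1 \cdot 1 = 1$.
\emph{Aggregation with group-by:} By definition of $\aggregation$, there is exactly one output tuple per distinct value of $G$, which gives (i) and (ii) for $G$; the refinement keeping subsets of $G$ that were already keys of $R$ follows because a subset of $G$ that was unique in $R$ remains unique after grouping on $G$. \emph{Aggregation without group-by} returns a single tuple, so any attribute is trivially a key.
\emph{Duplicate elimination:} $\schema{R}$ is a key by the semantics of $\duplicate$; previously known keys remain keys since duplicate removal cannot merge previously distinct tuples.
\emph{Intersection:} The result is (multiplicity-wise) contained in both inputs, so any key of either side is preserved after renaming.
\emph{Difference:} The result is contained in $R$, so keys of $R$ are preserved.
\emph{Window:} $\win$ is a one-to-one extension of each input tuple with a new column, so keys of $R$ remain keys.
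The $\minKey$ wrapper does not invalidate correctness since dropping a superset of an existing superkey from the set never removes a valid superkey (every superkey above a kept one is implied by Def.~\ref{def:def_keys} being upward closed under set inclusion).

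The main obstacle will be the bag-semantics bookkeeping: several rules (notably crossproduct, aggregation with group-by, and the new key $\aKey \cup \{b\} - \{a\}$ from selection) only yield a \emph{superkey} when the children are already duplicate-free on the corresponding attribute set. I must therefore thread condition (ii) through the induction rather than proving (i) in isolation, and be careful with generalized projection where projection expressions may collapse distinct tuples into equal ones --- this is handled by the precondition $\aKey \subseteq \{a_1,\ldots,a_n\}$ of Rule~3, which guarantees the key attributes are propagated verbatim and hence do not collapse. Union is the one operator where no key can be inferred in general (overlapping inputs destroy (ii)), which is exactly why Rule~8 returns $\emptyset$; no proof obligation arises.
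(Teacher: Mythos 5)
Your proof plan is correct and follows essentially the same route as the paper's own proof: a per-operator case analysis of the inference rules of Table~\ref{tab:bottom-up-key}, assuming the keys of the children are valid (the paper phrases this as computing $\keyProp(op)$ from $\keyProp(op_1)$ rather than as an explicit induction on height, but it is the same argument), and observing that $\minKey$ only discards elements and so cannot introduce invalid keys. The only differences are cosmetic: you argue the projection and crossproduct cases directly and thread the no-duplicates condition of Def.~\ref{def:def_keys} explicitly through every case, whereas the paper cites prior work for those two operators and checks both conditions explicitly only for selection.
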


\begin{proof}
Assume that operator $op$ is a parent of operator $op_{1}$ (and of operator $op_{2}$ for binary operators). $\keyProp{}(op)$ is computed based on $\keyProp{}(op_{1})$  (and $\keyProp{}(op_2)$).  
We prove the theorem for each operator type. Some of the inference rules apply operator $\minKey$ which removes keys that are contained in other keys. Since $\minKey$ returns a subset of its inputs it suffices to prove that the input to $\minKey$ is a set of keys to demonstrate that its output is a set of keys.  

\myproofpar{$op = \selection$} Selection returns a subset of its input. Thus, any key that holds for the input has to hold over the output. Furthermore, if a key $k = \{a, b_1, \ldots b_n\}$ holds on $R$ and the selection condition $\theta$ implies an equality $a=c$, then this implies that a key $k' = \{c, b_1, \ldots, b_n\}$ holds on the output of the selection. We prove this implication as follows. Since $k$ and $k'$ only differ in $a$ ($c$) and we know that  $a=c$ for any tuple $t$ in the result of the selection we have $t.k = t.k'$. Recall that a set of attributes $k$ is a super key if two conditions hold: 1) $\forall t, t' \in Q(I): t.k = t'.k \Rightarrow t = t'$ and 2) $\forall t: Q(I)(t) \leq 1$. Since $k$ is a key and $a=c$ for any tuple in the result of the selection, we have $\forall t, t': t.k' = t'.k' \Rightarrow t = t'$. It remains to be shown that the second condition holds. Note that a selection returns a subset of its input. Thus, if $\forall t: Q(I)(t) \leq 1$ holds on $R$ it also has to hold over the output of the selection. Thus, $k'$ is a super key for the output of the selection. 

\myproofpar{$op \in \{\projection, \crossprod\}$} The correctness of the rules for these operators were already proven in~\cite{SS96b}.

\myproofpar{$op = \aggregation$} Since the values of group-by attributes in an output of an aggregation are unique, the set of group-by attributes is a super key for the aggregation's output. Likewise, any subset of the group-by attributes that is a key for $op_1$ is also a key for the aggregation's output.

\myproofpar{$op = \duplicate$} Duplicate elimination returns a subset of its input. Thus, all  keys of the input are also keys of the output.

\myproofpar{$op = \union$} Even if a set $\aKey$ is a key for both inputs, it is not guaranteed that $\aKey$ is a key for the output of the union since a tuple $t$ may exist in both inputs. Since $\keyProp(R \union S) = \emptyset$, the inference rule is trivially correct. 

\myproofpar{$op = \intersection$} 
Intersection returns a subset of the both inputs. Thus, any key that holds for one of the inputs also holds over the result of the intersection.

\myproofpar{$op = \difference$} Set difference returns a subset of its left input, thus, any key that holds for the left input has to hold over the output.

\myproofpar{$op = \win$} For each input tuple $t$, the window operator returns a tuple $(t,x)$ where attribute $x$ stores the result of the aggregation function $f$. Thus, any key that holds over the input of a window operator also holds over the output of the window operator.   
\end{proof}

\begin{figure*}[t]
  \centering

\begin{minipage}{0.46\linewidth}
\begin{equation}\label{eq:pulling-up-provenance-projections}
  \frac{ a \subseteq \schema{\Diamond(\projection_{A}(R))} 
  }{\Diamond(\projection_{A,a \to b}(R)) \to  
     \projection_{\schema{\Diamond(\projection_{A}(R))},a \to b}(\Diamond(\projection_A (R)))}
\end{equation}
\end{minipage}
\hspace{1cm}
\begin{minipage}{0.15\linewidth}  
\begin{equation}\label{eq:duplicate-remove}
    \frac{keys(R) \neq \emptyset}{\duplicate (R) \rightarrow R} 
\end{equation}
\end{minipage}
\begin{minipage}{0.15\linewidth}  
\begin{equation}\label{eq:duplicate-remove-set}
    \frac{set(\duplicate(R))}{\duplicate (R) \rightarrow R} 
\end{equation}
\end{minipage}
\begin{minipage}{0.16\linewidth}  
\begin{equation}\label{eq:remove-redundant-columns1}
 \frac{A=icols(R)}{R \rightarrow \projection_A (R)}
\end{equation}
\end{minipage}
\hspace{1cm}
\begin{minipage}{0.41\linewidth}  
\begin{equation}\label{eq:add-duplicate-removal}
 \frac{G \subseteq \schema R}{\Aggregation{G}{}(R \join_ {b=c} S)  \rightarrow  \Aggregation{G}{}(\Aggregation{G,b}{} (R) \join_{b=c} S)}
\end{equation}
\end{minipage}
\hspace{1cm}
\begin{minipage}{0.45\linewidth}  
\begin{equation}\label{eq:attribute-factoring}
 \frac{e_1 = \eIf{\theta}{a + c}{a}}
{\projection_{e_1,...,e_m}(R) \to \projection_{ a +  \eIf{\theta}{c}{0}, e_2,...,e_m}(R)}
\end{equation}
\end{minipage}\\[1mm]
\begin{minipage}{0.28\linewidth}  
\begin{equation}\label{eq:window-function}
 \frac{x \not\in icols(\Win{f(a)}{x}{G}{O}(R))}{\Win{f(a)}{x}{G}{O}(R) \rightarrow R}
\end{equation}
\end{minipage}
\begin{minipage}{0.67\linewidth}  
\begin{equation}\label{eq:group-by-push-down}
 \frac{a \in \schema R \wedge a \not\in (G \union \{b, c\}) \wedge b \in G \wedge G \subseteq \schema R \wedge \{c\} \in keys(S)}{_{G} \aggregation _{F(a)}(R \join_ {b=c} S)  \rightarrow \Aggregation{G}{f(a)}(R) \join_{b=c} S}
\end{equation}
\end{minipage}
\\
\begin{minipage}{0.5\linewidth}  
\begin{equation}\label{eq:group-by-push-down-preserving-join}
  \frac{a \in \schema{R} \wedge \{c\} \in \keyProp(S) \wedge g \in \schema{S}}
{\Aggregation{g}{f(a)}(R \join_{b=c} S)  \rightarrow \projection_{g,f(A)}(\Aggregation{baq}{f(a)}(R) \join_{b=c} S)}
\end{equation}
\end{minipage}\\
\begin{minipage}{0.31\linewidth}  
\begin{equation}\label{eq:selection-move-around-1}
 \frac{\exists E \in EC(R) \wedge a \in E \wedge b \in E}{R \rightarrow \selection_{a=b}(R)}
\end{equation}
\end{minipage}
\hspace{0.1cm}
\begin{minipage}{0.31\linewidth}  
\begin{equation}\label{eq:selection-move-around-2}
 \frac{\exists E \in EC(R) \wedge a \in E \wedge b \in E}{\selection_{\theta}(R) \rightarrow \selection_{\theta}(\selection_{\theta[a/b]}(R))}
\end{equation}
\end{minipage}
\hspace{0.1cm}
\begin{minipage}{0.30\linewidth}  
\begin{equation}\label{eq:merge-selection}
    \selection_ {\theta_1} (\selection_ {\theta_2} (Q))  \rightarrow          
    \selection_ {\theta_1 \wedge \theta_2}(Q)
\end{equation}
\end{minipage}
\\[2mm]
\begin{minipage}{0.9\linewidth}  
\begin{equation}\label{eq:merge-projection}
   \begin{aligned}
     &\projection_{e_{1} \rightarrow a_{1},...,e_{n} \rightarrow a_{n}}(\projection_{e'_{1} \rightarrow b_{1},...,e'_{m} \rightarrow b_{m}}(Q)) \rightarrow 
        &\projection_{e_{1}[b_{1}/e'_{1},...,b_{m}/e'_{m}] \rightarrow a_{1},...,} 
        &_{e_{n}[b_{1}/e'_{1},...,b_{m}/e'_{m}] \rightarrow a_{n}}(Q)
  \end{aligned}
\end{equation}
\end{minipage}
\\[2mm]
\begin{minipage}{0.8\linewidth}  
\begin{equation}\label{eq:union-icols}
\projection_{A} (R \union S) \rightarrow \projection_{A}(R) \union \projection_{A}(S)[\schema{S}/\schema{R}]
\end{equation}
\end{minipage}

  \caption{Provenance-specific transformation (PAT) rules}
  \label{fig:algebraic-rules-supp}
\end{figure*}

\section{Additional PATs and Correctness Proofs}\label{sec:supp-heuristic}

We did introduce a subset of the PAT rules we support in Sec.~\ref{sec:heuristic}.
We now introduce additional PAT rules (shown in this Appendix, Fig.~\ref{fig:algebraic-rules}), prove the correctness of the full set of rules, and then discuss how these rules address the performance bottlenecks discussed in Sec.~\ref{sec:motivation}. Recall that rules are of the form $\frac{pre}{q \rightarrow q'}$ which has to be read as ``If condition $pre$ holds, then $q$ can be rewritten as $q'$''.

\parttitle{Selection Move-around}\label{sec:PAT-selection-move-around}
Rule~\eqref{eq:selection-move-around-1} and~\eqref{eq:selection-move-around-2} are selection move-around rules.
Selection move-around, a  generalization of the textbook selection-pushdown equivalence, enables us to introduce selections to reduce the size of intermediate results. 
In Rule~\eqref{eq:selection-move-around-1}, if attributes $a$ and $b$ both belong to the same equivalence class of $E \in \ecProp(R)$, then based on Theorem~\ref{thm:top-down-ec} we can introduce a new selection $\selection_{a=b}$ over $R$. For example, if  $\schema{R} = (a,b)$ and $\schema{S} = (c,d)$, then for a query $\selection_{b=5}(R) \join_{b=c} S$ we get $\ecProp(S)=\{\{c,5\},\{d\}\}$. Applying Rule~\eqref{eq:selection-move-around-1} we can replace $S$ with $\selection_{c=5}(S)$. 

For Rule~\eqref{eq:selection-move-around-2} consider two attributes $a$ and $b$ that belong to the same equivalence class $E \in \ecProp(R)$. Furthermore, consider a selection $\selection_\theta(R)$. Since $a \aEquiv b$, we can replace any reference to attribute $a$ with attribute $b$ in $\theta$ (written as $\theta[a/b]$ to get a selection condition $\theta'$ which is equivalent to $\theta$ wrt. to evaluating the query $\query$ that contains this selection.  For example, consider a query $\selection_{b<5}(R)$ where  $\schema{R} = (a,b)$ and  $\ecProp(R)=\{\{a,b\}\}$, then Rule~\eqref{eq:selection-move-around-2} would introduce an additional selection to transform $\query$ into   $\selection_{b<5}(\selection_{a<5}(R))$.

\parttitle{Merge adjacent Projections and Selections}\label{sec:PAT-merge-porj-select}
Rule~\eqref{eq:merge-projection} merges adjacent projections which is a standard relational algebra equivalence. If two projection operators are adjacent, we can merge them into one projection operator by substituting references to attributes in the outer projection with the expressions from the inner projection that define these attributes. The purpose of this rule is to simplify the query and  open up opportunities for further optimization (e.g., removing redundant projections). Note that in constrast to most database systems we do a safety check before applying this rule to avoid a potential exponential blowup in projection expression size.

\begin{Example}
  Consider a query
  $$\query = \projection_{c + c \to d}(\projection_{b + b \to c}(\projection_{a + a \to b}(R)))$$
To merge these projections we have to replace references to attributes with the expression defining them. While in $\query$ every projection references attributes from its input twice, after merging projections we get a projection expression with $2^3$ references to attribute $a$:
  $$\projection_{a+a + a+ a + a+ a+ a+ a}(R)$$
\end{Example}

While the example above may be contrived, we faced such blow-ups in expression size when generating queries that capture the provenance of updates and transactions~\cite{AG17,AG17c}.
Whenever merging projections results in a superlinear increase in expression size, we do not merge the projections. In fact, we will force the database system to materialize the intermediate results to prevent it from merging these projections. 
We use $e[x/y]$ to denote replacing each occurrence of expression $x$ (usually an attribute) in $e$ with expression $y$.
For example, consider the query $\projection_{a + b \to c} (\projection_{a, d + e \to b}(R))$. Merging projections we get:
$\projection_{a + (d + e) \to c}(R)$. In the inner projection, $d + e$
  is renamed to $b$. Hence, if we merge the projections, then $b$ should be
  replaced with $(d + e)$.
  
Rule~\eqref{eq:merge-selection} merges adjacent selections. This is also a standard equivalence rule. When two selections are adjacent, we can replace them with a single selection on the conjunction of the conditions of the two selections. The purpose of this rule is also to simply the query, e.g., after introducing a new selection based on one of the selection move-around rules. 
%Here $\theta$ is  the condition of the selection operator. 

\parttitle{Pushing Projections through Union}\label{sec:PAT-remove-col-union}
Most standard  projection push-down rules are handled by Rule~\eqref{eq:remove-redundant-columns1}. However, as mention in Appendix~\ref{sec:supp-inference-rule}, $\icolsProp$ does not allow us to push projections to children of a union operator. Thus, we introduce a separate PAT rule (Rule~\eqref{eq:union-icols}) for this purpose.

\begin{Theorem}
The PATs from Fig.~\ref{fig:algebraic-rules-supp} are equivalence preserving.  
\end{Theorem}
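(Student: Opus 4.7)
The plan is to prove equivalence preservation rule-by-rule, grouping the rules by the type of argument required. The common skeleton for each rule is: assume the precondition holds, consider an arbitrary database instance $I$, and show that for every tuple $t$ the multiplicity of $t$ in the left-hand side equals its multiplicity in the right-hand side, using the bag semantics of the operators from Table~\ref{tab:rel-algebra-def}.

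First I would dispatch the rules whose correctness reduces directly to the correctness theorems already established for our property inference (Theorems proven in Appendix~\ref{sec:supp-correct-proof}). Rule~\eqref{eq:duplicate-remove} follows from the definition of a superkey: if $\keyProp(R) \neq \emptyset$ then by Def.~\ref{def:def_keys} every tuple in $R$ has multiplicity at most one, so $\duplicate(R)(t) = R(t)$ for all $t$. Rule~\eqref{eq:duplicate-remove-set} is immediate from the soundness of $\setProp$ and Def.~\ref{def:def_set}. Rule~\eqref{eq:remove-redundant-columns1} is immediate from soundness of $\icolsProp$ and Def.~\ref{def:def_icols}. Rule~\eqref{eq:window-function} follows since, when $x \not\in \icolsProp(\win(R))$, the $\icolsProp$ correctness theorem lets us insert $\projection_{\schema{R}}$ above the window, and after projecting out $x$ the window operator adds no information; then we can use the algebraic law $\projection_{\schema{R}}(\win_{f(a)\to x, G\|O}(R)) \equiv R$, which holds because $\win$ only appends a new attribute. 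Rules~\eqref{eq:selection-move-around-1} and~\eqref{eq:selection-move-around-2} follow directly from Theorem~\ref{thm:top-down-ec}: since $a \aEquiv b$, inserting $\selection_{a=b}$ or substituting $a$ for $b$ in a selection condition preserves the surrounding query, which by definition of $\aEquiv$ is exactly what these rules do.

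Next I would handle the purely algebraic rules by unfolding the bag-semantics definitions. For Rule~\eqref{eq:merge-selection} I use $\selection_{\theta_1}(\selection_{\theta_2}(Q))(t) = [t \models \theta_1 \wedge t \models \theta_2] \cdot Q(t) = \selection_{\theta_1 \wedge \theta_2}(Q)(t)$. For Rule~\eqref{eq:merge-projection} the proof is by structural induction on projection expressions, showing $\exprEval(t, e[b_i/e'_i]) = \exprEval(t.A', e)$ where $A'$ is the inner projection's expression list; this carries over to multiplicities because both projections redistribute multiplicity by the same grouping on the composed expression. Rule~\eqref{eq:union-icols} follows from the fact that projection commutes with bag union: $\projection_A(R \union S)(t) = \sum_{u.A = t}(R(u) + S(u)) = \projection_A(R)(t) + \projection_A(S)(t)$. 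Rule~\eqref{eq:attribute-factoring} is a pointwise identity on projection expressions: for any tuple $t$, $\exprEval(t, \eIf{\theta}{a+c}{a}) = \exprEval(t, a + \eIf{\theta}{c}{0})$, since $0$ is the neutral element of addition; the argument is phrased once and then noted to lift to any operator with a neutral element. Rule~\eqref{eq:pulling-up-provenance-projections} is handled by observing that the added attribute $b$ is a pure duplicate of $a$, so the intermediate results of the two sides agree on all attributes except $b$, and $b$ is deterministically reconstructible from $a$; the precondition $a \in \schema{\Diamond(\projection_A(R))}$ is exactly what is needed for the pull-up to be well-typed, and the side condition that $b \not\in \schema{\Diamond(\projection_A(R))}$ is implicit in the standard freshness convention.

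The main obstacle will be the aggregation push-down rules~\eqref{eq:add-duplicate-removal}, \eqref{eq:group-by-push-down}, and~\eqref{eq:group-by-push-down-preserving-join}, since these depend on a cardinality-preservation argument that is delicate under bag semantics. The plan is to invoke the equivalences of Chaudhuri and Shim~\cite{chaudhuri1994including}, but cast in our bag algebra: the precondition $\{c\} \in \keyProp(S)$ together with $b \in G$ guarantees that for every tuple $r \in R$ with $r.b = s.c$ there is exactly one matching $s \in S$, so the join acts as a functional augmentation of $R$ and commutes with the aggregation on any subset of $\schema{R}$ that contains the join attribute. The argument needs to carefully verify that (i) the groups induced by $G$ on $R \join_{b=c} S$ are in one-to-one correspondence with the groups induced on $R$, (ii) the bag of $a$-values in each group is identical on both sides, and (iii) the join of the pre-aggregated result with $S$ still returns exactly one row per original group, with the $g$-attribute (Rule~\eqref{eq:group-by-push-down-preserving-join}) correctly attached via the final projection. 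Rule~\eqref{eq:add-duplicate-removal} is the degenerate case where $f$ is absent, so the inner $\Aggregation{G,b}{}$ acts as a pre-deduplication that does not change the output of the outer $\Aggregation{G}{}$; this is verified by showing that adding or removing duplicates on the left input of a subsequent group-by does not affect the group-by's output when the group-by has no aggregation function.
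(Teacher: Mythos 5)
Your proposal is correct and follows essentially the same decomposition as the paper's proof: rules justified directly by the soundness of the inferred properties ($\keyProp$, $\setProp$, $\icolsProp$, $\ecProp$) and the definitions behind them, standard algebraic identities for the merge/push rules, and an appeal to Chaudhuri and Shim~\cite{chaudhuri1994including} for the aggregation push-down equivalences. If anything, you are more thorough than the paper, which dismisses Rules~\eqref{eq:merge-selection}, \eqref{eq:merge-projection}, and~\eqref{eq:union-icols} as ``fairly standard'' and cites~\cite{chaudhuri1994including} without restating the cardinality-preservation argument that you correctly identify as the delicate point under bag semantics (noting only that $\{c\} \in \keyProp(S)$ gives \emph{at most} one match per tuple, with $b \in G$ ensuring groups are dropped or kept uniformly).
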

\begin{proof}
\myproofpar{Rule~\eqref{eq:pulling-up-provenance-projections}}
The value of attribute $b$ is the same as the value of $a$ because the projection expression determining $b$ is $a \to b$. Since $b$ is not needed to evaluate $\Diamond(\projection_A(R))$, we can delay the computation of $b$ after $\Diamond$ has been evaluated.

\myproofpar{Rule~\ref{eq:duplicate-remove}}
Since $keys(R) \neq \emptyset$, by Def.~\ref{def:def_keys} it follows that no duplicate tuples exist in $R$ ($R(t) = n \rightarrow n \leq 1$). Thus, we get $\query \equiv \query[R \gets \duplicate(R)]$.

\myproofpar{Rule~\ref{eq:duplicate-remove-set}}
Recall that we have proven that if $\setProp(op) = true$ then only the support $\supp$ of the result of $op$ (what tuples are part of the result), but not the multiplicities of tuples in the result, affect the results of ancestors of $op$. Thus, if $\setProp(op) = true$ where $op$ is a duplicate elimination operator, then we can safely remove this operators since this will not affect the support.

\myproofpar{Rule~\ref{eq:remove-redundant-columns1}}
Suppose $A=icols(R)$, by Def.~\ref{def:def_icols} we get $Q[R \gets \projection_A (R)] \equiv Q$.

\myproofpar{Rule~\eqref{eq:attribute-factoring}}
Let ${e_1}' = (A + \eIf{\theta}{c}{0})$.
We distinguish two cases: 1) if $\theta$ holds, then both $e_1$ and ${e_1}'$ evaluate to $A+c$; 2) otherwise both $e_1$ and ${e_1}'$ evaluate to $A$. Thus, these two expressions are equivalent and replacing $e_1$ with ${e_1}'$ in a projection is an equivalence preserving transformation.

\myproofpar{Rule~\ref{eq:window-function}}
From $x \not\in icols(\omega_{f(a) \to x} (R))$ follows $Q[\omega_{f(a) \to x} (R) \gets \projection_{\schema{R}}(\omega_{f(a) \to x} (R))] \equiv Q$. Based on the definition of $\win$ it follows that $t^n \in \projection_{\schema{R}}(\omega_{f(a) \to x} (R)) \leftrightarrow t^n \in R$. Thus, $Q[\omega_{f(a) \to x} (R) \gets R] \equiv Q$.

\myproofpar{Rule~\eqref{eq:selection-move-around-1}}
If $\exists E \in \ecProp(R) \wedge a \in E \wedge b \in E$ then $a \aEquiv b$ and by Def.~\ref{def:def_ec} we get $Q[R \gets \selection_{a=b}(R)] \equiv Q$.

\myproofpar{Rule~\eqref{eq:selection-move-around-2}}
If $\exists E \in \ecProp(R) \wedge a \in E \wedge b \in E$ then $a \aEquiv b$ and by Def.~\ref{def:def_ec} we get $Q[\selection_{\theta}(R) \gets \selection_{a=b}(\selection_\theta(R))]$.
Recall that $\theta[a/b]$ denotes replacing $a$ with $b$ in condition $\theta$.
Since $\selection_{a=b}(\selection_\theta(R)) \equiv \selection_{a=b \wedge \theta}(R) \equiv \selection_{\theta}(\selection_{\theta[a/b]}(R))$ we have $Q[\selection_{\theta}(R) \gets \selection_{\theta}(\selection_{\theta[a/b]}(R))]  \equiv Q$.

\myproofpar{Rule~\eqref{eq:add-duplicate-removal} and~\eqref{eq:group-by-push-down} }
These two rules were introduced in~\cite{chaudhuri1994including}.

\myproofpar{Rule~\eqref{eq:merge-selection},~\eqref{eq:merge-projection} and~\eqref{eq:union-icols}} 
These three rules are fairly standard and generally accepted to be correct.
\end{proof}

\section{Alternative CBO Search Strategies}\label{sec:supp-sty}

In Sec.~7, we introduced our \textbf{\textit{sequential-leaf-traversal}} strategy which  traverses the plan space tree  from the leftmost leaf to the rightmost leaf. For example, consider the 
plan tree shown in Fig.~\ref{fig:plan-tree-example} in Sec.~\ref{sec:cbo}. Our algorithm generates plans in the following sequence  \texttt{[0,0]}, \texttt{[0,1]},\texttt{[1,0,0]}, \texttt{[1,0,1]}, \texttt{[1,1,0]}, \texttt{[1,1,1]}. The \textit{sequential-leaf-traversal} strategy enumerates all possible plans. 
However, if the plan space is large this strategy may spend more time on optimization than on query execution. To address this potential shortcoming, we also explore plan enumeration strategies that only explore parts of the full plan space.

One common approach for dealing with large search spaces is to apply metaheuristics such as simulated annealing and genetic algorithms. Metaheuristics have a long tradition in query optimization, e.g., some systems apply metaheuristics for join enumeration once the number of joins exceeds a threshold~\cite{SM97} or for cost-based transformations~\cite{AL06}. As an example of metaheuristics we implemented the \textbf{\textit{Simulated Annealing}} algorithm. We will discuss this algorithm in Appendix~\ref{sec:simulated-annealing}.

Another option is to apply our \textbf{\textit{adaptive}} strategy introduced in Sec.~\ref{sec:cbo} which balances optimization time and execution time by stopping optimization once a ``good enough'' plan has been found.
However, as already mentioned in Sec.~\ref{sec:cbo} the traversal order of our \textit{sequential-leaf-traversal} strategy is not suited well for the adaptive strategy because it only explores the part of the plan space corresponding to a prefix of the sequence of leafs traversed in left-to-right order. To increase the diversity of plans explored by the \textit{adaptive} strategy, we developed the \textbf{\textit{binary-search-traversal}} strategy which traverses the plan space simulating a binary search over the leaf nodes in left-to-right order instead of traversing the plan space sequentially in left-to-right order. This strategy will be discussed in more detail in Appendix~\ref{sec:binary-search-traversal}. Finally, in Appendix~\ref{sec:2-competitive} we prove that our adaptive strategy is 2-competitive. For completness we also show the \textit{sequential-leaf-traversal} strategy here (Appendix~\ref{sec:supp-sequ-leaf-trav}).

\subsection{Sequential-Leaf-Traversal}
\label{sec:supp-sequ-leaf-trav}

The \textit{sequential-leaf-traversal} strategy traverses the leafs of the plan tree in left-to-right order.
If $p_{cur}$ is the path explored in the previous iteration, then  taking the next available choice as late as possible on the path will lead to the next node at the leaf level.  
Let $p_{next}$ be the prefix of $p_{cur}$ that ends in the new choice to be taken. 
If following $p_{next}$  leads to a path that is longer than $p_{next}$, then after making $len(p_{next})$ choices the first option should be chosen for the remaining choice points.

We use square brackets to denote lists, e.g., $[0,1]$ denotes a list with elements $0$ and $1$. We use $[]$ to represent an empty list.
$L \gets L \cList e$ denotes appending element $e$ to list $L$. Functions $\Call{popHead}{L}$ and $\Call{popTail}{L}$ remove and return the first (respective last) element of list $L$.

\parttitle{The makeChoice Function} Algorithm~\ref{alg:default-callback} shows the default $\Call{makeChoice}{}$ function. 
If possible we pick the next predetermined choice from list $p_{next}$.
If list $p_{next}$ is empty, then we pick the first choice ($0$). In both cases, we append the choice to the current path and the number of available options for the current choice point is appended to list $n_{opts}$.

\parttitle{Determining Choices for the Next Iteration} 
Algorithm~\ref{alg:default-gen-next-option} determines which options to pick in the next iteration. We copy the path from the previous iteration (line 2) and then repeatedly remove elements from the tail of the path and from the list storing the number of options ($n_{ops}$) until we have removed an element $c$ for which at least one more alternative exits ($c + 1 < n_{ops}$). Once we have found such an element we append $c+1$ as the new last element to the path.

\begin{algorithm}[t]
  \caption{Default \textsc{makeChoice} Function}
  \label{alg:default-callback}
  \begin{algorithmic}[1]
    
    \Procedure{makeChoice}{$numChoices$}
      \If {$len(p_{next}) > 0$}
        \State $choice \gets \Call{popHead}{p_{next}}$
      \Else
        \State $choice \gets 0 $
      \EndIf
        \State $p_{cur} \gets p_{cur} \cList choice$
        \State $n_{opts} \gets n_{opts} \cList numChoices$
      \State \textbf{return} choice
    \EndProcedure
        
  \end{algorithmic}
\end{algorithm}

\begin{algorithm}[t]
  \caption{Default \textsc{genNextIterChoices} Function}
  \label{alg:default-gen-next-option}
  \begin{algorithmic}[1]
    
    \Procedure{genNextIterChoices}{ }
      \State $p_{next} \gets p_{cur}$
      \For {$i \in \{ len(p_{next}), \ldots, 1\}$} 
        \State $c \gets \Call{popTail}{p_{next}}$
        \State $nops \gets \Call{popTail}{n_{opts}}$
        \If{$c+1 < nops$}
          \State $c \gets c+1$
          \State $p_{next} \gets p_{next} \cList c$
          \State \textbf{break}
        \EndIf
      \EndFor    
      \State $p_{cur} \gets []$
      \State $n_{opts} \gets []$
    \EndProcedure
    
  \end{algorithmic}
\end{algorithm}

\subsection{Simulated Annealing}\label{sec:simulated-annealing}
\textit{Simulated Annealing} is a metaheuristic, i.e., a randomized, guided search, which tries to find a global optimum in a large search space. The method starts from a random plan and traverses the plan space by randomly applying transformations. In each step, it applies a random transformation to the previous plan $P_{pre}$ to derive a new plan $P_{cur}$  (let $C_{cur}$ and $C_{pre}$ denote the costs of these plans). If $C_{cur} < C_{pre}$, i.e., the current plan has a lower cost than the previous plan, then simulated annealing will set $P_{pre} = P_{cur}$. 
If $C_{cur} \geq C_{pre}$, then the choice of whether to proceed into the direction of the new plan $P_{cur}$ is made probabilistically based on $C_{cur} - C_{pre}$ (how much worse is the new plan) and a parameter called the temperature $temp$ that is decreased over time based on the so-called cooling rate (\textit{cr}).
Initially, the probability to choose an inferior plan is higher to avoid getting stuck in a local minima early on. By decreasing the temperature (and, thus also probability) over time, the approach will converge eventually.
The probability $p$ of updating $P_{pre} = P_{cur}$ is computed as shown below:

$$p =
\begin{cases}
  e^{\frac {c \times (C_{pre} - C_{cur})}{temp \times C_{cur}}} &\mathtext{if} C_{pre} < C_{cur}\\
  1 & \mathtext{otherwise}
\end{cases}
$$

Here, $c$ is a constant used to scale the cost difference appropriately. In each iteration the temperature $temp$ is updated based on the cooling rate $cr$. The speed of convergence and, thus, time spend on optimization is determined by setting the cooling rate. 
We implemented simulated annealing  by implementing appropriate $\Call{continue}{}$, $\Call{genNextIterChoices}{}$, and $\Call{makeChoice}{}$ functions. The $\Call{continue}{}$ function stops exploration once the temperature has reached $0$. The $\Call{genNextIterChoices}{}$ function takes the current plan represented as a path in the plan tree, randomly chooses a position in the path and increases or decreases the current choice in this position. For any additional choice points hit after this position, a random option is chosen. For instance, assume that the current plan corresponds to a path \texttt{[1,0,3,1,3]} in the plan tree. We choose a position between 0 and 4, and then either increment or decrement this position. For instance, we may choose to decrement position 2 resulting in the path \texttt{[1,0,2]}. Note that this path may not end in a leaf node. To expand it to a full path we then randomly choose an option for any additional choice point that is hit after the 3rd choice. For instance, if there are 3 additional choice points that are hit after \texttt{[1,0,2]} and each of them has 2 options to choose from, then we may generate a new plan \texttt{[1,0,2,0,1,0]}. This is implemented in the  $\Call{makeChoice}{}$ function which first selects choices based on the prefix determined by $\Call{genNextIterChoices}{}$ (e.g., \texttt{[1,0,2]} in our example) and then proceeds to randomly choose options.

\subsection{Binary Search Traversal}\label{sec:binary-search-traversal}

We now discuss our \textit{binary-search-traversal} strategy which approximates a binary search over the leaves of a plan tree. The method maintains a queue $todo$ of intervals (pairs of paths in the plan tree stored as lists of length 2). In each iteration a list $[p_{low},p_{high}]$ is poped from this queue and the algorithm computes the prefix $p_{next}$ of a path to a leaf that lies between the leaf nodes at the end of paths $p_{low}$ and $p_{high}$. During an iteration, function  $\Call{makeChoice}{}$ then first exhausts choices from $p_{next}$ and afterwards chooses the middle option, i.e., $\frac{n}{2}$ for $n$ options. Since there may be choice points with an even number of options, the algorithm alternates between rounding up and rounding down.
We now explain the  $\Call{genNextIterChoices}{}$ and $\Call{makeChoice}{}$ functions in more detail.

 Consider $\Call{makeChoice}{}$ shown  as Alg.~\ref{alg:callback}. Here, variable $iter$ is a global variable that stores the current iteration (starting from $1$). In the first iteration ($iter = 1$), the function always chooses the first option ($0$) which generates the leftmost leaf of the plan tree. For instance, for the plan tree shown in Fig.~\ref{fig:bal-tree-example} (top) this is the path marked in red. In the second iteration ($iter = 2$), the function chooses the option leading to the rightmost child of the current node.  For instance, for the plan tree shown in Fig.~\ref{fig:bal-tree-example} (top) this is the path marked in green. In any succeeding iteration, the function takes options from $p_{next}$ which is a prefix the current plan to be chosen. $p_{next}$ is determine by $\Call{genNextIterChoices}{}$ at the end of each iteration.
 Once $p_{next}$ is exhausted, the function takes the middle option computed as $\frac{numChoices-1}{2}$ alternating between rounding up and rounding down. The flag $useLow$ determines whether we should round up or round down. Function $\Call{makeChoice}{}$ iteratively constructs the current plan $p_{cur}$ and keeps a list $n_{opt}$ which stores the number of options available at the choice points that were hit during the current iteration.

 Function $\Call{genNextIterChoices}{}$ shown as Alg.~\ref{alg:gen-next-option} stores the current plan as $p_{low}$ in the first iteration. In the second iteration, it stores $p_{cur}$ in variable $p_{high}$ and then pushes $[p_{low}, p_{high}]$ to queue $todo$. In all iterations except for the first two, the function
 pushes two new intervals $[p_{low},p_{cur}]$ and $[p_{cur},p_{high}]$ unless these intervals do not contain a middle element (e.g., $p_{low} + 1 = p_{cur}$). In all iterations except the first, the function pops an interval from the stack and then calls function $\Call{splitInterval}{}$ to compute $p_{next}$. This function in turn calls function $\Call{approxMiddlePlan}{}$ shown as Alg.~\ref{alg:gen-next-plan}. This function initializes the result with the common prefix of $p_{low}$ and $p_{high}$. Then for the first position $i$ where $p_{low}[i] \neq p_{high}[i]$, the function chooses an option that lies between $p_{low}$ and $p_{high}$. If no such option exists, then depending on the setting of flag $useLow$ it uses $p_{low}[i]$ or $p_{high}[i]$. Afterwards, it determines the first position where it is possible to choose an option higher than $p_{low}$ (lower than $p_{high}$) if such an option exists. Function $\Call{approxMiddlePlan}{}$ may return a plan that is equal to $p_{low}$ or $p_{high}$ in which case $p_{next}$ is set to $p_{low} + 1$.
Here $p_{low} + 1$ denotes the leaf immediately to the right of $p_{low}$ which we can determine in the same way as in our \textit{sequential-leaf-traversal} strategy.
 For example, assume that $p_{low}$ and $p_{high}$ are the plans shown in red and green in Fig.~\ref{fig:bal-tree-example}, respectively. Then if $useLow=0$ function \textsc{genNextIterChoices} would generated the plan shown in purple on the bottom in Fig.~\ref{fig:bal-tree-example}.

 \begin{algorithm}[t]
   \caption{Binary \textsc{makeChoice} Function}
   \label{alg:callback}
   \begin{algorithmic}[1]
    
     \Procedure{makeChoice}{$numChoices$}
      \If {$iter = 1$} \Comment{choose leftmost plan in $1^{st}$ iteration}
         \State $choice \gets 0$
      \ElsIf {$iter = 2$} \Comment{choose rightmost plan in $2^{st}$ iteration} 
         \State $choice \gets numChoices-1$
       \Else
       \If {$ len(p_{next}) > 0$}\Comment{take predetermined choice}
                   \State $choice \gets popHead(p_{next})$
                   \Else \Comment{choose middle option, alternating rounding up and down}
                 		\If {$useLow = 0$}
                 		    \State $choice \gets \lfloor \frac{numChoices - 1}{2} \rfloor$                    
                   		    \State $useLow \gets 1$
                		\Else
                 		    \State $choice \gets \lceil \frac{numChoices - 1}{2} \rceil$
                       	    \State $useLow \gets 0$
                	   \EndIf
         \EndIf
       \EndIf
         \State $p_{cur} \gets p_{cur} \cList choice$
         \State $n_{opts} \gets n_{opts} \cList numChoices$
       \State \textbf{return} choice
     \EndProcedure
        
  \end{algorithmic}
 \end{algorithm}

 \begin{algorithm}[t]
   \caption{Binary \textsc{genNextIterChoices} Function}
   \label{alg:gen-next-option}
   \begin{algorithmic}[1]
    
     \Procedure{genNextIterChoices}{ }
          \If {$iter = 1$}
         	  \State $p_{low} \gets p_{cur}$
          \Else 
             \If {$iter = 2$}
             \State $p_{high} \gets p_{cur}$
                  \State $todo \gets [[p_{low},p_{high}]]$
              \Else
                   \If {$p_{low} + 1 < p_{cur}$}
                       \State $todo \gets todo \cList [p_{low},p_{cur}] $
                   \EndIf
                   \If {$p_{cur} + 1 < p_{high}$}
                       \State $todo \gets todo \cList [p_{cur},p_{high}]$
                   \EndIf
              \EndIf
                  
              \State $[p_{low}, p_{high}] \gets \Call{popHead}{todo}$
              \If {$p_{cur} - 1 \neq p_{low}$}
         	        \State $useLow \gets 0$
         	       
              \Else 
                   \State $useLow \gets 1$
              \EndIf
              \State $\Call{splitInterval}{useLow, p_{low}, p_{high}, n_{opts}}$
         \EndIf
          \State $p_{cur} \gets []$
         \State $n_{opts} \gets []$
     \EndProcedure
    
   \end{algorithmic}
 \end{algorithm}

 \begin{algorithm}[t]
   \caption{Binary \textsc{splitInterval} Function}
   \label{alg:split-interval}
   \begin{algorithmic}[1]
    
     \Procedure{splitInterval}{$useLow$, $p_{low}$, $p_{high}$, $n_{opts}$}
             \State $p_{estNext} \gets \Call{approxMiddlePlan}{useLow, p_{low}, p_{high}, n_{opts}}$
         	 \If {$p_{estNext} = p_{low} \vee p_{estNext} = p_{high}$}
             \State $p_{next} \gets p_{low} + 1$
             \Else
                 \State $p_{next} = p_{estNext}$
             \EndIf
     \EndProcedure
    
   \end{algorithmic}
 \end{algorithm}

 \begin{algorithm}[t]
   \caption{Binary \textsc{approxMiddlePlan} Function}
   \label{alg:gen-next-plan}
   \begin{algorithmic}[1]
    
     \Procedure{approxMiddlePlan}{$useLow$, $p_{low}$, $p_{high}$, $n_{opts}$}
          \State $len_{min} \gets \min(len(p_{low}),len(p_{high}))$ 
          \State $len_{prefix} \gets \Call{findCommonPrefixLen}{p_{low} , p_{high}}$
          \State $result \gets []$

          \For{$i \in \{1,\ldots,len_{prefix}\}$}  \Comment{copy common prefix}
              \State $result \gets result \cList p_{low}[i]$
          \EndFor

          \State $i \gets len_{prefix} + 1$
          \State $c_{low} \gets p_{low}[i]$
          \State $c_{high} \gets p_{high}[i]$
          \State $diff \gets c_{high} - c_{low}$ \Comment{difference between plans at $i$}
          \If {$diff > 1$} \Comment{choose option between $c_{low}$ and $c_{high}$}
          \State $result \gets result \cList \lceil \frac{c_{high} + c_{low}}{2} \rceil$
          \State \Return result
          \ElsIf {$useLow = 0$} \Comment{choose plan left of $p_{high}$}
          \State $result \gets result \cList c_{high}$
          \While {$result[i] = p_{high}[i] \wedge i \leq len(p_{high})$}
          \State $result[i] \gets 0$
          \State $i \gets i + 1$
          \EndWhile
          \State $useLow \gets 1$ 
          \Else \Comment{choose plan right of $p_{low}$}
          \State $result \gets result \cList c_{low}$
          \While {$result[i] = p_{low}[i] \wedge i \leq len(p_{low})$}
          \State $result[i] \gets n_{opt}[i] - 1$
          \State $i \gets i + 1$
          \EndWhile

          \State $useLow \gets 0$
          \EndIf

                \State \Return result
     \EndProcedure
    
   \end{algorithmic}
 \end{algorithm}

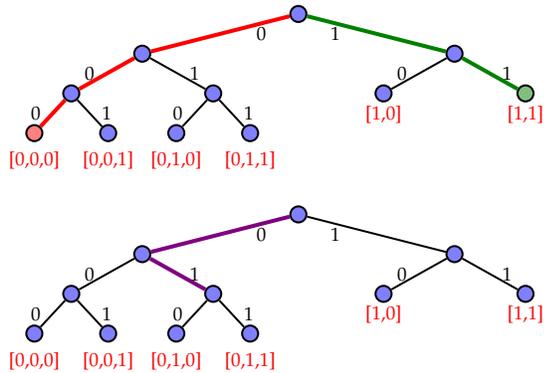
\begin{figure}[t]
  \centering
\resizebox{!}{0.3\columnwidth}{
\begin{tikzpicture}
[every node/.style={black,line width=1pt,circle,draw,fill=blue!50,label position={east},inner sep={1mm}},
every edge/.style={black,line width=1pt},
el/.style={draw=none,fill=none,inner sep={1mm}},
level distance=7mm,
level 1/.style={sibling distance=55mm},
level 2/.style={sibling distance=25mm},
level 3/.style={sibling distance=13mm}
]
  \node[label={[label distance=0.4cm]{}}] {} 
        child {node[label={}] {}
          child {node[label=left:{}] {} 
            child {node[fill=red!50,label={[label distance=-0.3cm]below:{\textcolor{red}{[0,0,0]}}}] {} 
              edge from parent [red,line width=2pt] node[left,el,black] {0}
            }
            child {node[line width=1pt,label={[label distance=-0.3cm,]below:{\textcolor{red}{[0,0,1]}}}] {} 
              edge from parent [black,line width=1pt] node[right,el] {1}
            }
            edge from parent [red,line width=2pt] node[left,el,black] {0}
          }
          child {node[label=right:{}] {} 
            child {node[label={[label distance=-0.3cm]below:{\textcolor{red}{[0,1,0]}}}] {} 
              edge from parent [black,line width=1pt] node[left,el] {0}
            }
            child {node[label={[label distance=-0.3cm]below:{\textcolor{red}{[0,1,1]}}}] {} 
              edge from parent [black,line width=1pt] node[right,el] {1}
            }
            edge from parent [black,line width=1pt] node[right,el] {1}            
          }
          edge from parent [red,line width=2pt] node[right,el,inner sep={4mm}] {0}
        }
        child {node[label=left:{}] {} 
            child {node[label={[label distance=-0.3cm]below:{\textcolor{red}{[1,0]}}}] {} 
              edge from parent [black,line width=1pt] node[left,el] {0}
            }
            child {node[fill=darkgreen!50,label={[label distance=-0.3cm]below:{\textcolor{red}{[1,1]}}}] {} 
              edge from parent [darkgreen,line width=2pt] node[right,el] {1}
            }
            edge from parent [darkgreen,line width=2pt] node[left,el,inner sep={4mm}] {1}
        }
  ;
\end{tikzpicture}
}
\resizebox{!}{0.3\columnwidth}{
\begin{tikzpicture}
[every node/.style={black,line width=1pt,circle,draw,fill=blue!50,label position={east},inner sep={1mm}},
every edge/.style={black,line width=1pt},
el/.style={draw=none,fill=none,inner sep={1mm}},
level distance=7mm,
level 1/.style={sibling distance=55mm},
level 2/.style={sibling distance=25mm},
level 3/.style={sibling distance=13mm}
]
  \node[label={[label distance=0.4cm]{}}] {} 
        child {node[label={}] {}
          child {node[label=left:{}] {} 
            child {node[label={[label distance=-0.3cm]below:{\textcolor{red}{[0,0,0]}}}] {} 
              edge from parent [black,line width=1pt] node[left,el,black] {0}
            }
            child {node[line width=1pt,label={[label distance=-0.3cm,]below:{\textcolor{red}{[0,0,1]}}}] {} 
              edge from parent [black,line width=1pt] node[right,el] {1}
            }
            edge from parent [black,line width=1pt] node[left,el,black] {0}
          }
          child {node[label=right:{}] {} 
            child {node[label={[label distance=-0.3cm]below:{\textcolor{red}{[0,1,0]}}}] {} 
              edge from parent [black,line width=1pt] node[left,el] {0}
            }
            child {node[label={[label distance=-0.3cm]below:{\textcolor{red}{[0,1,1]}}}] {} 
              edge from parent [black,line width=1pt] node[right,el] {1}
            }
            edge from parent [darkpurple,line width=2pt] node[right,el] {1}            
          }
          edge from parent [darkpurple,line width=2pt] node[right,el,inner sep={4mm}] {0}
        }
        child {node[label=left:{}] {} 
            child {node[label={[label distance=-0.3cm]below:{\textcolor{red}{[1,0]}}}] {} 
              edge from parent [black,line width=1pt] node[left,el] {0}
            }
            child {node[label={[label distance=-0.3cm]below:{\textcolor{red}{[1,1]}}}] {} 
              edge from parent [black,line width=1pt] node[right,el] {1}
            }
            edge from parent [black,line width=1pt] node[left,el,inner sep={4mm}] {1}
        }
  ;
\end{tikzpicture}
}

\caption{Plan space tree example for \textit{binary-search-traversal} strategy}
\label{fig:bal-tree-example}
\end{figure}

 \begin{Example}\label{eg:bal-search-traversal-example}
Consider Fig.~\ref{fig:bal-tree-example}, in the first two iterations we generate the plans \texttt{\upshape [[0,0,0],  [1,1]]} and initialize queue $todo$  to the singleton list \texttt{\upshape [[[0,0,0], [1,1]]]}. 
Next, since there is no common prefix and only two options 0 and 1, based on $useLow=0$ the next selected $p_{next}$ is \texttt{\upshape [0,1]} which is extended to plan \texttt{\upshape [0,1,0]}. As a side effect $useLow$ is set to $1$. At the end of this iteration two intervals are pushed on the queue: \texttt{\upshape [[0,0,0], [0,1,0]]} and \texttt{\upshape [[0,1,0], [1,1]]}. We first pop interval \texttt{\upshape [[0,0,0], [0,1,0]]} which contains the common prefix \texttt{\upshape [0]}. Since $useLow = 1$, for the second position we choose 1 and for the last one we choose 0.  We get the new plan \texttt{\upshape [0,1,0]} and set $useLow = 0$. This plan is equal to the current $p_{high}$ and, thus we set $p_{next} = p_{low} + 1$ which is \texttt{\upshape [0,0,1]}. We do not push interval \texttt{\upshape  [[0,0,0], [0,0,1]]} because $p_{low} + 1 = p_{cur}$ and also do not push \texttt{\upshape [[0,0,1], [0,1,0]]} since $p_{cur} + 1 = p_{high}$. In the next iteration we pop interval \texttt{\upshape [[0,1,0], [1,1]]}. These two plans have no common prefix. Because $useLow = 0$ we choose \texttt{\upshape [0]} and then continue to choose the rightmost option until the plan differs from \texttt{\upshape [0,1,0]} resulting in a plan \texttt{\upshape [0,1,1]}. Furthermore $useLow$ is set to $1$. We only push interval \texttt{\upshape [[0,1,1], [1,1]]}. In the next and final iteration we pop the interval we just pushed. Since there is no common prefix and $useLow=1$ we choose \texttt{\upshape [1]} and then proceed to choose option $0$ until the current plan differs from \texttt{\upshape [1,1]} yielding plan \texttt{\upshape [1,0]}. We do not push any new intervals during this iteration. Queue $todo$ is now empty, i.e., the plan space traversal is completed. 
 \end{Example}

\subsection{Balancing Optimization vs. Runtime.}\label{sec:2-competitive}
Recall that in Sec.~\ref{sec:balance-opt} we introduced the adaptive strategy which stops optimization once the time spend on optimization exceeds the estimated cost of the best plan found so far. We claimed that this algorithm is 2-competitive, i.e.,  $T_{opt} + T_{best}$ is  less than 2 times the minimal achievable cost of an algorithm that knows the full sequence of plans that will be generated and their costs upfront and, thus, can choose the stopping point that minimizes $T_{opt} + T_{best}$.

\begin{Theorem}
  The adaptive strategy is 2-competitive.
\end{Theorem}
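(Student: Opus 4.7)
\medskip

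The plan is to cast this as a ski-rental style online analysis. Let $C(t)$ denote the estimated cost of the best plan found after spending $t$ units of time on optimization; by construction $C(t)$ is a nonincreasing function of $t$ (the best-so-far can only improve). Any stopping strategy corresponds to choosing some time $t \geq 0$ at which to halt optimization and execute the best plan found, paying total cost $t + C(t)$. The offline optimum therefore satisfies $\mathrm{OPT} = \inf_{t \geq 0} \bigl(t + C(t)\bigr)$.

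Next I would pin down what our adaptive algorithm actually pays. By definition the algorithm halts at the first time $\tau$ at which $T_{opt} = T_{best}$, i.e., $\tau = C(\tau)$; the remark in Section~\ref{sec:balance-opt} that we may stop mid-iteration is exactly what guarantees such a $\tau$ exists (otherwise $C$ could jump discontinuously past $T_{opt}$). Hence the total cost paid by the algorithm is $\mathrm{ALG} = \tau + C(\tau) = 2\tau$.

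The main step is the lower bound $\mathrm{OPT} \geq \tau$, which I would prove by a two-case argument on the offline stopping time $t$. If $t \leq \tau$, then monotonicity of $C$ gives $C(t) \geq C(\tau) = \tau$, so $t + C(t) \geq \tau$. If $t \geq \tau$, then trivially $t + C(t) \geq t \geq \tau$. Taking the infimum over $t$ yields $\mathrm{OPT} \geq \tau$, and combining with $\mathrm{ALG} = 2\tau$ gives $\mathrm{ALG} \leq 2\,\mathrm{OPT}$, which is the claimed $2$-competitiveness.

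The only subtle point — and really the reason the theorem needs the ``stop mid-iteration'' provision — is justifying that the crossing time $\tau$ with $T_{opt}(\tau) = T_{best}(\tau)$ is actually attained. I would argue this by noting that $T_{opt}$ increases continuously in wall-clock time while $T_{best}$ is a nonincreasing right-continuous step function that starts at $+\infty$; the two curves must therefore meet, and the algorithm interrupts the current iteration at precisely that meeting point. Everything else is the elementary monotonicity argument above, so I do not expect further obstacles.
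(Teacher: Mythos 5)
Your proof is correct and is essentially the paper's argument in continuous-time dress: the paper works per iteration, letting $our$ be the stopping iteration and $min$ the offline-optimal one, notes $T_{our} \leq 2\,T_{best_{our}} = 2\,T_{opt_{our}}$ at the crossing, and then splits into the cases $our < min$ (bounding $T_{min}$ below by the optimization time) and $our > min$ (bounding it below by the best-plan cost via monotonicity of $T_{best}$) — exactly your two cases $t \geq \tau$ and $t \leq \tau$. Your explicit discussion of why the crossing time $\tau$ is attained is a point the paper dispatches with a one-line remark about stopping mid-iteration, but it does not change the argument.
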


\begin{proof}
We use $T_{est_{i}}$ to denote the estimated cost of the plan generated in the $i_{th}$ iteration and use $T_{best_{i}} = \min_{1 \leq k \leq i}T_{est_{k}}$ to denote the estimated cost of the best plan we found after $i$ iterations.
We use $T_{opt_{i}}$ to denote the duration of the $i_{th}$ iteration define $T_{opt_{\leq i}} = \sum_{j=1}^{i} T_{opt_{j}}$.
If we stop the algorithm after $i$ iterations then the total time spend is  $T_{i} = T_{opt_{\leq i}} + T_{best_{i}} $. Our goal is to minimize $T_{i}$. Let $min = \argmin_{i \in \mathbb{N}} T_{i}$. 
Our algorithm stops at iteration $our = \argmin_{i \in \mathbb{N}} (T_{opt_{\leq i}} \leq T_{best_{i}})$ and let $T_{opt_{our}}$ denote the time it spend on optimization.  
It is easy to see that $T_{our} \leq 2 \cdot T_{best_{our}} = 2 \cdot T_{opt_{our}}$.
We now  prove that 
$T_{our} \leq 2 \cdot T_{min}$.
\underline{CASE 1}: $our < min$, i.e., our algorithm stops before $min$ iterations. 
Since $our < min$, we also have $T_{opt_{our}} < T_{opt_{\leq min}}$. Thus, $T_{our} \leq 2 \cdot T_{opt_{our}} \leq 2 \cdot T_{opt_{min}} \leq 2 \cdot T_{min}$.
\underline{CASE 2}: $our > min$.
Thus, $T_{best_{our}} \leq T_{best_{min}}$ and we get $T_{our} \leq 2 \cdot T_{best_{our}} \leq 2 \cdot T_{best_{min}} \leq 2 \cdot T_{min}$. 
\underline{CASE 3}: $our = min$. We have $T_{our} = T_{min} \leq 2 \cdot T_{min}$.
\end{proof}

 \section{Notation}\label{sec:supp-notation}

Fig.~\ref{fig:overview-notations} shows a glossary of notation used in this paper.

\begin{figure*}[t]
\centering
\begin{minipage}{1\linewidth}
\renewcommand{\arraystretch}{1.4}
    \centering
          \resizebox{1\linewidth}{!}{
            \begin{minipage}{1.0\linewidth}
              \centering
\begin{tabular}{ |c|p{12cm}|}
\hline
  \chead Notations & \chead Description \\ \hline
  $\aDom$ & Universal domain of constants\\ \hline
  $\schema{Q}$ &  The schema of the result of query Q \\   \hline
  $\query(I)$ & The result of evaluating query $Q$ over database instance $I$ \\ \hline
  $\supp(R)$ & The support of relation $R$, i.e., the set of tuples that appear in $R$ with non-zero multiplicity\\ \hline
  $\schema{R}/\schema{S}$ & Rename the attributes from $\schema{R}$ as $\schema{S}$ \\   \hline
  $\colsOf(\theta) $ & All attributes referenced in a condition $\theta$\\   \hline
  $\colsOf(e)$ & The attributes referenced in expression $e$\\   \hline
  $Q[Q_1 \leftarrow Q_2]$ & The result of substituting subexpression (subgraph) $Q_1$ with $Q_2$ in the algebra graph for query Q \\   \hline
  $Q = op(Q')$ & Operator $op$ is the root of the algebra graph for query $Q$ that $Q'$ is the subquery of $Q$ ``below'' $op$ \\   \hline
  $\Diamond $ & The operator for which we are inferring a property (for bottom- up inference) \\   
  & or a parent of this operator (for top-down inference)  \\ \hline
  $\circledast$ & The root of a query tree \\   \hline
  $\ecClosure $ & Takes a set S of ECs as input and merges ECs if they overlap \\   \hline
  $\minKey(K)$ & Removes keys that are supersets of other keys from a set of keys $K$ \\   \hline
  $\qH(Q)$ & The height of a query Q, e.g., $\qH(Q)=3$ for $Q = \projection_A(\selection_{\theta}(R))$  \\   \hline
  $\qD(op)$ & The depth of operators in a query, e.g., if $Q = \projection_A(\selection_{\theta}(R))$ then $\qD(\projection_A) = 0$, $\qD(\selection_{\theta}) = 1$ and $\qD(R) = 2$ \\   \hline

\end{tabular}
  \end{minipage}
  }

\end{minipage}

\caption{Glossary of notation used in the paper}
\label{fig:overview-notations}

\end{figure*}

\bibliographystyle{IEEEtran}
\bibliography{2018-TKDE-Optimizer-longversion}

 \end{document}